\newcommand{\blind}{1}
\newtheorem{prop}{Proposition}
\newtheorem{lemma}{Lemma}
\newcommand{\mynorm}[1]{{\left\vert\kern-0.25ex\left\vert\kern-0.25ex\left\vert #1      \right\vert\kern-0.25ex\right\vert\kern-0.25ex\right\vert}}  
\newcommand{\vecnorm}[1]{{\left\vert\kern-0.25ex\left\vert #1      \right\vert\kern-0.25ex\right\vert}}
\newcommand{\myscal}[2]{\langle \kern-0.25ex \langle #1, #2\rangle \kern-0.25ex \rangle}
\newcommand{\norm}[1]{{\left\vert\kern-0.25ex\left\vert #1      \right\vert\kern-0.25ex\right\vert_2}}   
\newcommand{\scal}[2]{{\langle #1, #2\rangle_2}}
\begin{document}

%%%%%%%%%%%%%%%%%%%%%%%%%%%%%%%%%%%%%%%%%%%%%%%%%%%%%%%%%%%%%%%%%%%%%%%%%%%%%%
\if1\blind
{
  \title{{\bf Multivariate Functional Principal Component Analysis for Data Observed on Different (Dimensional) Domains} }
  \author{Clara Happ  and
    Sonja Greven\thanks{  The authors acknowledge support from the German Research Foundation through Emmy Noether grant GR 3793/1-1.}\\
    Department of Statistics, LMU Munich\\[12pt]
    for the Alzheimer's Disease Neuroimaging Initiative.\thanks{Data used  in  preparation  of  this  article  were  obtained  from  the  Alzheimer’s  Disease Neuroimaging  Initiative  (ADNI)  database  (\protect\url{http://adni.loni.usc.edu}). As  such,  the  investigators  within the ADNI contributed to the design and implementation of ADNI and/or provided data but  did  not  participate  in  analysis  or  writing  of  this  paper.  A  complete  listing  of  ADNI  investigators can be found at: \protect\url{http://adni.loni.usc.edu/wp-content/uploads/how_to_apply/ADNI_Acknowledgement_List.pdf}. 
Data collection and sharing for the neuroimaging data in Section~5  was funded by the Alzheimer's Disease Neuroimaging Initiative (ADNI, National Institutes of Health Grant U01 AG024904) and DOD ADNI (Department of Defense award number W81XWH-12-2-0012). A detailed list of ADNI funding is available at \protect\url{http://adni.loni.usc.edu/about/funding/}. The grantee organization is the Northern California Institute for Research and Education, and the study is coordinated by the Alzheimer's Disease Cooperative Study at the University of California, San Diego. ADNI data are disseminated by the Laboratory for Neuro Imaging at the University of Southern California.
}
 }
  \maketitle
} \fi

%\bigskip
\begin{abstract}
Existing approaches for multivariate functional principal component analysis are restricted to data on the same one-dimensional interval. The presented approach focuses on multivariate functional data on different domains that may differ in dimension, e.g. functions and images. The theoretical basis for multivariate functional principal component analysis is given in terms of a Karhunen-Lo\`{e}ve Theorem. For the practically relevant case of a finite Karhunen-Lo\`{e}ve representation, a relationship between univariate and multivariate functional principal component analysis is established. This offers an estimation strategy to calculate multivariate functional principal components and scores based on their univariate counterparts. 
For the resulting estimators, asymptotic results are derived.
The approach can be extended to finite univariate expansions in general, not necessarily orthonormal bases. It is also applicable for sparse functional data or data with measurement error. 
A flexible \texttt{R} implementation is available on CRAN.
The new method is shown to be competitive to existing approaches for data observed on a common one-dimensional domain. The motivating application is a neuroimaging study, where the goal is to explore how longitudinal trajectories of a neuropsychological test score covary with FDG-PET brain scans at baseline.
Supplementary material, including detailed proofs, additional simulation results and software is available online.
\end{abstract}

\noindent%
{\it Keywords:}\\
Functional data analysis, Multivariate functional data, Dimension reduction, Image analysis
 % \vfill

\newpage

%%
%%
%%
%%%\doublespacing
%%
% Introduction
\section{Introduction}

Statistical methods for functional data have become increasingly important in recent years. Functional principal component analysis (FPCA) is one of the key techniques in functional data analysis, as it provides an easily interpretable exploratory analysis of the data. Further, it is an important building block for many statistical models \citep[see e.g.][]{RamsaySilverman:2005}. The technical progress in many fields of application allows the collection of more and more data with functional features, often several kinds per observation unit. This encourages the study of multivariate functional data and new methods are required to reveal e.g. joint variation in the different elements.

As a simple motivating example, consider the gait cycle data \citep{RamsaySilverman:2005} shown in Fig.~\ref{fig:motivationMFPCA}. It contains $39$ observations of hip and knee angle during a gait cycle on a standardized time interval. Both elements of this bivariate data can be described separately by their first three univariate eigenfunctions that explain $94.4 \%$ (hip) and $87.5 \%$ (knee) of the total variability in the data. The associated functional principal component scores, however, reveal that there is a non negligible correlation between almost all score pairs of the two elements. The separate FPCA thus captures joint variation between hip and knee angles only indirectly, which makes the interpretation of the FPCA results difficult. Correlated scores can also lead to multicollinearity issues in a subsequent regression analysis \citep[functional principal component regression, e.g.][]{MuellerStadtmueller:2005}.
Multivariate FPCA, by contrast, directly adresses potential covariation between the hip and knee elements. The first three bivariate principal components shown in Fig.~\ref{fig:motivationMFPCA}, which explain $85.3\%$ of the variability in the data, give insight into the main modes of joint variation in the overall gait movement. The corresponding scores do not only allow a more parsimonious representation of the data (one score value per bivariate principal component and per observation), but they are also uncorrelated by construction.
Finally, the multivariate functional principal components are more natural to represent multivariate functional data in the sense that they have the same structure as each observation.
The extension of FPCA to multivariate functional data is hence of high practical relevance.

\begin{figure}[t]
\centering
\begin{minipage}{0.6\textwidth}
\includegraphics[width = 0.32\textwidth]{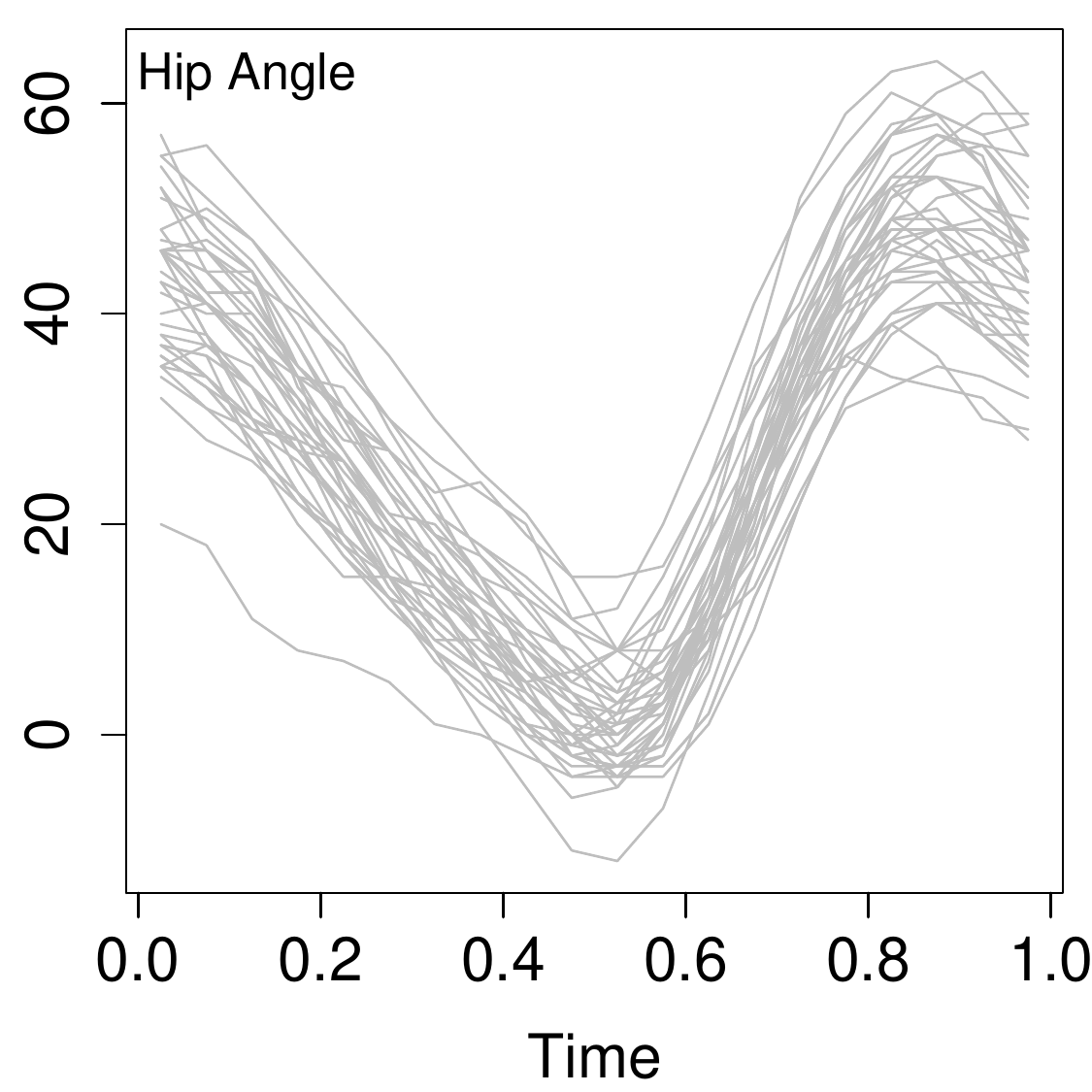} 
\hfill
\includegraphics[width = 0.32\textwidth]{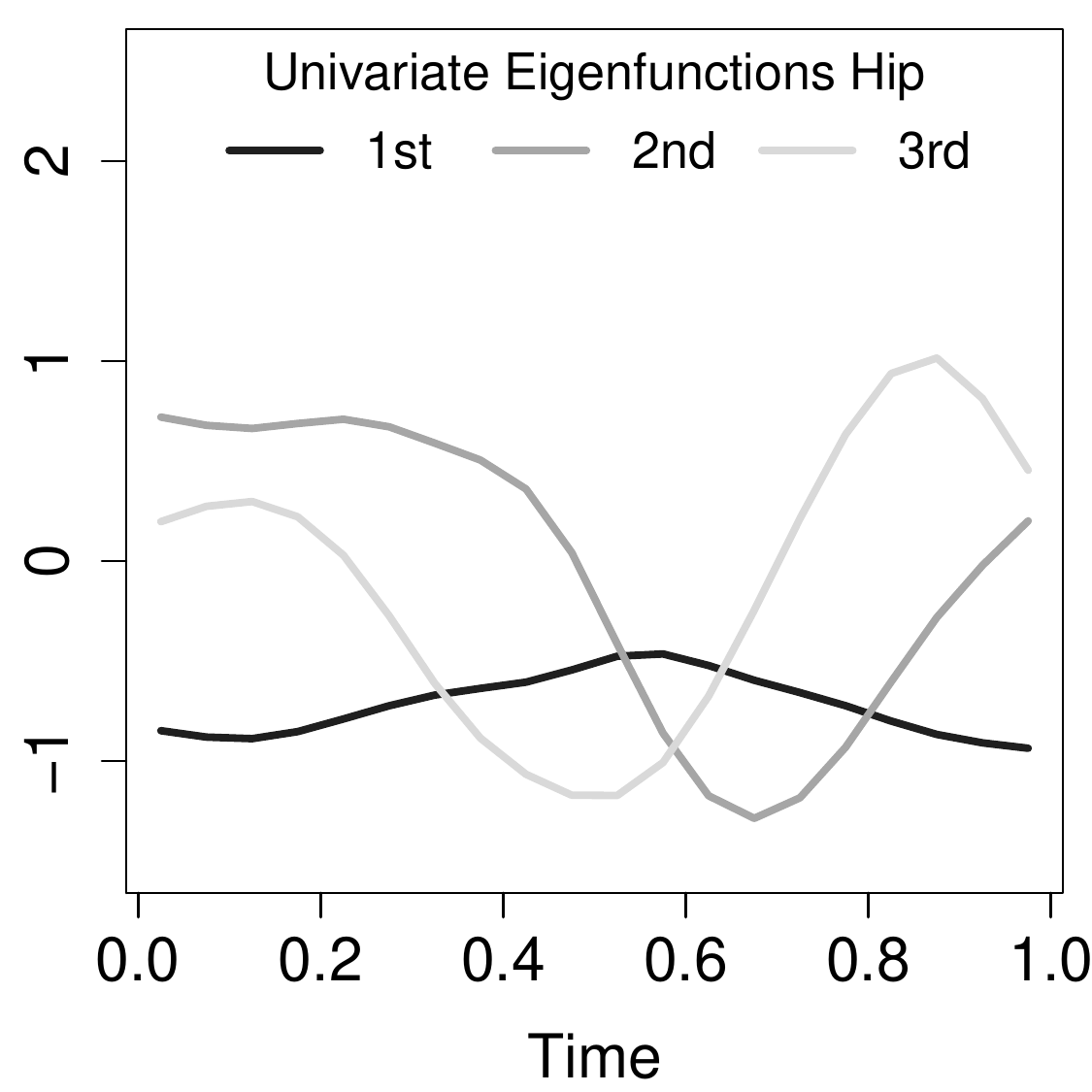} 
\hfill
\includegraphics[width = 0.32\textwidth]{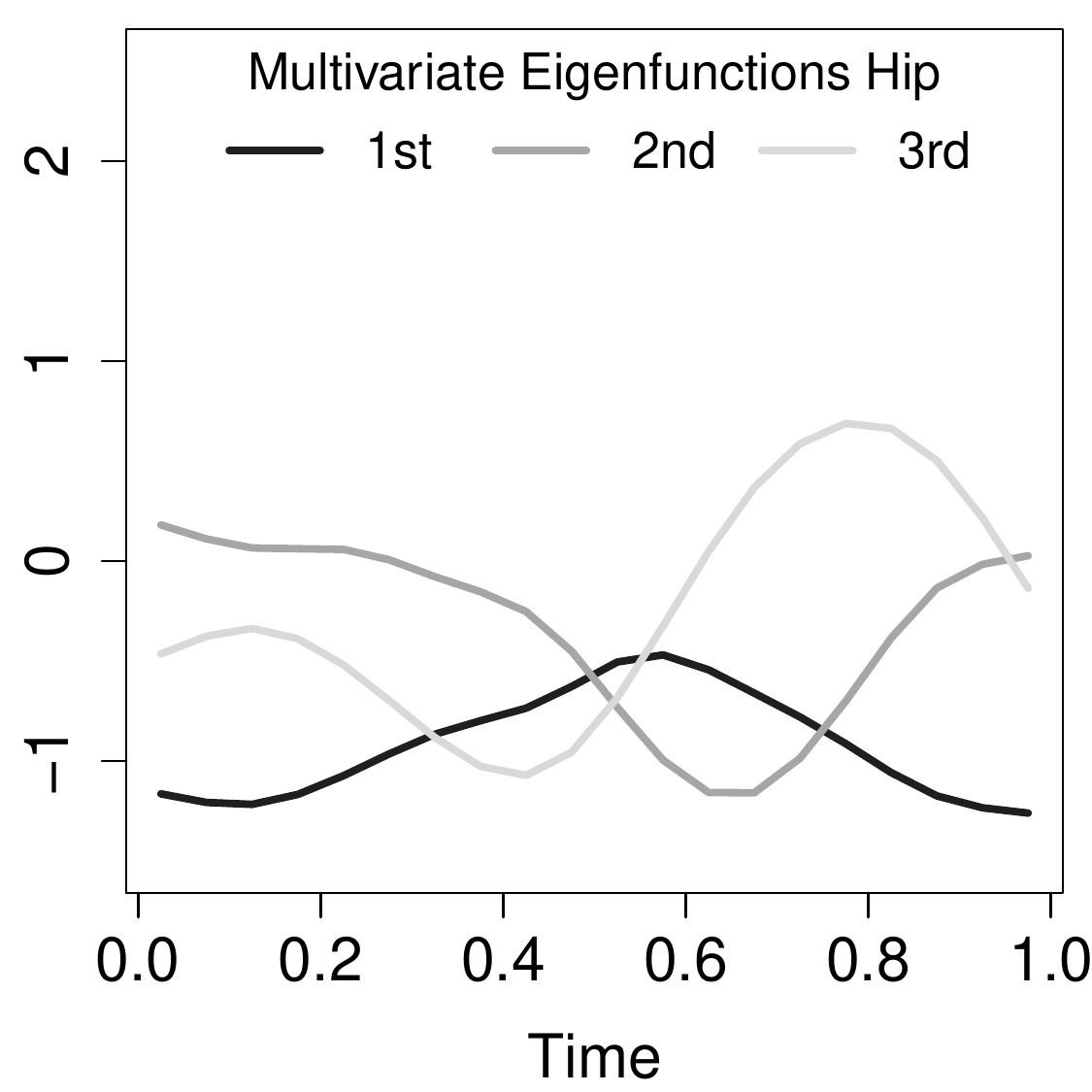}

\includegraphics[width = 0.32\textwidth]{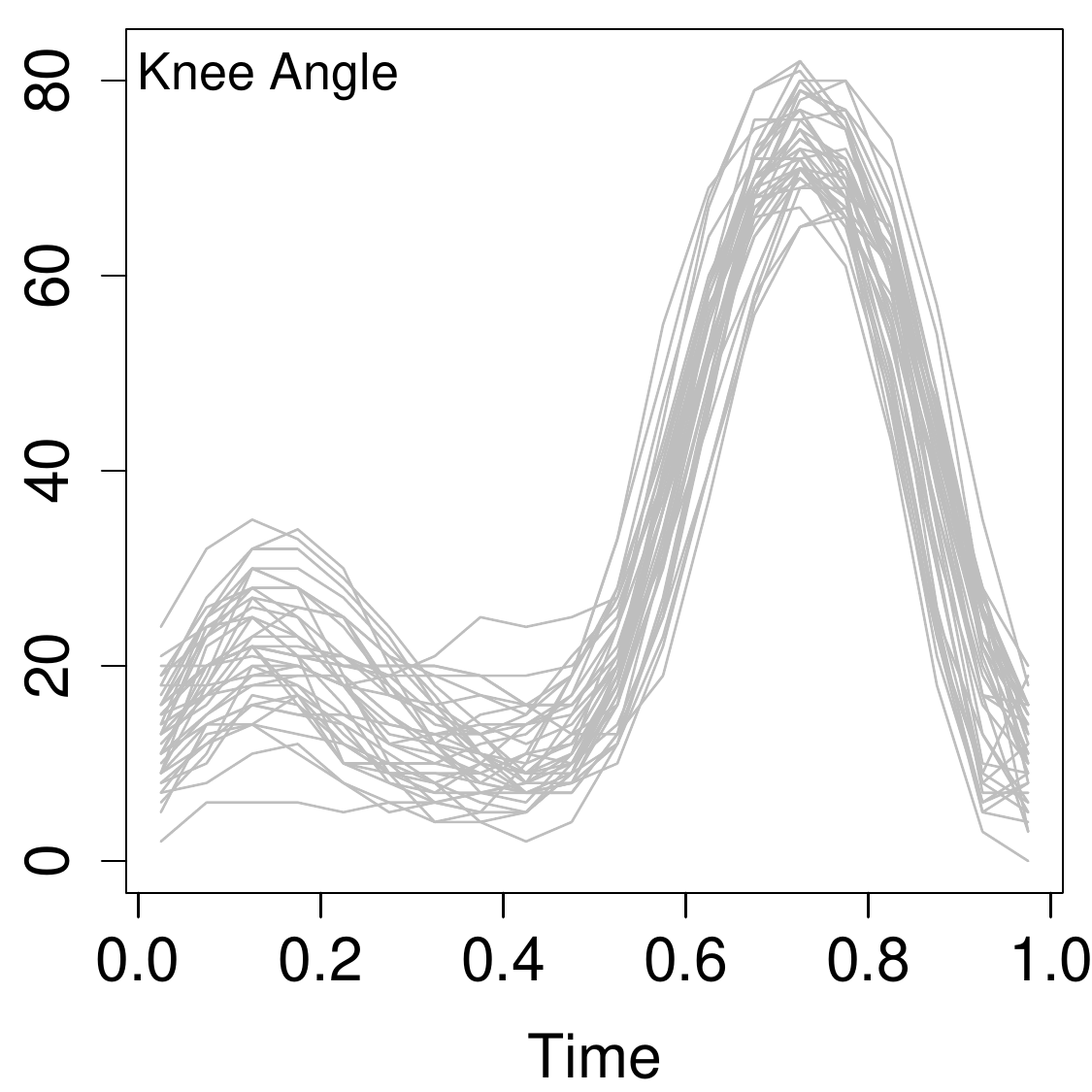} 
\hfill
\includegraphics[width = 0.32\textwidth]{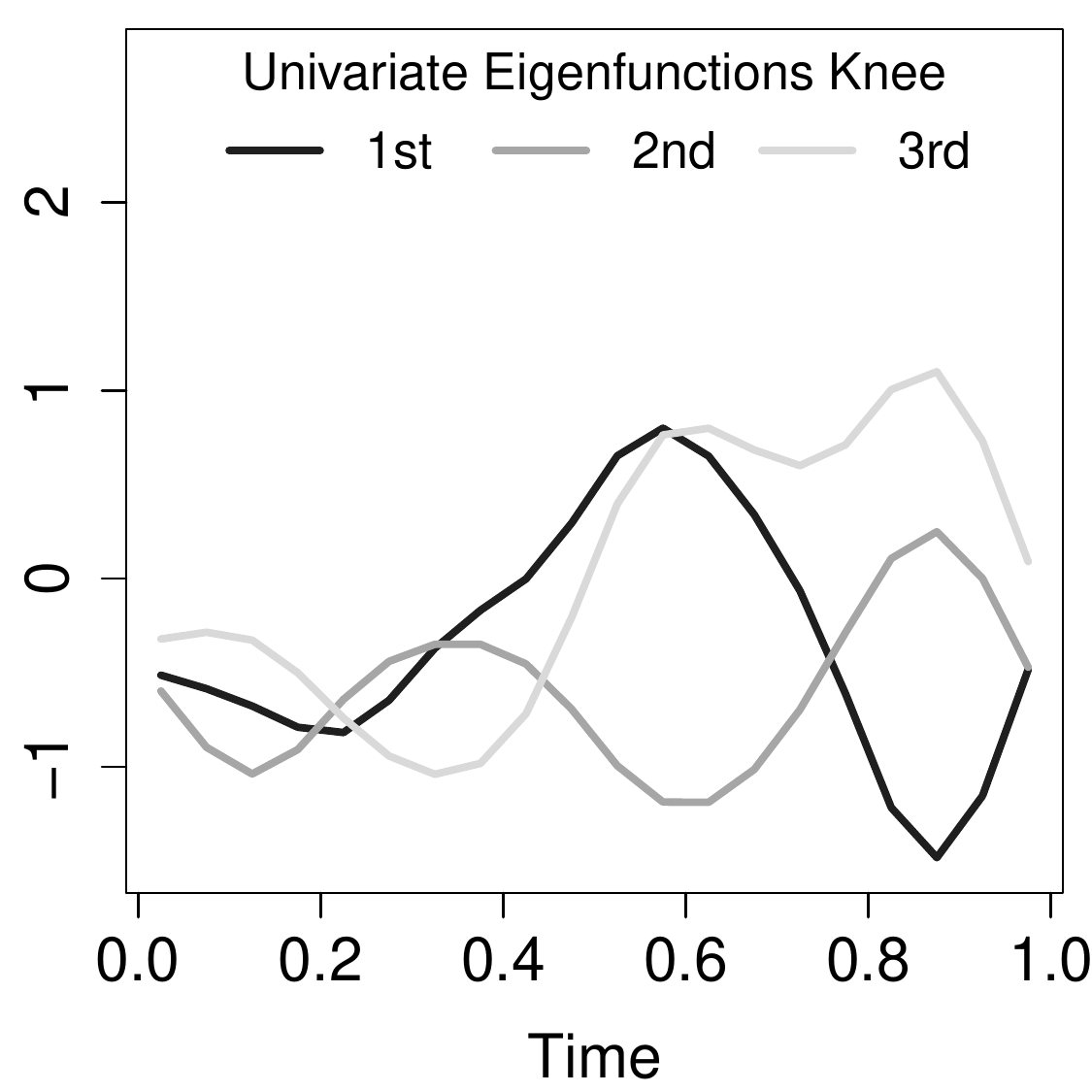}
\hfill
\includegraphics[width = 0.32\textwidth]{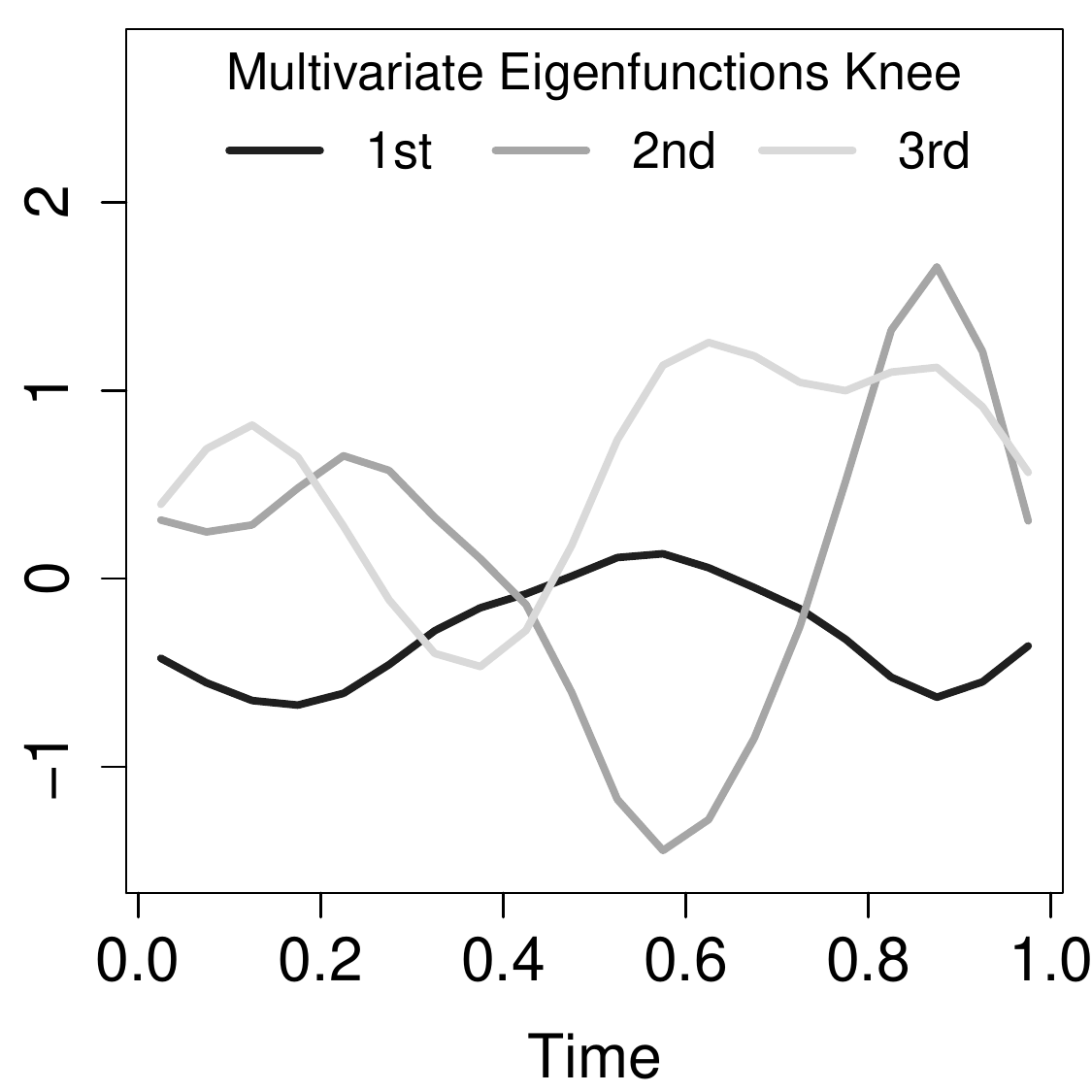}
\end{minipage}
\hfill
\begin{minipage}{0.35\textwidth}
\includegraphics[width = \textwidth]{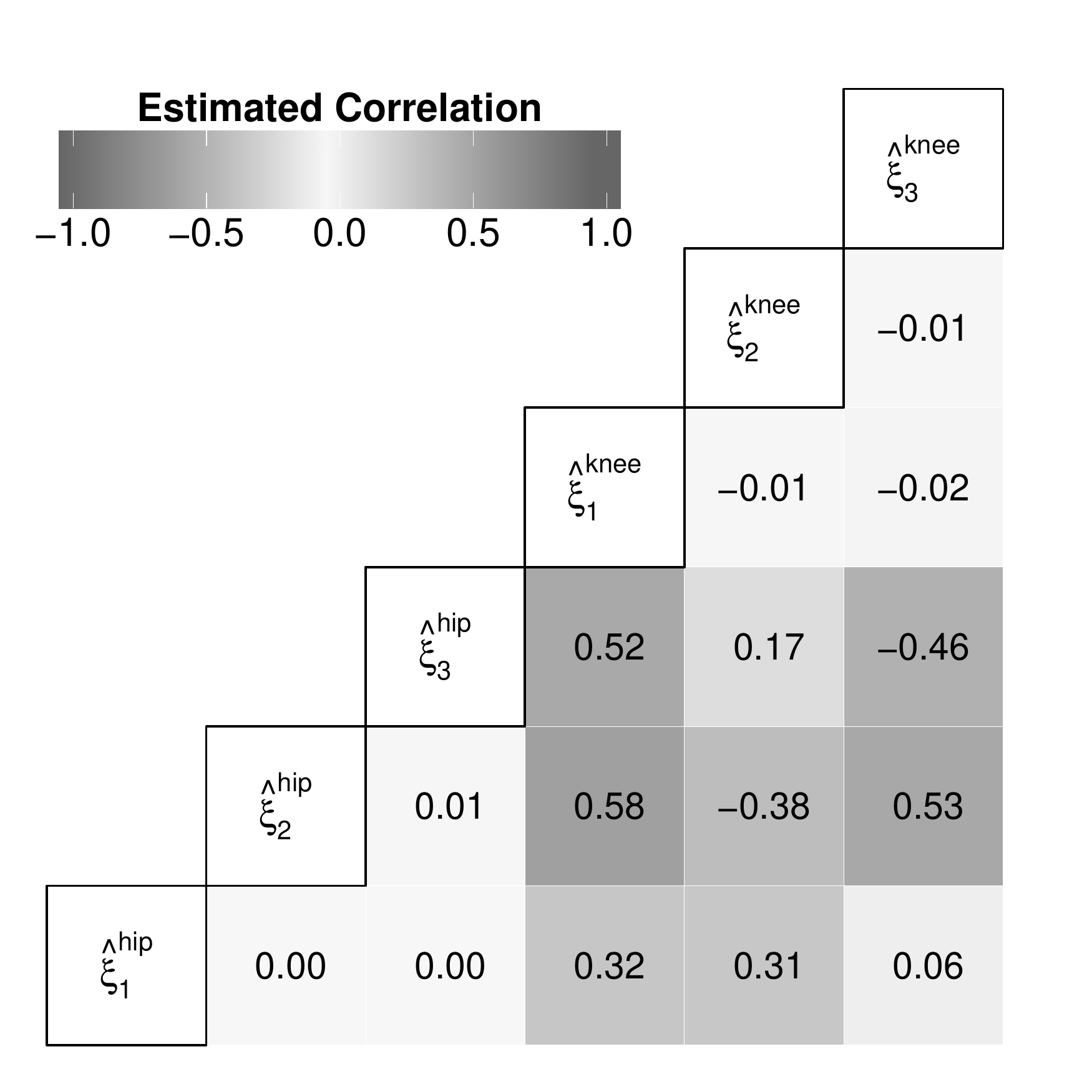} 
\end{minipage}

\caption{
\small Univariate and multivariate FPCA for the gait cycle data. 1st column: Original data. 2nd column: Results for univariate FPCA, calculated separately. The functions have been reflected, if necessary, and rescaled to have the same norm as the multivariate eigenfunctions  for comparison purposes. 3rd column: Results for multivariate FPCA, calculated with the new approach. 4th column: Empirical correlation  of the univariate FPCA scores for hip and knee.}
\label{fig:motivationMFPCA}
\end{figure}

Existing approaches for multivariate functional principal component analysis (MFPCA) are restricted to functions observed on the same finite, one-dimensional interval \citep{RamsaySilverman:2005, JacquesPreda:2014, ChiouEtAl:2014, Berrendero:2011}. Except for \cite{Berrendero:2011}, they all aim at a multivariate functional Karhunen-Lo\`{e}ve representation of the data.  For data measured e.g. in different units, \cite{JacquesPreda:2014} and \cite{ChiouEtAl:2014} also discuss normalized versions of MFPCA based on a normalized covariance operator.

The key motivation for this paper is that in practical applications,  multivariate functional data are neither  restricted to lie on the same interval nor to have one-dimensional domains, e.g. data that consists  of  functions and images, as in our neuroimaging application.
We start by extending the notion of multivariate functional data to the case of different (dimensional)  domains for the different elements. Next, the theoretical foundations of MFPCA are provided in terms of a Karhunen-Lo\`{e}ve  Theorem. 
For the practically relevant case of a finite or truncated Karhunen-Lo\`{e}ve representation, we establish a direct theoretical relationship between univariate and multivariate FPCA. This suggests a simple estimation strategy for multivariate functional principal components and scores based on their univariate counterparts.
For data on higher dimensional domains (tensor data, e.g. images), principal component methods have originally been developed in the context of psychometrics \citep[e.g.][]{Tucker:1966, CarrollChang:1970} and have become particularly important in the machine learning literature \citep{CoppiBolasco:1989, LuEtAl:2013}. Recent approaches for functional or smooth principal component analysis for tensor data have been proposed e.g.~in \cite{Allen:2013}. All these methods can be used as univariate building blocks for MFPCA.
The resulting estimators for MFPCA are shown to be consistent under a given set of assumptions.
In contrast to most of the existing methods for MFPCA, our new approach can be applied to sparse functional data and data with measurement error. It can be generalized to data available in arbitrary basis expansions and hence includes the MFPCA procedure proposed by \citet{JacquesPreda:2014} as a special case. 
The new method further allows to incorporate weights for the elements, if they differ in domain, range or variation.

The paper is organized as follows. Section~\ref{sec:Theory} introduces multivariate functional data and gives the theoretical basis for MFPCA. In Section~\ref{sec:MFPCA} we derive the estimation algorithm for MFPCA based on univariate basis expansions and investigate asymptotic properties of the resulting estimators. The performance of the new method is evaluated in Section~\ref{sec:Simulation} in a simulation with different levels of complexity.
 Section~\ref{sec:applicateADNI} contains the analysis of the motivating neuroimaging dataset.
The paper concludes with a discussion and an outlook in Section~\ref{sec:Discussion}.
 Supplementary material, containing detailed proofs of all  propositions, more simulation results and \texttt{R} code  is available online.

\newpage
% Theory
\section{Theoretical Foundations of Multivariate Functional Data}
\label{sec:Theory}

\subsection{Data Structure and Notation}

This paper is concerned with \textit{multivariate functional data}, i.e. 
each observation consists of $p \geq 2$ functions $X^{(1)} , \ldots ,  X^{(p)}$. They may be defined on different domains $\mathcal{T}_1 , \ldots ,  \mathcal{T}_p$ with possibly different dimensions. Technically, $\mathcal{T}_j$ must be compact sets in $\mathbb R ^{d_j},~  d_j \in \mathbb N $ with finite (Lebesgue-) measure and each element $X^{(j)} \colon \mathcal{T}_j \to \mathbb R $ is assumed to be in $L^2(\mathcal{T}_j)$.

In analogy to other approaches for multivariate functional data, the different functions are combined in a vector $X$ with 
\[X(\boldsymbol{t}) = \left(X^{(1)}(t_1), \ldots , X^{(p)}(t_p) \right) \in \mathbb{R}^p.\]
Note that $\boldsymbol{t} := (t_1, \ldots, t_p) \in \mathcal{T} := \mathcal{T}_1 \times \cdots  \times \mathcal{T}_p$ is a $p$-tuple of $d_1, \ldots, d_p$-dimensional vectors and not a scalar. This is a main difference to earlier approaches, as it allows each element $X^{(j)}$ to have a different argument $t_j$, even in the case of a common one-dimensional domain. In the following, it will be further assumed that 
\[ \mu(\boldsymbol{t}) := \mathbb{E} \left(X(\boldsymbol{t}) \right) =
\left( 
\mathbb{E} \left(X^{(1)}(t_1) \right)
, \ldots , 
\mathbb{E} \left( X^{(p)}(t_p) \right) 
\right) = \boldsymbol{0} \quad \forall~\boldsymbol{t} \in \mathcal{T}.
\]
For $\boldsymbol{s},\boldsymbol{t} \in \mathcal{T}$, define the matrix of covariances $C(\boldsymbol{s},\boldsymbol{t}) := \mathbb{E} \left(X(\boldsymbol{s})\otimes X(\boldsymbol{t})\right)$ with elements
\begin{equation} \label{eq:CovSymm}
C_{ij} (s_i, t_j) :=  \mathbb{E} \left(  X^{(i)}(s_i)X^{(j)}(t_j) \right) = \operatorname{Cov}(X^{(i)}(s_i), X^{(j)}(t_j) ),\quad s_i \in \mathcal{T}_i,~t_j \in \mathcal{T}_j .
\end{equation}

As noted in \citet[Chapter 8.5.]{RamsaySilverman:2005}, a suitable inner product is the basis of all approaches for principal component analysis. For functions $f = (f^{(1)} , \ldots ,  f^{(p)}) $ with elements $f^{(j)} \in L^2(\mathcal{T}_j)$  define the space
$
\mathcal{H} := L^ 2(\mathcal{T}_1) \times \ldots \times L^ 2(\mathcal{T}_p)
$ and 
\begin{equation}
\myscal{f}{g} := \sum \nolimits_{j=1}^p \scal{f^{(j)}}{g^{(j)}} = \sum \nolimits_{j=1}^p  \int_{\mathcal{T}_j} f^{(j)}(t_j) g^{(j)}(t_j) \mathrm{d}  t_j,\qquad f,g \in \mathcal{H}.
\label{eq:scalarProduct}
\end{equation}

\begin{prop} \label{prop:HisHilbert}
$\mathcal{H}$ is a Hilbert space with respect to the scalar product $\myscal{\cdot}{\cdot}$. 
\end{prop}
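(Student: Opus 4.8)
The plan is to exploit that $\mathcal{H}$ is a \emph{finite} product of spaces each of which is already known to be a Hilbert space, so that all the required structure is inherited coordinatewise. Concretely, I would first invoke the standard fact that for each $j$ the space $L^2(\mathcal{T}_j)$ is a Hilbert space with inner product $\scal{\cdot}{\cdot}$; completeness here is the Riesz--Fischer theorem, which applies because each $\mathcal{T}_j$ is a compact subset of $\mathbb{R}^{d_j}$ of finite Lebesgue measure and hence a well-defined measure space. With this in hand the proof splits into two parts: showing that $\myscal{\cdot}{\cdot}$ is an inner product, and showing that $\mathcal{H}$ is complete in the induced norm $\mynorm{f} := \myscal{f}{f}^{1/2}$.

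For the first part, I would verify the inner-product axioms by reducing each of them to the corresponding property of the $p$ component inner products. Symmetry and bilinearity follow immediately from the definition $\myscal{f}{g} = \sum_{j=1}^p \scal{f^{(j)}}{g^{(j)}}$, since a finite sum of symmetric bilinear forms is again symmetric and bilinear; finiteness of each summand is guaranteed by the Cauchy--Schwarz inequality in $L^2(\mathcal{T}_j)$. Positive-definiteness is the only axiom that deserves a word: $\myscal{f}{f} = \sum_{j=1}^p \norm{f^{(j)}}^2 \geq 0$, and because every summand is non-negative the sum vanishes if and only if $\norm{f^{(j)}} = 0$ for all $j$, i.e. $f^{(j)} = 0$ in $L^2(\mathcal{T}_j)$ for each $j$, which is exactly $f = \boldsymbol{0}$ in $\mathcal{H}$.

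For completeness, I would take an arbitrary Cauchy sequence $(f_n)_{n \in \mathbb{N}}$ in $\mathcal{H}$ and pass to its coordinates. From the elementary bound $\norm{f_n^{(j)} - f_m^{(j)}} \leq \mynorm{f_n - f_m}$, valid for each fixed $j$, every coordinate sequence $(f_n^{(j)})_n$ is Cauchy in $L^2(\mathcal{T}_j)$ and hence converges to some $f^{(j)} \in L^2(\mathcal{T}_j)$ by completeness of that space. Setting $f := (f^{(1)}, \ldots, f^{(p)}) \in \mathcal{H}$, I would conclude by writing $\mynorm{f_n - f}^2 = \sum_{j=1}^p \norm{f_n^{(j)} - f^{(j)}}^2$ and letting $n \to \infty$; since this is a sum of finitely many terms each tending to zero, the whole expression tends to zero, so $f_n \to f$ in $\mathcal{H}$.

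I do not expect a genuine obstacle here, as the statement is a routine inheritance of Hilbert-space structure through a finite product; the one place where care is needed is the completeness step, and there the decisive fact is precisely that $p$ is finite. Finiteness lets me interchange the (finite) sum with the limit without any summability or uniformity argument — the same assertion for an infinite family would fail for the naive product and would instead require restricting to an $\ell^2$-type direct sum. It is therefore worth flagging explicitly in the write-up that the argument relies on $p < \infty$, which holds throughout the paper.
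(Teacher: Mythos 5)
Your proposal is correct and follows essentially the same route as the paper: the paper simply observes that $\mathcal{H}$ is the (finite) direct sum of the Hilbert spaces $L^2(\mathcal{T}_j)$ with the natural scalar product and cites a standard reference, whereas you write out the standard proof of that fact (inner-product axioms coordinatewise, completeness via coordinatewise Cauchy sequences). Your explicit remark that finiteness of $p$ is what makes the naive product work is a correct and worthwhile observation, but the underlying argument is the same.
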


Proofs for all propositions are  given in the online appendix. The norm induced by $\myscal{\cdot}{\cdot}$ is denoted by $\mynorm{\cdot}$\footnote{The $L^2$-norm induced by $\scal{\cdot}{\cdot}$ on each $L^2(\mathcal{T}_j)$ is denoted by $\norm{\cdot}$. Further, $\vecnorm{\cdot}$ is the Euclidean norm for vectors and $\vecnorm{\cdot}_{\mathcal{T}}$ denotes a norm on $\mathcal{T}$ with $\vecnorm{\boldsymbol{t}}_{\mathcal{T}}^2 = \sum \nolimits _{j=1}^p \vecnorm{t_j}^2$ for $t_j \in \mathcal{T}_j \subset \mathbb{R}^{d_j},~ j = 1 , \ldots, p$.}.  Next, define the covariance operator $\Gamma \colon \mathcal{H}  \to   \mathcal{H}$
with the $j$-th element of $\Gamma f,~f \in \mathcal{H}$ given by
\begin{equation}
(\Gamma f)^{(j)}(t_j)
:= \sum\nolimits_{i=1}^p \int_{\mathcal{T}_i} C_{ij}(s_i,t_j) f^{(i)}(s_i) \mathrm{d}  s_i
=  \myscal{C_{\cdot j}(\cdot, t_j)}{f},\quad t_j \in \mathcal{T}_j.
\label{eq:GammaFormula}
\end{equation}
The setting can be generalized to a weighted scalar product on $\mathcal{H}$, i.e.
\begin{equation}
\myscal{f}{g}_w := \sum\nolimits_{j = 1}^p w_j \scal{f^{(j)}}{g^{(j)}},\qquad f,g \in \mathcal{H}
\label{eq:weightedSP}
\end{equation}
for some positive weights $w_1 , \ldots ,  w_p$, cf. \citet[Chapter 10.3. in the context of  hybrid data]{RamsaySilverman:2005}  or \citet{ChiouEtAl:2014}. The associated weighted covariance operator $\Gamma_w$ is given by its elements $(\Gamma_w f)^{(j)}$ with $f \in \mathcal{H}$  and
\[(\Gamma_w f )^{(j)}(t_j) = \myscal{C_{\cdot, j}( \cdot, t_j)}{f}_w , \quad t_j \in \mathcal{T}_j.
\]
The use of weights may be necessary if the elements have quite different domains or ranges or if they exhibit different amounts of variation,  in order to obtain multivariate functional principal components that have a meaningful interpretation \citep{ChiouEtAl:2014}. 
A weighted scalar product corresponds to a (global) rescaling of the elements by $w_j^{1/2}$. An alternative approach would be pointwise rescaling, e.g. by the inverse of the square root of the pointwise variance $C_{jj}(t_j, t_j)$. This can be seen as normalizing the covariance operator \citep{ChiouEtAl:2014, JacquesPreda:2014}. However, this second approach does not consider the size of the different domains $\mathcal{T}_j$ and would give equal variation per observation point $t_j$ rather than per element $j$. Moreover, rescaling with the pointwise variance would downweight areas in $\mathcal{T}_j$ with stronger variation, hence areas that might contribute relevant information to the functional principal components. Therefore, only global rescaling by means of a weighted scalar product is considered in the following. The weights have to be chosen prior to the analysis. They can be specified based on expert knowledge or estimated from the data, e.g. based on the variation in each element \citep[see references in][]{ChiouEtAl:2014}. A sensible choice will always depend on the specific application and the question of interest. One possible solution that is analogous to standardization in multivariate PCA is proposed in the application in Section \ref{sec:applicateADNI}. For the sake of better readability, all following theoretical results are derived for $w_1 = \ldots = w_p = 1$, but remain valid in the more general case of different weights. For the estimation algorithm discussed in Section~\ref{sec:estMFPCA}, MFPCA based on the weighted scalar product is addressed again.

\subsection{A Karhunen-Lo\`{e}ve Theorem for Multivariate Functional Data}

In the following it is shown that under mild conditions, $\Gamma$ has the same properties as the covariance operator in the univariate case and therefore a Karhunen-Lo\`{e}ve representation for multivariate functional data exists. The main difference to existing approaches for data with elements observed on the same (one-dimensional) domain is that in this special case, $\Gamma$ is an integral operator  with positive definite kernel $C(s,t)$. This directly gives all of the desired properties \citep{Saporta:1981}. In the more general case of elements observed on different domains, this is not obviously the case and the properties are shown explicitly.

\begin{prop} \label{prop:GammaProperties} \label{prop:GammCompact}
The covariance operator $\Gamma$ defined in \eqref{eq:GammaFormula} is a linear,  self-adjoint and positive operator.
If further  for all $i,j = 1 , \ldots ,  p$ there exist  $K_{ij} < \infty$ with
\begin{equation}
 \norm{C_{ij}(\cdot, t_j)}^2 = \int_{\mathcal{T}_i} C_{ij}(s_i,t_j)^ 2 \mathrm{d}  s_i  < K_{ij} \quad \forall~  t_j \in \mathcal{T}_j,
 \label{eq:Kijconstants}
 \end{equation}
%and $C_{ij}(s_i, \cdot),~s_i \in \mathcal{T}_i$, is continuous,
and $C_{ij}$ is uniformly continuous in the sense that
\[\forall~ \varepsilon > 0~ \exists~ \delta_{ij} > 0: \quad \vecnorm{t_j - t_j^\ast} < \delta_{ij} \quad \Rightarrow \quad \left| C_{ij}(s_i, t_j) - C_{ij}(s_i, t_j^\ast) \right| < \varepsilon \quad \forall ~  s_i \in \mathcal{T}_i,\]
then $\Gamma$ is a compact operator.
\end{prop}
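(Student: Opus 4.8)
The plan is to dispatch the three algebraic properties first, since only the final compactness claim uses the two additional hypotheses. Linearity is immediate from \eqref{eq:GammaFormula}: for fixed $j$ the map $f \mapsto (\Gamma f)^{(j)}$ is a finite sum of integrals against the kernels $C_{ij}$, and both integration and summation are linear. For self-adjointness I would expand $\myscal{\Gamma f}{g}$ by combining \eqref{eq:scalarProduct} and \eqref{eq:GammaFormula}, interchange the finite sums with the integrals via Fubini, and then use the symmetry $C_{ij}(s_i,t_j) = C_{ji}(t_j,s_i)$, which is read off directly from \eqref{eq:CovSymm}, to recognise the result as $\myscal{f}{\Gamma g}$. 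For positivity I would write
\[
\myscal{\Gamma f}{f} = \sum_{i,j=1}^p \int_{\mathcal{T}_i}\!\int_{\mathcal{T}_j} C_{ij}(s_i,t_j)\, f^{(i)}(s_i)\, f^{(j)}(t_j)\, \mathrm{d}t_j\, \mathrm{d}s_i = \mathbb{E}\!\left[\myscal{X}{f}^2\right] \ge 0,
\]
using $C_{ij}(s_i,t_j) = \mathbb{E}(X^{(i)}(s_i) X^{(j)}(t_j))$ and pulling the expectation outside the integrals; the interchange is justified by Fubini together with the finite second moments that make the covariances in \eqref{eq:CovSymm} well defined.

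For compactness I would argue via the Arzel\`{a}--Ascoli theorem, which is precisely the route that uses both extra hypotheses. Let $(f_n) \subset \mathcal{H}$ be bounded, say $\mynorm{f_n} \le M$, and set $g_n := \Gamma f_n$. Writing $g_n^{(j)}(t_j) = \myscal{C_{\cdot j}(\cdot,t_j)}{f_n}$ as in \eqref{eq:GammaFormula} and applying Cauchy--Schwarz in $\mathcal{H}$ gives
\[
\bigl| g_n^{(j)}(t_j) \bigr| \le \mynorm{C_{\cdot j}(\cdot,t_j)}\, \mynorm{f_n} \le \Bigl( \sum_{i=1}^p K_{ij} \Bigr)^{1/2} M ,
\]
where $\mynorm{C_{\cdot j}(\cdot,t_j)}^2 = \sum_{i=1}^p \norm{C_{ij}(\cdot,t_j)}^2$ is controlled by \eqref{eq:Kijconstants}; hence the family $\{ g_n^{(j)} \}_n$ is uniformly bounded on $\mathcal{T}_j$. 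The same bound gives
\[
\bigl| g_n^{(j)}(t_j) - g_n^{(j)}(t_j^\ast) \bigr| \le \mynorm{C_{\cdot j}(\cdot,t_j) - C_{\cdot j}(\cdot,t_j^\ast)}\, M ,
\]
and since $\mynorm{C_{\cdot j}(\cdot,t_j) - C_{\cdot j}(\cdot,t_j^\ast)}^2 = \sum_{i=1}^p \int_{\mathcal{T}_i} \bigl| C_{ij}(s_i,t_j) - C_{ij}(s_i,t_j^\ast) \bigr|^2 \mathrm{d}s_i$, the uniform continuity of each $C_{ij}$ and the finite measure of $\mathcal{T}_i$ make the right-hand side arbitrarily small, uniformly in $n$, once $\vecnorm{t_j - t_j^\ast}$ is small. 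Thus $\{ g_n^{(j)} \}_n$ is equicontinuous.

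With uniform boundedness and equicontinuity on each compact $\mathcal{T}_j$, Arzel\`{a}--Ascoli yields a subsequence along which $g_n^{(j)}$ converges uniformly; extracting a single common subsequence over the finitely many indices $j = 1,\ldots,p$ produces a limit $g = (g^{(1)},\ldots,g^{(p)})$ with $g_n^{(j)} \to g^{(j)}$ uniformly on each $\mathcal{T}_j$. Because every $\mathcal{T}_j$ has finite Lebesgue measure, uniform convergence implies convergence in $L^2(\mathcal{T}_j)$, and therefore $\mynorm{g_n - g} \to 0$ in $\mathcal{H}$. Hence $\Gamma$ sends the arbitrary bounded sequence $(f_n)$ to one with a convergent subsequence, which is the definition of compactness. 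The main obstacle I anticipate is the equicontinuity step: one must transfer the uniform continuity of the scalar kernels $C_{ij}$ into equicontinuity of the vector-valued image family uniformly in $n$, and then organise the finitely many extractions into one subsequence working simultaneously for all $p$ elements. I note in passing that an alternative is available, since \eqref{eq:Kijconstants} with the finite measures already forces each $C_{ij} \in L^2(\mathcal{T}_i \times \mathcal{T}_j)$, so $\Gamma$ is Hilbert--Schmidt and hence compact even without the continuity assumption; I would nonetheless prefer the Arzel\`{a}--Ascoli argument here because it uses the hypotheses exactly as stated.
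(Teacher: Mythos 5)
Your proposal is correct and follows essentially the same route as the paper: linearity, self-adjointness via the symmetry of $C_{ij}$, positivity by recognising $\myscal{\Gamma f}{f}$ as a second moment, and compactness via uniform boundedness plus equicontinuity and Arzel\`{a}--Ascoli (the paper bounds element-wise with H\"older where you use Cauchy--Schwarz in $\mathcal{H}$, a cosmetic difference). Your parenthetical observation that \eqref{eq:Kijconstants} alone already makes $\Gamma$ Hilbert--Schmidt, rendering the continuity hypothesis unnecessary for compactness, is a valid and worthwhile remark, though the continuity is still used elsewhere in the paper (e.g.\ for Mercer's Theorem).
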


In the remainder of this paper, it is assumed that the $C_{ij}$ satisfy all conditions of Prop.~\ref{prop:GammaProperties} and hence $\Gamma$ can always be assumed to be a compact positive operator on $\mathcal{H}$. By the Hilbert-Schmidt Theorem \citep[e.g.][Thm. VI.16]{ReedSimon:2005} it follows that there exists a complete orthonormal basis of eigenfunctions $\psi_m \in \mathcal{H}, ~ m \in \mathbb N$ of $\Gamma$ such that
\[\Gamma \psi_m = \nu_m \psi_m \quad \text{and} \quad \nu_m \to 0 \quad  \text{for} ~ m \to \infty.\]
In particular, since $\Gamma$ is a positive operator, it may be assumed  w.l.o.g.  that $\nu_1 \geq \nu_2 \geq \ldots \geq 0$. Since $\psi_m, ~m \in \mathbb N $ is an orthonormal basis of $\mathcal{H}$ and $\Gamma$ is self-adjoint, by the Spectral Theorem \citep[e.g.][Thm. VI.3.2.]{Werner:2011} it holds that
\[ \Gamma f = \sum \nolimits_{m=1}^\infty \nu_m \myscal{f}{\psi_m} \psi_m \quad \forall~ f \in \mathcal{H}.\]
The following proposition is a multivariate version of Mercer's Theorem \citep{Mercer:1909}. It plays a key role in the proof of the Karhunen-Lo\`{e}ve Theorem (Prop.~\ref{prop:KarhunenMultivariate}).

\begin{prop}[Mercer's Theorem] \label{prop:Mercer}
For $j = 1, \ldots, p$ and $s_j, t_j \in \mathcal{T}_j$ it holds that
\[ \operatorname{Cov}\left(X^{(j)}(s_j), X^{(j)}(t_j)\right) = C_{jj}(s_j, t_j) = \sum \nolimits_{m=1}^\infty \nu_m \psi_m^{(j)}(s_j) \psi_m^{(j)}(t_j),\]
where the convergence is absolute and uniform.
\end{prop}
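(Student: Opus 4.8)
The plan is to reduce the statement to the classical scalar Mercer theorem by viewing $\Gamma$ as an ordinary integral operator on a single $L^2$-space. Assume without loss of generality that the domains $\mathcal{T}_1, \ldots, \mathcal{T}_p$ are pairwise disjoint (otherwise relabel by tagging each point with its element index), and set $\mathcal{T}^\sqcup := \mathcal{T}_1 \sqcup \cdots \sqcup \mathcal{T}_p$, equipped with the sum of the Lebesgue measures. Since each $\mathcal{T}_j$ is compact with finite measure and $p$ is finite, $\mathcal{T}^\sqcup$ is compact with finite measure, and the map sending $f = (f^{(1)}, \ldots, f^{(p)}) \in \mathcal{H}$ to the function on $\mathcal{T}^\sqcup$ whose restriction to $\mathcal{T}_j$ is $f^{(j)}$ is an isometric isomorphism $\mathcal{H} \cong L^2(\mathcal{T}^\sqcup)$. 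Under this identification, formula \eqref{eq:GammaFormula} shows that $\Gamma$ is exactly the integral operator with kernel $\kappa(s,t) := C_{ij}(s,t)$ for $s \in \mathcal{T}_i$, $t \in \mathcal{T}_j$, and the eigenpairs $(\nu_m, \psi_m)$ of $\Gamma$ become the eigenpairs of this integral operator.

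First I would check that $\kappa$ satisfies the hypotheses of the scalar Mercer theorem. Symmetry is immediate from $C_{ij}(s_i,t_j) = \mathbb{E}(X^{(i)}(s_i) X^{(j)}(t_j)) = C_{ji}(t_j,s_i)$, and positivity of the associated operator is already part of Prop.~\ref{prop:GammaProperties}. The one genuine point to verify is joint continuity of $\kappa$ on $\mathcal{T}^\sqcup \times \mathcal{T}^\sqcup$, equivalently joint continuity of each block $C_{ij}$. The assumption in Prop.~\ref{prop:GammaProperties} gives continuity in the second argument that is uniform over the first; applying this to $C_{ji}$ and using symmetry yields continuity in the first argument uniform over the second. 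Splitting $|C_{ij}(s_i,t_j) - C_{ij}(s_i^\ast,t_j^\ast)|$ through the intermediate value $C_{ij}(s_i,t_j^\ast)$ and bounding the two differences by the two one-sided uniform continuity estimates then gives joint (indeed uniform) continuity. This is the step I expect to be the main obstacle, as it is the only place where the precise form of the continuity assumption is used.

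With $\kappa$ continuous, symmetric and nonnegative-definite on the compact set $\mathcal{T}^\sqcup$, the scalar Mercer theorem applies and yields $\kappa(s,t) = \sum_{m=1}^\infty \nu_m \psi_m(s) \psi_m(t)$ with absolute and uniform convergence on $\mathcal{T}^\sqcup \times \mathcal{T}^\sqcup$. Restricting both arguments to the block $\mathcal{T}_j$, i.e. taking $s = s_j \in \mathcal{T}_j$ and $t = t_j \in \mathcal{T}_j$, the left-hand side becomes $C_{jj}(s_j,t_j)$ while $\psi_m(s_j) = \psi_m^{(j)}(s_j)$ and $\psi_m(t_j) = \psi_m^{(j)}(t_j)$, which is precisely the claimed identity; absolute and uniform convergence are inherited from the convergence on the full domain.

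Should a self-contained argument be preferred over invoking the scalar theorem, the same conclusion follows by reproducing Mercer's proof on $\mathcal{T}^\sqcup$: positivity of the truncated remainder operators forces the diagonal series $\sum_m \nu_m \psi_m(t)^2$ to be nonnegative and bounded above by $\kappa(t,t)$, monotone convergence to the continuous limit $\kappa(t,t)$ is uniform by Dini's theorem, and a Cauchy--Schwarz estimate transfers this uniform control off the diagonal; the uniform limit is then identified with $\kappa$ through the $L^2$-convergence supplied by the spectral expansion of $\Gamma$. In this route the crux instead becomes the nonnegativity of the diagonal of the remainder kernel.
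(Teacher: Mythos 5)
Your proposal is correct, and it takes a genuinely different route from the paper. You reduce the statement to the scalar Mercer theorem by identifying $\mathcal{H}$ with $L^2(\mathcal{T}^\sqcup)$ for the disjoint union $\mathcal{T}^\sqcup = \mathcal{T}_1 \sqcup \cdots \sqcup \mathcal{T}_p$, under which $\Gamma$ is an honest integral operator with block kernel $\kappa$; the paper instead reproduces Mercer's classical argument from scratch in the product-space setting (truncated operators $\Gamma_M$, indicator functions of shrinking balls plus the Lebesgue Differentiation Theorem to get the diagonal bound $\sum_{m\le M}\nu_m\psi_m^{(j)}(t)^2\le C_{jj}(t,t)$, H\"older for absolute convergence, identification of the limit by testing against $L^2$ functions together with continuity of the eigenfunctions from its Lemma~1, and Dini for uniformity). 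You correctly isolate the one substantive verification in your route, namely joint continuity of $\kappa$, and your derivation of it from the paper's one-sided uniform continuity assumption applied to both $C_{ij}$ and $C_{ji}$ together with the symmetry $C_{ij}(s,t)=C_{ji}(t,s)$ is exactly right; symmetry and positivity of the operator are indeed already supplied by Prop.~\ref{prop:GammaProperties}. What your approach buys is brevity, a cleaner conceptual picture, and in fact a slightly stronger conclusion (the expansion of every block $C_{ij}$, not only the diagonal ones). What it costs is reliance on Mercer's theorem in the generality of a compact metric space with a finite Borel measure, rather than the textbook interval version --- $\mathcal{T}^\sqcup$ is not canonically a subset of a single Euclidean space --- so you would need to cite that generalization explicitly; both your route and the paper's also implicitly require the measure to be nondegenerate at each point of $\mathcal{T}_j$ (the paper's normalization by $\lambda(B_{t^\ast}(1/n))$ presupposes this), so neither has an advantage there. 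Your fallback of rerunning Mercer's proof directly on $\mathcal{T}^\sqcup$ is essentially the paper's proof transplanted, and your identification of the crux (nonnegativity of the diagonal of the remainder kernel) matches where the paper expends its effort.
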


\begin{prop}[Multivariate Karhunen-Lo\`{e}ve Theorem] \label{prop:KarhunenMultivariate}
Under the assumptions of Prop.~\ref{prop:GammCompact},
\begin{equation}
X(\boldsymbol{t}) = \sum \nolimits_{m=1}^\infty \rho_m \psi_m(\boldsymbol{t}), \quad \boldsymbol{t} \in \mathcal{T},
\label{eq:Karhunen}
\end{equation}
with zero mean random variables $\rho_m = \myscal{X}{\psi_m}$  and $\operatorname{Cov}(\rho_m, \rho_n) = \nu_m \delta_{mn}$.  Moreover
\[ \mathbb{E} \left( \vecnorm{X(\boldsymbol{t}) - \sum \nolimits_{m=1}^M \rho_m \psi_m(\boldsymbol{t})}^2 \right) \to 0 \quad \text{for} ~ M \to \infty\]
uniformly for $ \boldsymbol{t} \in \mathcal{T}$.
\end{prop}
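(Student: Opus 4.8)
The plan is to follow the classical Karhunen-Lo\`{e}ve argument, adapted to the product structure of $\mathcal{H}$, and to lean on the multivariate Mercer expansion of Prop.~\ref{prop:Mercer} to obtain the uniform bound. First I would verify the moment claims on the scores $\rho_m = \myscal{X}{\psi_m}$. Writing $\rho_m = \sum_{j=1}^p \int_{\mathcal{T}_j} X^{(j)}(t_j)\psi_m^{(j)}(t_j)\,\mathrm{d}t_j$, the centering $\mathbb{E}(\rho_m)=0$ follows by interchanging expectation and integration and using $\mu \equiv \boldsymbol{0}$. The interchange is legitimate because the finiteness conditions \eqref{eq:Kijconstants} together with the Cauchy-Schwarz inequality guarantee the integrability needed for Fubini; I would check this once and reuse it throughout.

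Next I would compute the covariance of the scores. Since the $\rho_m$ are centered, $\operatorname{Cov}(\rho_m,\rho_n) = \mathbb{E}(\rho_m \rho_n)$, and expanding both factors as double sums of integrals and applying Fubini yields
\[
\operatorname{Cov}(\rho_m,\rho_n) = \sum_{i,j=1}^p \int_{\mathcal{T}_i}\int_{\mathcal{T}_j} C_{ij}(s_i,t_j)\,\psi_m^{(i)}(s_i)\,\psi_n^{(j)}(t_j)\,\mathrm{d}t_j\,\mathrm{d}s_i.
\]
The inner sum over $i$ and integral over $s_i$ is exactly $(\Gamma\psi_m)^{(j)}(t_j)$ by \eqref{eq:GammaFormula}, which equals $\nu_m \psi_m^{(j)}(t_j)$ because $\psi_m$ is an eigenfunction. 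Hence the whole expression collapses to $\myscal{\Gamma\psi_m}{\psi_n} = \nu_m\myscal{\psi_m}{\psi_n} = \nu_m\delta_{mn}$, using orthonormality of the eigenbasis.

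The core of the argument is the mean-square convergence. Writing $S_M(\boldsymbol{t}) = \sum_{m=1}^M \rho_m\psi_m(\boldsymbol{t})$ and expanding the squared Euclidean norm componentwise, I would use $\mathbb{E}[(X^{(j)}(t_j))^2]=C_{jj}(t_j,t_j)$, the identity $\mathbb{E}[X^{(j)}(t_j)\rho_m] = (\Gamma\psi_m)^{(j)}(t_j) = \nu_m\psi_m^{(j)}(t_j)$ from the previous step, and $\operatorname{Cov}(\rho_m,\rho_n)=\nu_m\delta_{mn}$ to obtain
\[
\mathbb{E}\left(\vecnorm{X(\boldsymbol{t}) - S_M(\boldsymbol{t})}^2\right) = \sum_{j=1}^p C_{jj}(t_j,t_j) - 2\sum_{m=1}^M \nu_m \sum_{j=1}^p \psi_m^{(j)}(t_j)^2 + \sum_{m=1}^M \nu_m \sum_{j=1}^p \psi_m^{(j)}(t_j)^2.
\]
The cross term and the quadratic term differ only by a factor of two, so they partially cancel, leaving
\[
\mathbb{E}\left(\vecnorm{X(\boldsymbol{t}) - S_M(\boldsymbol{t})}^2\right) = \sum_{j=1}^p C_{jj}(t_j,t_j) - \sum_{m=1}^M \nu_m\sum_{j=1}^p \psi_m^{(j)}(t_j)^2.
\]

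Finally I would close the argument with Mercer's Theorem. Applied on the diagonal $s_j=t_j$ for each $j$, Prop.~\ref{prop:Mercer} gives $C_{jj}(t_j,t_j)=\sum_{m=1}^\infty \nu_m\psi_m^{(j)}(t_j)^2$, so the displayed quantity is precisely the tail $\sum_{m=M+1}^\infty \nu_m\sum_{j=1}^p \psi_m^{(j)}(t_j)^2$. Because Mercer's convergence is uniform in $t_j$ for each of the finitely many $j$, this tail tends to $0$ uniformly in $\boldsymbol{t}\in\mathcal{T}$, which yields both the uniform mean-square convergence and, as an immediate consequence, the representation \eqref{eq:Karhunen}. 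The main obstacle I anticipate is not the algebra, which is routine cancellation, but the careful justification of the Fubini interchanges and of evaluating Mercer's series on the diagonal; I expect the uniform-convergence clause of Prop.~\ref{prop:Mercer} to be exactly what makes the last step go through, with the finiteness of $p$ being essential so that a sum of finitely many uniformly convergent tails remains uniform.
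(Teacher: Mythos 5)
Your proposal is correct and follows essentially the same route as the paper's proof: compute the score moments via Fubini and the eigenequation, expand the mean-square error to obtain the remainder $\sum_{j=1}^p \bigl[ C_{jj}(t_j,t_j) - \sum_{m=1}^M \nu_m \psi_m^{(j)}(t_j)^2 \bigr]$, and invoke the uniform convergence in the multivariate Mercer's Theorem (Prop.~\ref{prop:Mercer}) to conclude. The only cosmetic difference is that the paper obtains the representation \eqref{eq:Karhunen} up front from the completeness of the eigenbasis (Hilbert-Schmidt Theorem) rather than as a corollary of the pointwise mean-square convergence, but both orderings are valid.
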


The multivariate Karhunen-Lo\`{e}ve representation has an analogous interpretation as in the univariate case \citep[Chapter 8.2.]{RamsaySilverman:2005}. The eigenvalues $\nu_m$ represent the amount of variability in $X$ explained by the single \textit{multivariate functional principal components} $\psi_m$, while the \textit{multivariate functional principal component scores} $\rho_m$ serve as weights of $\psi_m$ in the Karhunen-Lo\`{e}ve representation of $X$.
As the eigenvalues $\nu_m$ decrease towards 0, leading  eigenfunctions reflect the most important features of $X$. Truncated Karhunen-Lo\`{e}ve expansions, optimal $M$-dimensional approximations to $X$,
\begin{align}
X_{\lceil  M\rceil}(\boldsymbol{t})
:= \sum \nolimits_{m=1}^M \rho_m \psi_m(\boldsymbol{t}), \quad \boldsymbol{t} \in \mathcal{T},
\label{eq:truncKH}
\end{align}
are often used in practice. 
Single observations $x_i$ of $X$ can then be characterized by their score vectors $\left(\rho_{i,1} , \ldots ,  \rho_{i,M} \right)$ with $\rho_{i,m} = \myscal{x_i}{\psi_m}$ for further analysis,  e.g. for regression \citep{MuellerStadtmueller:2005} or clustering \citep{JacquesPreda:2014}.

\newpage
%Estimation algorithm
\section{Multivariate FPCA}
\label{sec:MFPCA}

\subsection{Relationship Between Univariate and Multivariate FPCA for Finite Karhunen-Lo\`{e}ve Decompositions}
\label{sec:UniMultiFPCA}

Given the Karhunen-Lo\`{e}ve representation of multivariate functional data $X$ as in  \eqref{eq:Karhunen}, a natural question is how this representation relates to the univariate Karhunen-Lo\`{e}ve representations of the single elements $X^{(j)}$. The following proposition establishes a direct relationship between these two representations if they are both finite, based on the theory of integral equations \citep{Zemyan:2012}.

\begin{prop} \label{prop:multi_single_EV}
The multivariate functional vector $X = \left(X^{(1)} , \ldots ,  X^{(p)}\right)$ in \eqref{eq:Karhunen} has a finite Karhunen-Lo\`{e}ve representation if and only if all univariate elements $X^{(1)},\ldots,X^{(p)}$, have a finite Karhunen-Lo\`{e}ve representation. In this case, it holds:
\begin{enumerate}
\item Given the multivariate Karhunen-Lo\`{e}ve representation \eqref{eq:Karhunen}, the positive eigenvalues $\lambda_1^{(j)} \geq \ldots \geq \lambda_{M_j}^{(j)} > 0,~ M_j \leq M$ of the univariate covariance operator $\Gamma^{(j)}$ associated with $X^{(j)}$ correspond to the positive eigenvalues of the matrix $\boldsymbol{A^{(j)}} \in \mathbb R ^{M \times M}$ with entries
\[A_{mn}^{(j)} =  (\nu_m \nu_n)^{1/2}  \scal{\psi_m^{(j)}}{\psi_n^{(j)}}, \quad m,n = 1 , \ldots ,  M.\]
The eigenfunctions of $\Gamma^{(j)}$ are given by
\[
\phi_m^{(j)}(t_j) = \left(\lambda_m^{(j)} \right)^{-1/2} \sum \nolimits _{n=1}^M \nu_n^{1/2}  [\boldsymbol{u_m^{(j)}}]_n \psi_n^{(j)}(t_j), \quad t_j \in \mathcal{T}_j,~m = 1, \ldots ,  M_j,\]
where $\boldsymbol{u_m^{(j)}}$ denotes an (orthonormal) eigenvector of $\boldsymbol{A^{(j)}}$ associated with  eigenvalue $\lambda_m^{(j)}$ and $[\boldsymbol{u_m^{(j)}}]_n$ denotes the $n$-th entry of this vector. 
For the univariate scores
\[\xi_m^{(j)} 
= \scal{X^{(j)}}{\phi_m^{(j)}}
= \left(\lambda_m^{(j)}\right)^{-1/2} \sum \nolimits _{n = 1}^M \nu_n^{1/2} \left[\boldsymbol{u_m^{(j)}} \right]_n \sum \nolimits _{k = 1}^M \rho_k \scal{\psi_n^{(j)}}{\psi_k^{(j)}}.\]

\item Assuming the univariate Karhunen-Lo\`{e}ve representation $
X^{(j)} = \sum \nolimits _{m=1}^{M _j}\xi_m^{(j)} \phi_m^{(j)}$
with
$\Gamma^{(j)} \phi_m^{(j)} = \lambda_m^{(j)} \phi_m^{(j)}$
for each element $X^{(j)}$ of $X$, the positive eigenvalues $\nu_1 \geq \ldots \geq \nu_M > 0$ of $\Gamma$ with $M \leq \sum \nolimits _{j=1}^p M_j =: M_+$ correspond to the positive eigenvalues of the matrix
$\boldsymbol{Z} \in \mathbb R ^{M_+ \times M_+}$ consisting of blocks $\boldsymbol{Z^{(jk)}} \in \mathbb{R}^{M_j \times M_k}$ with entries
\[ Z_{mn}^{(jk)} = \operatorname{Cov} \left(\xi_m^{(j)}, \xi_n^{(k)}\right), \quad m = 1, \ldots ,  M_j,~ n = 1, \ldots ,  M_k,~ j,k = 1 , \ldots ,  p.\]
The eigenfunctions of $\Gamma$ are given by their elements
\[
\psi_m^{(j)}(t_j) = \sum \nolimits _{n = 1}^{M_j} [\boldsymbol{c_m}]_n^{(j)} \phi_n^{(j)}(t_j),\quad t_j \in \mathcal{T}_j,~ m = 1 , \ldots ,  M,
\]
where $[\boldsymbol{c_m}]^{(j)} \in \mathbb R ^{M_j}$ denotes the $j$-th block of an (orthonormal) eigenvector $\boldsymbol{c_m}$ of $\boldsymbol{Z}$  associated with eigenvalue $\nu_m$. The scores are given by
\[\rho_m = \sum \nolimits _{j = 1}^p \sum \nolimits _{n=1}^{M_j}  [\boldsymbol{c_m}]_n^{(j)} \xi_n^{(j)}.\]
\end{enumerate}
\end{prop}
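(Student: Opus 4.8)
The plan is to exploit the single structural fact that underlies every claim: a finite Karhunen-Lo\`{e}ve expansion confines the process to a finite-dimensional subspace and turns the relevant covariance operators into \emph{degenerate} (finite-rank) integral operators, at which point the integral equations of \citet{Zemyan:2012} reduce to finite matrix eigenproblems. The unifying device will be to factor each covariance operator as a product $TT^\ast$ and to identify the stated matrix ($\boldsymbol{A^{(j)}}$ in Part 1, $\boldsymbol{Z}$ in Part 2) with the companion product $T^\ast T$ of a suitable finite-rank map $T$. The classical fact that $TT^\ast$ and $T^\ast T$ share the same nonzero eigenvalues with multiplicities, together with the correspondence $\boldsymbol{u}\mapsto\lambda^{-1/2}T\boldsymbol{u}$ between their eigenvectors, then yields the eigenvalue correspondences and the eigenfunction formulas simultaneously. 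The equivalence of finiteness comes out as a by-product: if \eqref{eq:Karhunen} is finite, reading off the $j$-th component gives $X^{(j)}(t_j)=\sum_{m=1}^M \rho_m\psi_m^{(j)}(t_j)$, so each $X^{(j)}$ lies in a finite-dimensional span and $\Gamma^{(j)}$ has finite rank; conversely, if every $X^{(j)}=\sum_{n=1}^{M_j}\xi_n^{(j)}\phi_n^{(j)}$ is finite, then $X$ takes values in the finite-dimensional subspace $V:=V_1\times\cdots\times V_p$ with $V_j:=\operatorname{span}\{\phi_1^{(j)},\dots,\phi_{M_j}^{(j)}\}$, and since the range of $\Gamma$ lies in $V$, $\Gamma$ has rank at most $M_+$.

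For Part 1 I would substitute the degenerate kernel $C_{jj}(s_j,t_j)=\sum_{m=1}^M\nu_m\psi_m^{(j)}(s_j)\psi_m^{(j)}(t_j)$ (which follows from $\operatorname{Cov}(\rho_m,\rho_n)=\nu_m\delta_{mn}$, cf.\ Prop.~\ref{prop:Mercer}) into the univariate eigenequation. Any eigenfunction with nonzero eigenvalue must then lie in $\operatorname{span}\{\psi_1^{(j)},\dots,\psi_M^{(j)}\}$. Concretely, define $T\colon\mathbb{R}^M\to L^2(\mathcal{T}_j)$ by $T\boldsymbol{e}_n=\nu_n^{1/2}\psi_n^{(j)}$; a direct computation gives $TT^\ast=\Gamma^{(j)}$ and $(T^\ast T)_{mn}=(\nu_m\nu_n)^{1/2}\scal{\psi_m^{(j)}}{\psi_n^{(j)}}=A_{mn}^{(j)}$. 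Hence the positive eigenvalues of $\boldsymbol{A^{(j)}}$ coincide with those of $\Gamma^{(j)}$, and $\phi_m^{(j)}=(\lambda_m^{(j)})^{-1/2}T\boldsymbol{u_m^{(j)}}$ reproduces exactly the stated eigenfunction formula; orthonormality follows from $\scal{\phi_m^{(j)}}{\phi_{m'}^{(j)}}=(\lambda_m^{(j)}\lambda_{m'}^{(j)})^{-1/2}(\boldsymbol{u_m^{(j)}})^\top\boldsymbol{A^{(j)}}\boldsymbol{u_{m'}^{(j)}}=\delta_{mm'}$. The score formula is then obtained by expanding $\xi_m^{(j)}=\scal{X^{(j)}}{\phi_m^{(j)}}$ with $X^{(j)}=\sum_k\rho_k\psi_k^{(j)}$.

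Part 2 is the mirror image and is cleaner, since here the $\phi_n^{(j)}$ are orthonormal. I would introduce the isometry $\Phi\colon\mathbb{R}^{M_+}\to\mathcal{H}$ sending the basis vector indexed by $(j,n)$ to the element of $\mathcal{H}$ equal to $\phi_n^{(j)}$ in component $j$ and zero elsewhere; orthonormality of the $\phi_n^{(j)}$ and orthogonality of distinct components in $\myscal{\cdot}{\cdot}$ make $\Phi$ an isometry with image $V$. Using the definition \eqref{eq:GammaFormula} of $\Gamma$, the zero-mean assumption and Fubini, the $(j,m),(k,n)$ entry of $\Phi^\ast\Gamma\Phi$ works out to $\operatorname{Cov}(\xi_m^{(j)},\xi_n^{(k)})=Z_{mn}^{(jk)}$, so $\Phi^\ast\Gamma\Phi=\boldsymbol{Z}$. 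Because $V$ contains the range of $\Gamma$ and is therefore $\Gamma$-invariant, the nonzero eigenvalues of $\boldsymbol{Z}$ are exactly those of $\Gamma$, the eigenfunctions are $\psi_m=\Phi\boldsymbol{c_m}$ (which is the claimed componentwise formula), and $\rho_m=\myscal{X}{\psi_m}=\sum_{j=1}^p\sum_{n=1}^{M_j}[\boldsymbol{c_m}]_n^{(j)}\xi_n^{(j)}$.

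The step I expect to be the main obstacle is the eigenvalue correspondence in Part 1, precisely because the component functions $\psi_m^{(j)}$ are \emph{not} orthonormal (only the full multivariate $\psi_m$ are, so that $\sum_j\scal{\psi_m^{(j)}}{\psi_n^{(j)}}=\delta_{mn}$ while the individual Gram entries $\scal{\psi_m^{(j)}}{\psi_n^{(j)}}$ need not vanish). One therefore cannot match coefficients naively, and the Gram matrix is unavoidable; this is exactly why $\boldsymbol{A^{(j)}}$ carries the factors $(\nu_m\nu_n)^{1/2}$ rather than being diagonal. The $TT^\ast$/$T^\ast T$ factorization is what absorbs this non-orthogonality and pins down the eigenvalue multiplicities correctly, so verifying $TT^\ast=\Gamma^{(j)}$ and $T^\ast T=\boldsymbol{A^{(j)}}$ is the crux; everything after that is bookkeeping.
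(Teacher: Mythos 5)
Your proof is correct and arrives at exactly the matrices $\boldsymbol{A^{(j)}}$ and $\boldsymbol{Z}$ and the eigenfunction/score formulas of the proposition, but it organizes the reduction to a matrix eigenproblem differently from the paper. The paper follows \citet{Zemyan:2012} literally: it substitutes the separable kernel into the integral eigenequation, observes that any eigenfunction with nonzero eigenvalue lies in the span of the kernel factors, projects onto those factors to obtain the matrix eigenequation, and then normalizes the resulting eigenfunctions by an explicit computation of $\scal{\tilde\phi_m^{(j)}}{\tilde\phi_n^{(j)}} = (\lambda_m^{(j)}\lambda_n^{(j)})^{-1}\boldsymbol{u_m^{(j)}}^\top \boldsymbol{A^{(j)}}\boldsymbol{u_n^{(j)}}$. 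Your $TT^\ast$/$T^\ast T$ factorization packages the same computation more structurally: verifying $TT^\ast=\Gamma^{(j)}$ and $T^\ast T=\boldsymbol{A^{(j)}}$ gives the eigenvalue correspondence \emph{with multiplicities} for free, and the map $\boldsymbol{u}\mapsto\lambda^{-1/2}T\boldsymbol{u}$ delivers the correctly normalized eigenfunctions without a separate orthonormality check. Likewise, in Part 2 the paper multiplies the eigenequation by $\phi_m^{(j)}$ and integrates, which is precisely the computation of $\Phi^\ast\Gamma\Phi$ in your notation; your invariant-subspace phrasing makes the claim that $\boldsymbol{Z}$ captures \emph{all} nonzero eigenvalues of $\Gamma$ immediate from $\operatorname{ran}\Gamma\subseteq V$, a point the paper handles by reading off from the eigenequation that $\psi^{(j)}=\nu^{-1}(\Gamma\psi)^{(j)}$ lies in $\operatorname{span}\{\phi_1^{(j)},\ldots,\phi_{M_j}^{(j)}\}$. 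The only step you should make fully explicit in a write-up is the standard lemma that for a bounded finite-rank operator $T$ the nonzero spectra of $TT^\ast$ and $T^\ast T$ agree including multiplicities, with eigenvectors exchanged by $T$ and $T^\ast$; and you correctly identify the non-orthonormality of the component functions $\psi_m^{(j)}$ as the reason the Gram factors $(\nu_m\nu_n)^{1/2}\scal{\psi_m^{(j)}}{\psi_n^{(j)}}$ cannot be avoided in Part 1.
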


\textbf{Extensions:} The second part of Prop.~\ref{prop:multi_single_EV} can be extended in a natural way if univariate elements are expanded in finitely many, not necessarily orthonormal basis functions $b_m^{(j)}$ with coefficients $\theta_m^{(j)}$, i.e.
\begin{equation}
X^{(j)}(t_j) = \sum \nolimits _{m = 1}^{K_j} \theta_m^{(j)} b_m^{(j)}(t_j), \quad t_j \in \mathcal{T}_j.
\label{eq:basisGenExpansion}
\end{equation}
This is a very likely situation in practice, e.g. due to pre-smoothing of noisy observations.
Following analogous steps as in the proof of Prop.~\ref{prop:multi_single_EV} results in an eigenanalysis problem  $\boldsymbol{BQ c} = \nu \boldsymbol{c}$ as starting point for the MFPCA. Here $\boldsymbol{B} \in \mathbb{R}^{K_+ \times K_+}$ with $K_+ = \sum_{j = 1}^p K_j$ is a block diagonal matrix of scalar products $\scal{b_m^{(j)}}{b_n^{(j)}}$ of univariate basis functions associated with each element $X^{(j)}$. In the special case that all univariate bases are orthonormal (e.g. when using the univariate principal component bases as in Prop.~\ref{prop:multi_single_EV}), $\boldsymbol{B}$ equals the identity matrix. The symmetric block matrix $\boldsymbol{Q}$ with entries $Q_{mn}^{(jk)} = \operatorname{Cov} (\theta_m^{(j)}, \theta_n^{(k)})$ corresponds to $\boldsymbol{Z}$ in Prop.~\ref{prop:multi_single_EV}. 
Although $\boldsymbol{BQ}$ is in general not symmetric, its eigenvectors $\boldsymbol{c}_m$ and eigenvalues $\nu_m$, which are at the same time the eigenvalues of $\Gamma$, are real. This can be easily shown using the Cholesky decomposition of the symmetric matrix $\boldsymbol{B} = \boldsymbol{R R^\top}$ and solving $\boldsymbol{ R^\top Q R  \tilde c} = \nu\boldsymbol{\tilde c}$ with $ \boldsymbol{\tilde c = R^{-1} c}$. 
The estimation algorithm for principal components $\psi_m$ and associated scores $\rho_m$ based on this general basis expansion is presented in the next section combined with the case of a weighted scalar product.

\subsection{Estimation of Multivariate FPCA}
\label{sec:estMFPCA}

\textbf{Estimation based on univariate FPCA:}
The second part of Prop.~\ref{prop:multi_single_EV}. suggests a simple and natural approach for estimating the MFPCA.  After calculation of unvariate FPCAs for each element, the estimates can be plugged into  the formulae given in Prop.~\ref{prop:multi_single_EV}. 
Given de-meaned samples $x_1, \ldots ,  x_N$ of $X$, the proposed estimation procedure for MFPCA consists of four steps:
\begin{enumerate}
\item \label{algo:estuFPCA}For each element $X^{(j)}$ estimate a univariate FPCA based on the observations $x_1 ^{(j)} $, $\ldots $, $ x_N ^{(j)}$. This results in estimated eigenfunctions $\hat \phi_m^{(j)}$ and scores $\hat \xi_{i,m}^{(j)},~ i = 1 , \ldots, N,~ m = 1, \ldots, M_j$ for suitably chosen truncation lags $M_j$. 
As there exist numerous estimation procedures, e.g. for irregularly sampled and sparse data with measurement error \citep{YaoEtAl:2005}, the multivariate method is also applicable to this kind of data. 

\item \label{algo:estMFPCA_Z}
 Define the matrix $\boldsymbol{\Xi} \in \mathbb R ^{N \times M_+}$, where each row $( \hat \xi_{i,1}^{(1)} , \ldots ,  \hat \xi_{i,M_1}^{(1)} , \ldots ,  \hat \xi_{i,1}^{(p)} , \ldots ,  \hat \xi_{i, M_p}^{(p)} )$  contains all estimated scores for a single observation.
 An estimate $\boldsymbol{\hat Z} \in \mathbb R ^{M_+ \times M_+}$  of the block matrix $\boldsymbol{Z}$ in Prop.~\ref{prop:multi_single_EV} is given by $\boldsymbol{\hat Z} = (N-1)^{-1} \boldsymbol{\Xi} ^\top  \boldsymbol{\Xi}$.
\item \label{algo:estMFPCA_eigen}
Perform a matrix eigenanalysis for $\boldsymbol{\hat Z}$ resulting in eigenvalues $\hat \nu_m$ and orthonormal  eigenvectors $\boldsymbol{\hat c_m}$.
\item \label{algo:estMFPCA_result}
Estimates for the multivariate eigenfunctions are given by their elements
\begin{align}
 \hat \psi_m^{(j)}(t_j) &= \sum \nolimits _{n = 1}^{M_j} [\boldsymbol{\hat c_m}]_n^{(j)} \hat \phi_n^{(j)}(t_j), \quad t_j \in \mathcal{T}_j,~ m = 1 , \ldots, M^+
 \label{eq:MultEFun} \\
 \intertext{and multivariate scores can be calculated via}
 \hat \rho_{i,m} &= \sum \nolimits _{j = 1}^p \sum \nolimits _{n=1}^{M_j} [\boldsymbol{\hat c_m}]_n^{(j)} \hat \xi_{i,n}^{(j)} =  \boldsymbol{\Xi_{i, \cdot}}  \boldsymbol{\hat c_m}.
\label{eq:MultScore}
\end{align}
\end{enumerate}

Finding an appropriate truncation lag $M_j$ in step \ref{algo:estuFPCA}  is a well-known issue in functional data analysis. Common approaches are based on the decrease of the estimated eigenvalues $\hat \lambda_m^{(j)}$ \citep[scree-plot,][]{Cattell:1966} or the percentage of variance explained \citep[e.g.][Chapter 8.2.]{RamsaySilverman:2005}. An optimal number $M \leq M_+$ of multivariate functional principal components can basically be chosen with the same techniques, while the importance of a ``correct'' choice depends on the specific application: For simply exploratory aims it is less crucial than for subsequent analyses that ignore the information of the eigenvalues (and hence, the proportion of variance explained by the single components) and are based solely on multivariate eigenfunctions or scores, as e.g. clustering or functional principal component regression. For the latter, relevant eigenfunctions can also be selected using model-based approaches such as AIC or cross-validation.
The goodness of the resulting MFPCA estimates of course depends on an appropriate choice of $M_j$, which can also be used as a sensitivity check: If the first $M_j$ eigenfunctions capture all the relevant information in $X^{(j)}$, increasing $M_j$ will add only little information and hence should have only little impact on the results. This relationship is analyzed in a simulation in the online appendix.

\textbf{Extensions}: The estimation algorithm  can easily be extended to elements $X^{(j)}$ available in general basis expansions as in \eqref{eq:basisGenExpansion} and to MFPCA based on a weighted scalar product as in \eqref{eq:weightedSP}.
Given weights $w_1, \ldots, w_p > 0$ and demeaned observations $x_1 , \ldots, x_N$ of $X$ with estimated basis function coefficients $\hat \theta_{i,m}^{(j)}$ for each element, the eigenanalysis problem to solve is
\begin{equation}
(N-1)^{-1} \boldsymbol{B D \Theta}^\top \boldsymbol{\Theta D c} = \nu \boldsymbol{ c}.
\label{eq:eigenExtAlgo}
\end{equation}
The matrix $\boldsymbol{B}$ is the block diagonal matrix of basis scalar products as in Section~\ref{sec:UniMultiFPCA} and  $\boldsymbol{D} = \operatorname{diag}(\boldsymbol{w_1}^{1/2}, \ldots, \boldsymbol{w_p}^{1/2}) \in \mathbb R ^{K_+ \times K_+}$ accounts for the weights, where each $w_j^{1/2}$ is repeated $K_j$ times to give $\boldsymbol{w_j}^{1/2}$. 
$\boldsymbol{\Theta} \in \mathbb{R}^{N \times K_+}$ with rows 
$( \hat \theta_{i,1}^{(1)} , \ldots ,  \hat \theta_{i,K_1}^{(1)} , \ldots ,  \hat \theta_{i,1}^{(p)} , \ldots ,  \hat \theta_{i,K_p}^{(p)} )$ corresponds to the matrix $\boldsymbol{\Xi}$ defined in step~\ref{algo:estMFPCA_Z} of the original algorithm and $(N-1)^{-1}\boldsymbol{\Theta^\top \Theta}$ is an estimate for $\boldsymbol{Q}$ introduced in Section~\ref{sec:UniMultiFPCA}. 
Given eigenvectors $\boldsymbol{\hat c_m}$ and eigenvalues $\hat \nu_m$ for \eqref{eq:eigenExtAlgo}, estimated orthonormal eigenfunctions $\hat \psi_m$ of $\Gamma_w$ and associated scores $\hat \rho_{i,m}$ can be calculated in analogy to \eqref{eq:MultEFun} and \eqref{eq:MultScore} with $\boldsymbol{\hat Q_w} = (N-1)^{-1} \boldsymbol{D \Theta} ^\top \boldsymbol{\Theta D}$:
\begin{align*}
\hat \psi_m^{(j)}(t_j) &= \left(w_j \cdot \hat \nu_m \boldsymbol{\hat c_m}^\top \boldsymbol{ \hat Q_w  \hat c_m} \right)^{-1/2}\sum \nolimits _{k = 1}^p \sum \nolimits _{l = 1}^{K_j} \sum \nolimits _{n = 1}^{K_k} [\boldsymbol{\hat Q_w}]_{ln}^{(jk)} [\boldsymbol{\hat c_m}]_n^{(k)}  b_l^{(j)}(t_j), \\
\hat \rho_{i,m} &= 
\left(\hat \nu_m\right)^{1/2} \left(\boldsymbol{\hat c_m}^\top  \boldsymbol{\hat Q_w \hat c_m} \right)^{-1/2}  \boldsymbol{\Theta_{i,\cdot} D \hat c_m} .
\end{align*}
Clearly, the original algorithm is obtained as a special case with $\boldsymbol{\Theta} = \boldsymbol{\Xi},~ \boldsymbol{B} = \boldsymbol{I}$ (univariate FPCA for each element) and $\boldsymbol{D} = \boldsymbol{I}$ (all weights equal to $1$). Moreover, the extended algorithm allows to flexibly combine univariate FPCA and general basis expansions for different elements of the multivariate functional data.

If all elements $X^{(j)}$ are defined on the same (one-dimensional) interval and $\boldsymbol{D} = \boldsymbol{I}$, expanding each element in a general basis is equivalent to the method of \citet{JacquesPreda:2014}. The approach proposed in this paper, however, is more general, as it allows for different intervals as well as for higher-dimensional $\mathcal{T}_j$ and thus basis functions $b_m^{(j)}$.

\textbf{Implementation:} All presented variations of the MFPCA estimation algorithm are implemented in an \texttt{R} package \texttt{MFPCA} \if1\blind
{\citep{MFPCA}}\fi. Univariate basis expansions include univariate FPCA (1D), smooth tensor PCA (2D), spline bases (1D/2D) and cosine bases (2D/3D). New bases can be added easily and in a modular way.
The \texttt{MFPCA} package is based on the package \texttt{funData}  \if1\blind
{\citep{funData} }\fi for representing (multivariate) functional data on potentially different dimensional domains.
\if0\blind
{Both packages are available in the supplementary material.} \fi

\begin{subsection}{Asymptotic Properties}
\label{sec:Asymptotics}

The results of Prop.~\ref{prop:multi_single_EV} and the estimators proposed in the previous section have been derived under the assumption of a finite sample size $N$ and a finite Karhunen-Lo\`{e}ve representation for each element $X^{(j)}$. 
 This case is relevant in practice, since data is observable only in finite form (finitely many observations, finite resolution) and hence contains only finite information. In this case, the maximal number of principal components which can be estimated is limited to the number of observations $N$. For a growing number of observations, the truncation limits $M_j$ and thus $M_+$ may increase with $N$. 
All asympotic examinations hence have to consider the approximation error caused by truncating the univariate Karhunen-Lo\`{e}ve representations to finite sums as well as the estimation error.
For the eigenfunctions (analogously for the eigenvalues and scores) one hence has the following decomposition:
\[ \mynorm{\psi_m - \hat \psi_m} \leq \mynorm{\psi_m - \psi^{[M]}_m} + \mynorm{\psi^{[M]}_m - \hat \psi_m}.\]
Here $\psi_m$ is the true $m$-th eigenfunction of the covariance operator  $\Gamma$ and $\hat \psi_m$ is the estimator based on the assumption of a finite Karhunen-Lo\`{e}ve representation in each element. This assumption is reflected in $\psi^{[M]}_m$, which denotes the $m$-th eigenfunction of the covariance operator $\Gamma^{[M]}$ associated with $X^{[M]}$ with elements equal to the truncated $X^{(j)}$. These are really the eigenfunctions targeted with the estimation algorithm presented in Section~\ref{sec:estMFPCA}.
The first term on the right hand side of the inequality can be seen as a bias term caused by truncation. It depends on $N$ only implicitly via $M_1, \ldots, M_p$. The second term accounts for the estimation error, thus can be interpreted as a variance term.

\begin{prop}[Approximation Error]  \label{prop:asymptBias}
Let $\nu^{[M]}_m,~ m \in \mathbb{N}$ be the eigenvalues of the covariance operator $\Gamma^{[M]}$ associated with $X^{[M]}$ having truncated univariate elements $X^{[M](j)} = \sum_{m = 1}^{M_j} \xi_m^{(j)} \phi_m^{(j)}$.
Then the approximation error $\mynorm{X^{[M]} - X}$ converges to $0$ in probability for $M_1, \ldots, M_p \to \infty$.
For each $m \in \mathbb{N}$, $\nu^{[M]}_m$ converges to $\nu_m$  including multiplicity  and the total projection $P^{[M]}_m$ of $\mathcal{H}$ onto the eigenspace of $\Gamma^{[M]}$ associated with $\nu^{[M]}_m$ converges in norm to the total projection $P_m$ of $\mathcal{H}$ onto the eigenspace of $\Gamma$ associated with  $\nu_m$. 

 In particular, if $\nu_m$ and $\nu^{[M]}_m$ both have  multiplicity $1$ with associated eigenfunctions $\psi_m$ and $\psi^{[M]}_m$, such that $\myscal{\psi_m}{\psi_m^{[M]}} \geq 0$, then
\[\mynorm{\psi_m^{[M]} - \psi_m} \to 0 \quad \text{for}~M_1, \ldots, M_p \to \infty.\]
The scores $\rho^{[M]}_m := \myscal{X^ {[M]}}{\psi^{[M]}_m}$ converge in probability to $\rho_m$ for all $m \in \mathbb{N}$.
\end{prop}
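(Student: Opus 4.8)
The plan is to treat the two halves of the statement separately: the $L^2$ approximation error by a direct tail-sum estimate, and the spectral convergence by reducing everything to operator-norm convergence $\opnorm{\Gamma^{[M]} - \Gamma}\to 0$ and then invoking standard perturbation theory for compact self-adjoint operators (by Prop.~\ref{prop:GammaProperties}, $\Gamma$ and each finite-rank $\Gamma^{[M]}$ are compact, self-adjoint and positive).

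For the approximation error I would use that truncation is an orthogonal projection. Writing $R^{[M]} := X - X^{[M]}$ and using the univariate Karhunen-Lo\`{e}ve expansions $X^{(j)} = \sum_{m=1}^\infty \xi_m^{(j)}\phi_m^{(j)}$ with orthonormal $\phi_m^{(j)}$, the residual in the $j$-th element is $\sum_{m > M_j}\xi_m^{(j)}\phi_m^{(j)}$, so $\mynorm{R^{[M]}}^2 = \sum_{j=1}^p\sum_{m>M_j}(\xi_m^{(j)})^2$ and hence $\mathbb{E}[\mynorm{R^{[M]}}^2] = \sum_{j=1}^p\sum_{m>M_j}\lambda_m^{(j)}$. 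Each inner sum is the tail of the convergent series $\sum_m \lambda_m^{(j)} = \mathbb{E}[\norm{X^{(j)}}^2] < \infty$ and therefore vanishes as $M_j\to\infty$; thus $\mathbb{E}[\mynorm{X^{[M]} - X}^2]\to 0$, and convergence in probability follows by Markov's inequality.

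The core step is operator-norm convergence. Since $X^{[M]} = PX$ for the block-diagonal orthogonal projection $P$ onto the truncation subspaces, one has $\Gamma^{[M]} = P\Gamma P$ and hence the uniform bound $\opnorm{\Gamma^{[M]}} \le \opnorm{\Gamma} = \nu_1$. Because the means vanish, $\myscal{(\Gamma - \Gamma^{[M]})f}{g} = \mathbb{E}[\myscal{X}{f}\myscal{X}{g} - \myscal{X^{[M]}}{f}\myscal{X^{[M]}}{g}]$; substituting $X = X^{[M]} + R^{[M]}$ and expanding leaves three terms, each carrying a factor $\myscal{R^{[M]}}{\cdot}$. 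Bounding each by the Cauchy-Schwarz inequality inside the expectation and using $\mathbb{E}[\myscal{X^{[M]}}{f}^2] = \myscal{\Gamma^{[M]}f}{f}\le \nu_1\mynorm{f}^2$ gives $\opnorm{\Gamma - \Gamma^{[M]}} \le 2\nu_1^{1/2}(\mathbb{E}[\mynorm{R^{[M]}}^2])^{1/2} + \mathbb{E}[\mynorm{R^{[M]}}^2]$, which tends to $0$ by the previous step. From here the min-max (Courant-Fischer) characterization yields $|\nu_m^{[M]} - \nu_m| \le \opnorm{\Gamma^{[M]} - \Gamma}$, hence convergence of the eigenvalues including multiplicity, while the Riesz contour-integral representation of the spectral projections together with resolvent convergence yields $P_m^{[M]}\to P_m$ in norm. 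For the simple-multiplicity consequence I would apply $P_m^{[M]} - P_m$ to the unit vector $\psi_m$: with $a := \myscal{\psi_m}{\psi_m^{[M]}}\ge 0$ one obtains $\mynorm{a\psi_m^{[M]} - \psi_m}^2 = 1 - a^2 \to 0$, so $a\to 1$ and $\mynorm{\psi_m^{[M]} - \psi_m}^2 = 2(1-a)\to 0$. The scores are then handled by the splitting $\rho_m^{[M]} - \rho_m = \myscal{X^{[M]} - X}{\psi_m^{[M]}} + \myscal{X}{\psi_m^{[M]} - \psi_m}$, whose first term is at most $\mynorm{R^{[M]}}\to 0$ in probability and whose second is bounded by $\mynorm{X}\,\mynorm{\psi_m^{[M]} - \psi_m}$, a finite random variable times a deterministic null sequence.

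The main obstacle is establishing $\opnorm{\Gamma^{[M]} - \Gamma}\to 0$ uniformly in the truncation level: the cross-term bookkeeping must be done carefully, and the uniform bound $\sup_M \opnorm{\Gamma^{[M]}}<\infty$ (cleanly furnished by the identity $\Gamma^{[M]} = P\Gamma P$) is exactly what makes the Cauchy-Schwarz estimates uniform in $M$. Everything downstream --- eigenvalue, projection, eigenfunction and score convergence --- is then a routine consequence of perturbation theory and the triangle inequality.
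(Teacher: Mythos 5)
Your proof is correct, but it reaches the crucial intermediate result --- operator-norm convergence $\opnorm{\Gamma^{[M]} - \Gamma} \to 0$ --- by a genuinely different and arguably cleaner route than the paper. The paper works at the level of covariance kernels: it first shows pointwise convergence $C^{[M]}_{ij}(s_i,t_j) \to C_{ij}(s_i,t_j)$ using univariate Mercer and Karhunen--Lo\`{e}ve results, then bounds $\opnorm{\Gamma - \Gamma^{[M]}}^2$ by iterated integrals of the kernel differences and passes to the limit with two applications of the dominated convergence theorem. You instead exploit the structural identity $\Gamma^{[M]} = P\Gamma P$ for the block-diagonal orthogonal projection $P$ onto the truncation subspaces, which immediately gives the uniform bound $\opnorm{\Gamma^{[M]}} \le \nu_1$, and then control $\myscal{(\Gamma-\Gamma^{[M]})f}{g}$ by expanding $X = X^{[M]} + R^{[M]}$ and applying Cauchy--Schwarz, reducing everything to $\mathbb{E}[\mynorm{R^{[M]}}^2] = \sum_{j}\sum_{m>M_j}\lambda_m^{(j)} \to 0$ (itself a sharper, Parseval-based version of the paper's Fubini argument for the approximation error, valid since $\sum_m \lambda_m^{(j)} = \int_{\mathcal{T}_j} C_{jj}(t,t)\,\mathrm{d}t < \infty$). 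This buys a shorter argument with no kernel-level estimates and no appeal to uniform convergence in Mercer's theorem; it costs nothing essential here, though the paper's kernel computations are reused elsewhere (e.g.\ in Lemma~2). Downstream you diverge only cosmetically: you obtain eigenvalue convergence with multiplicity from the Courant--Fischer/Weyl inequality and projection convergence from Riesz contour integrals, where the paper cites Kato's notion of generalized convergence; your computation $\mynorm{(P_m^{[M]}-P_m)\psi_m}^2 = 1 - \myscal{\psi_m}{\psi_m^{[M]}}^2$ for the simple-eigenvalue case and your two-term splitting of $\rho_m^{[M]} - \rho_m$ are essentially identical to the paper's.
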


In the remainder of this section, all nonzero eigenvalues $\nu_m$ are assumed to have multiplicity $1$, as then the eigenfunctions $\psi_m^{[M]}$ converge to $\psi_m$, if their orientation is chosen such that $\myscal{\psi_m}{\psi_m^{[M]}} \geq 0$.

For the estimation error, consider the univariate elements $X^{(j)}$ of $X$  with covariance operator $\Gamma^{(j)}$ and associated eigenvalues $\lambda_m^{(j)}$ and eigenfunctions $\phi_m^{(j)},~ m = 1, \ldots, M_j$. In the following, let $X_1, \ldots, X_N$ be independent copies of $X$ and assume for all $j = 1, \ldots, p$
\begin{gather}
\Delta^{(j)}_{M_j}:= \sup_{m = 1, \ldots, M_j}(\lambda_m^{(j)} - \lambda_{m+1}^{(j)})^{-1} < \infty ~\text{for every finite}~M_j \tag{A1}  \label{ass:eValGap} \\
 \int_{\mathcal{T}_j} \int_{\mathcal{T}_k} \mathbb{E}\left(X^{(j)}(t_j)^2 X^{(k)}(s_k)^2\right) \mathrm{d} s_k ~ \mathrm{d} t_j < \infty \quad \forall~ k = 1, \ldots, p \tag{A2}  \label{ass:finite4Moment} \\
\left  \Vert \Gamma^{(j)}- \hat\Gamma^{(j)} \right  \Vert_\text{op}   = O_p(r_N^{\Gamma}) \tag{A3} \label{ass:asymptCov}\\
 \scal{\phi_m^{(j)}}{\hat \phi_m^{(j)}}   \geq  0~\text{for all}~m = 1,\ldots, M_j  \tag{A4} \label{ass:eFunSign} \\
 \hat \xi_{i,m}^{(j)}   = \scal{X_i^{(j)}}{\hat \phi_m^{(j)}} ~\text{for all}~m = 1,\ldots, M_j,~ i = 1 ,\ldots, N \tag{A5} \label{ass:scoreCalc}
\end{gather}

\eqref{ass:eValGap} -- \eqref{ass:finite4Moment} concern theoretical properties of $X^{(j)}$ and $\Gamma^{(j)}$, while \eqref{ass:asymptCov} -- \eqref{ass:scoreCalc} depend on the univariate decompositions used.
\eqref{ass:eValGap} is a standard assumption in univariate FPCA \citep{Bosq:2000,Hall:2006}. It  guarantees that the first $M_j$ univariate eigenvalues of each element all have multiplicity 1.
With \eqref{ass:finite4Moment}, the integral operator with kernel $\hat C_{jk}(s,t) := N^{-1} \sum_{i = 1}^N X_i^{(j)}(s) X_i^{(k)}(t)$ converges to the one with kernel $C_{jk}(s,t)$ with rate $N^{-1/2}$. \eqref{ass:finite4Moment} is used in combination with \eqref{ass:scoreCalc} to obtain a convergence rate for the maximal eigenvalue of $\boldsymbol{Z} - \boldsymbol{\hat Z}$, which, in turn, affects the convergence of the eigenvectors $\boldsymbol{\hat c_m}$ to $\boldsymbol{c_m}$ \citep{YuEtAl:2015}. 
\eqref{ass:asymptCov} ensures that the operator $\hat \Gamma^{(j)}$, which is the basis of the univariate FPCA, converges to $\Gamma^{(j)}$ in the operator norm $\left  \Vert \cdot \right  \Vert_\text{op}$ induced by $\norm{\cdot}$ with a given rate $r^{\Gamma}_N$. 
For fully observed data, \cite{HallHorowitz:2007} show $r_N^\Gamma = N^{-1/2}$, while the approach of \cite{YaoEtAl:2005} yields $r^{\Gamma}_N = N^{-1/2}h^{-2}$ in the case of measurement error or irregularly sampled data for a certain bandwidth $h$. 
Together with \eqref{ass:eValGap}, $r^{\Gamma}_N$ gives a convergence rate for the univariate eigenfunctions $\hat \phi_m^{(j)}$ \citep[][Lemma 4.3]{Bosq:2000}. 
\eqref{ass:eFunSign} guarantees that $\hat \phi_m^{(j)}$ is an estimator for $\phi_m^{(j)}$ rather than for $-\phi_m^{(j)}$, as eigenfunctions are defined only up to a sign change \citep{Bosq:2000, Hall:2006}. 
Finally, \eqref{ass:scoreCalc} is used to formulate the convergence of the estimated scores in terms of convergence rates for the estimated eigenfunctions. If this assumption does not hold \citep[e.g. in][]{YaoEtAl:2005}, convergence results can still be obtained e.g. by assuming a convergence rate for $\hat \xi_{i,m}^{(j)}$ and replacing \eqref{ass:finite4Moment} by an assumption on the rate of convergence for the maximal eigenvalue of $\boldsymbol{Z} - \boldsymbol{\hat Z}$.

\begin{prop}[Estimation Error]  \label{prop:asymptVar}

Assume \eqref{ass:eValGap} --  \eqref{ass:scoreCalc} hold. Then for  $M_{\max} = \max_{j = 1 , \ldots,  p}M_j$ and $\Delta_M:= \max_{j = 1, \ldots, p}{\Delta^{(j)}_{M_j}}$, the  maximal eigenvalue of $\boldsymbol{Z} - \boldsymbol{\hat Z}$ can be characterized by 
\[ \lambda_{\max}(\boldsymbol{Z} - \boldsymbol{\hat Z}) = O_p(M_{\max} \max(N^{-1/2} , \Delta_M r_N^\Gamma)).\]
Using the same notation as in Prop.~\ref{prop:asymptBias}, it holds for all $m = 1,\ldots, M_+$ that
\begin{align*}
\left|\nu_m^{[M]} - \hat \nu_m \right| &=O_p(M_{\max} \max(N^{-1/2} , \Delta_M r_N^\Gamma)), \\
 \mynorm{\psi^{[M]}_m - \hat \psi_m} &
 = O_p(M_{\max}^{3/2} \max(N^{-1/2} , \Delta_M r_N^\Gamma)),\\
\left|\rho^{[M]}_{i,m} - \hat \rho_{i,m}\right| &=  O_p(M_{\max}^{3/2} \max(N^{-1/2} , \Delta_M r_N^\Gamma)), \\
\mynorm{X_i^{[M]} - \hat X_i^{[M]}} &= O_p(M_{\max} \Delta_M r_N^\Gamma )
\end{align*}
with $\hat X_i^{[M](j)} = \sum_{m = 1}^{M_j} \hat \xi_{i,m}^{(j)} \hat \phi_m^{(j)}$.
\end{prop}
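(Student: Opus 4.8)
The plan is to reduce all four displayed rates to a single perturbation estimate for the symmetric matrix $\boldsymbol{\hat Z} - \boldsymbol{Z}$ and then invoke standard matrix perturbation theory. First I would bound $\lambda_{\max}(\boldsymbol{Z} - \boldsymbol{\hat Z}) \le \Vert \boldsymbol{Z} - \boldsymbol{\hat Z}\Vert_\text{op}$ by splitting the error in each block entry into two effects. Writing $\tilde Z_{mn}^{(jk)} = (N-1)^{-1}\sum_{i=1}^N \xi_{i,m}^{(j)}\xi_{i,n}^{(k)}$ for the sample covariance of the \emph{true} scores, I decompose $\hat Z_{mn}^{(jk)} - Z_{mn}^{(jk)} = (\tilde Z_{mn}^{(jk)} - Z_{mn}^{(jk)}) + (\hat Z_{mn}^{(jk)} - \tilde Z_{mn}^{(jk)})$. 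The first difference is a centred average of the products $\xi_m^{(j)}\xi_n^{(k)}$, whose second moment is finite by \eqref{ass:finite4Moment} (since $(\xi_m^{(j)})^2 \le \norm{X^{(j)}}^2$ by Bessel's inequality, the fourth cross-moment is dominated by the integral there), so it is $O_p(N^{-1/2})$ entrywise. The second difference is driven by $\hat \xi_{i,m}^{(j)} - \xi_{i,m}^{(j)} = \scal{X_i^{(j)}}{\hat\phi_m^{(j)} - \phi_m^{(j)}}$, which Cauchy--Schwarz controls by the univariate eigenfunction error; under \eqref{ass:eValGap} and \eqref{ass:asymptCov}, Lemma~4.3 of \citet{Bosq:2000} gives $\norm{\hat\phi_m^{(j)} - \phi_m^{(j)}} = O_p(\Delta_M r_N^\Gamma)$, so this difference is $O_p(\Delta_M r_N^\Gamma)$ entrywise. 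Abbreviating $\epsilon_N := \max(N^{-1/2}, \Delta_M r_N^\Gamma)$, every one of the $O(M_{\max}^2)$ entries is $O_p(\epsilon_N)$, and bounding the operator norm by the Frobenius norm yields $\lambda_{\max}(\boldsymbol{Z} - \boldsymbol{\hat Z}) = O_p(M_{\max}\epsilon_N)$.

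The eigenvalue statement is then immediate from Weyl's inequality, since by Prop.~\ref{prop:multi_single_EV} the $\nu_m^{[M]}$ are exactly the nonzero eigenvalues of $\boldsymbol{Z}$ and the $\hat\nu_m$ those of $\boldsymbol{\hat Z}$, whence $|\nu_m^{[M]} - \hat\nu_m| \le \Vert \boldsymbol{Z} - \boldsymbol{\hat Z}\Vert_\text{op}$. For the eigenvectors I would apply the Davis--Kahan theorem in the form of \citet{YuEtAl:2015}, with the sign alignment of $\boldsymbol{\hat c_m}$ and $\boldsymbol{c_m}$ provided by \eqref{ass:eFunSign} and the spectral gaps of $\boldsymbol{Z}$ bounded away from zero by the multiplicity-one assumption on the $\nu_m$. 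Here the eigenvector bound acquires one extra half power of $M_{\max}$ beyond the eigenvalue rate, because resolving a single eigenvector aggregates the perturbation over all $M_+ = O(M_{\max})$ coordinate directions; this gives $\vecnorm{\boldsymbol{\hat c_m} - \boldsymbol{c_m}} = O_p(M_{\max}^{3/2}\epsilon_N)$.

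These matrix-level bounds are converted into the functional statements through the explicit representations $\hat\psi_m^{(j)} = \sum_{n=1}^{M_j}[\boldsymbol{\hat c_m}]_n^{(j)}\hat\phi_n^{(j)}$ and $\hat\rho_{i,m} = \sum_j\sum_n [\boldsymbol{\hat c_m}]_n^{(j)}\hat\xi_{i,n}^{(j)}$ of Prop.~\ref{prop:multi_single_EV}. Inserting $\pm\sum_n[\boldsymbol{\hat c_m}]_n^{(j)}\phi_n^{(j)}$ splits $\psi_m^{[M]} - \hat\psi_m$ into a coefficient part, whose squared multivariate norm collapses to $\vecnorm{\boldsymbol{c_m} - \boldsymbol{\hat c_m}}^2$ by orthonormality of the $\phi_n^{(j)}$, and a basis-replacement part $\sum_n[\boldsymbol{\hat c_m}]_n^{(j)}(\phi_n^{(j)} - \hat\phi_n^{(j)})$, which Cauchy--Schwarz bounds by $\vecnorm{[\boldsymbol{\hat c_m}]^{(j)}}\big(\sum_n \norm{\phi_n^{(j)} - \hat\phi_n^{(j)}}^2\big)^{1/2} = O_p(M_{\max}^{1/2}\Delta_M r_N^\Gamma)$. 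The coefficient part dominates and delivers the $O_p(M_{\max}^{3/2}\epsilon_N)$ rate; the score bound follows identically, the analogue of the basis-replacement part being controlled via $\vecnorm{\xi_{i,\cdot}} \le \mynorm{X_i} = O_p(1)$ (Bessel). The reconstruction error is cleanest: by \eqref{ass:scoreCalc} one has $\hat X_i^{[M](j)} = \hat P^{(j)} X_i^{(j)}$ and $X_i^{[M](j)} = P^{(j)} X_i^{(j)}$ with $P^{(j)}, \hat P^{(j)}$ the projections onto the first $M_j$ (true, resp.\ estimated) univariate eigenfunctions, so $\mynorm{X_i^{[M]} - \hat X_i^{[M]}} \le \max_j \Vert \hat P^{(j)} - P^{(j)}\Vert_\text{op}\,\mynorm{X_i}$; since $\Vert\hat P^{(j)} - P^{(j)}\Vert_\text{op} \le \sum_{n=1}^{M_j}\norm{\hat\phi_n^{(j)} - \phi_n^{(j)}}$, this is $O_p(M_{\max}\Delta_M r_N^\Gamma)$, with no $N^{-1/2}$ term because the scores enter only through projection.

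The main obstacle is the bookkeeping of the powers of $M_{\max}$ in the regime where $M_1,\dots,M_p$ grow with $N$: the entrywise $O_p(\epsilon_N)$ control must hold uniformly over all $O(M_{\max}^2)$ blocks so that the Frobenius aggregation is legitimate, and the Davis--Kahan step must be carried out so that the single-eigenvector bound picks up exactly the extra factor $M_{\max}^{1/2}$ and no more. Assumption \eqref{ass:finite4Moment} is what secures the uniform entrywise control of the sampling part, \eqref{ass:eValGap} with \eqref{ass:asymptCov} supplies the univariate eigenfunction rate, and \eqref{ass:eFunSign}--\eqref{ass:scoreCalc} fix the orientations and the projection form of the scores that keep the cross terms aligned.
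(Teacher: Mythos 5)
Your overall architecture coincides with the paper's: control $\boldsymbol{Z}-\boldsymbol{\hat Z}$ entrywise by separating a sampling effect of order $N^{-1/2}$ (secured by \eqref{ass:finite4Moment} together with \eqref{ass:scoreCalc}) from an eigenfunction-estimation effect of order $\Delta_M r_N^\Gamma$ (via \eqref{ass:eValGap}, \eqref{ass:asymptCov} and Lemma~4.3 of \citet{Bosq:2000}), aggregate to a matrix norm at the cost of one factor $M_{\max}$, and then push the result through \citet{YuEtAl:2015} and the representations of Prop.~\ref{prop:multi_single_EV}. Several of your sub-steps are in fact cleaner than the paper's: Weyl's inequality for the eigenvalues replaces the paper's direct manipulation of the quadratic forms $\boldsymbol{c_m}^\top\boldsymbol{Z}\boldsymbol{c_m}$; the Frobenius-norm aggregation is of the same order as the paper's row-sum bound; the projection-operator formulation of the reconstruction error is equivalent to the paper's term-by-term expansion; and your use of the orthonormality of the $\phi_n^{(j)}$ to collapse the coefficient part of $\mynorm{\psi_m^{[M]}-\hat\psi_m}$ exactly to $\vecnorm{\boldsymbol{c_m}-\boldsymbol{\hat c_m}}$ is tighter than the paper's triangle-inequality bound, which loses a factor $M_j^{1/2}$ at that point.

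The one genuine error is your Davis--Kahan step. Corollary~1 of \citet{YuEtAl:2015} gives, for a simple eigenvalue and with the sign convention fixed,
$\vecnorm{\boldsymbol{\hat c_m}-\boldsymbol{c_m}} \le 2^{3/2}\,\lambda_{\max}(\boldsymbol{Z}-\boldsymbol{\hat Z})/\min\bigl(\nu_{m-1}^{[M]}-\nu_m^{[M]},\,\nu_m^{[M]}-\nu_{m+1}^{[M]}\bigr)$,
i.e.\ the eigenvector error is of the \emph{same} order as the perturbation; there is no additional factor $M_{\max}^{1/2}$ from ``aggregating over the $M_+$ coordinate directions''. The correct rate is therefore $\vecnorm{\boldsymbol{\hat c_m}-\boldsymbol{c_m}} = O_p(M_{\max}\max(N^{-1/2},\Delta_M r_N^\Gamma))$, not $O_p(M_{\max}^{3/2}\cdots)$. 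In the paper the extra half power in the eigenfunction and score bounds arises elsewhere, namely from bounding $\norm{\sum_n([\boldsymbol{c_m}]_n^{(j)}-[\boldsymbol{\hat c_m}]_n^{(j)})\phi_n^{(j)}}$ by $M_j^{1/2}\vecnorm{\boldsymbol{c_m}-\boldsymbol{\hat c_m}}$ --- precisely the loss your orthonormality argument avoids. The mistake is benign for the truth of the proposition: with the correct Davis--Kahan bound your decomposition actually yields $O_p(M_{\max}\max(N^{-1/2},\Delta_M r_N^\Gamma))$ for the eigenfunctions and scores, which implies the stated $O_p(M_{\max}^{3/2}\cdots)$ rates. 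But as written your account of where the exponent $3/2$ comes from is fictitious; you should either quote the corollary correctly and note that your bound is then stronger than claimed, or fall back on the paper's cruder $\ell^1$-type coefficient-to-function conversion if you want to land exactly on the stated rate.
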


When combining the results of Prop.~\ref{prop:asymptBias} and Prop.~\ref{prop:asymptVar}, the analogy to bias and variance again becomes  apparent:
For fixed $N$, higher values of $M_1, \ldots, M_p$ will reduce the approximation error, but simultaneously increase the estimation error, as both $M_{\max}$ and $\Delta_M$ increase with $M_j$.
If one assumes for example $M_1 = \ldots = M_p = M_{\max} = O(N^\beta)$, $r_N^\Gamma = N ^{-1/2}$, and that the eigengaps fulfill 
$\lambda^{(j)}_m - \lambda_{m+1}^{(j)} \geq C^{-1} m^{-\alpha-1}$ with $\alpha  > 1,~ C > 0$  \citep[cf.][]{HallHorowitz:2007},
the MFPCA estimators given in  Section~\ref{sec:estMFPCA} are consistent for $0 < \beta < (2\alpha +  5)^{-1}$.

\end{subsection}

% Simulation
\section{Simulation}
\label{sec:Simulation}

We illustrate the performance of our new MFPCA estimation procedure in three settings with increasing complexity:
\begin{enumerate}
\item Densely observed bivariate functional data on the same one-dimensional interval.
\item Trivariate functional data on different one-dimensional  intervals with different levels of sparsity.
\item Bivariate functional data on different dimensional domains (images and functions).
\end{enumerate}
The first two settings deal with multivariate functional data on one-dimensional domains and are presented together in Section~\ref{sec:simFun}. Setting 3 is discussed separately in Section~\ref{sec:simImage}. Examples for simulated data and estimation results for all three settings are given in the online appendix, which also includes two additional simulations (cf. Sections~\ref{sec:estMFPCA} and \ref{sec:application}). Unless specified otherwise, the \texttt{MFPCA} package
\if1\blind
{
\citep{MFPCA}
}\fi
is used for all calculations.

Each setting is based on 100 datasets with $N = 250$ observations of the form
\[x_i(\boldsymbol{t}) = \sum \nolimits_{m = 1}^M \rho_{i,m} \psi_m(\boldsymbol{t}) +\boldsymbol{\varepsilon}_i(\boldsymbol{t}),\quad \boldsymbol{\varepsilon}_i(\boldsymbol{t}) \stackrel{\text{iid}}{\sim} \mathrm{N}_p(0, \sigma^2 \boldsymbol{I}),\quad \boldsymbol{t} \in \mathcal{T},~ i = 1 , \ldots,  N.\]
In each case, we consider data without ($\sigma^2 = 0$) and with ($\sigma^2 = 0.25$) measurement error. The scores $\rho_{i,m}$ are independent samples from $\mathrm{N}(0, \nu_m)$ for eigenvalues with exponential ($\nu_m^\text{exp} = \exp(-(m+1)/2)$) or  linear ($\nu_m^\text{lin} = \left(M+1-m\right)/M$) decrease, while the choice of $\mathcal{T}, M$ and $\psi_m$ varies between settings (see Sections~\ref{sec:simFun} and \ref{sec:simImage}). In all cases, we use unit weights ($w_j = 1$).
The accuracy of the resulting estimates $\hat \nu_m$ and $\hat \psi_m$ is measured by the relative errors $\operatorname{Err}(\hat \nu_m) = \left( \nu_m - \hat \nu_m \right)^2/\nu_m^2$ and $
\operatorname{Err}(\hat\psi_m) = \mynorm{\psi_m - \hat \psi_m}^2$. 
As functional principal components are defined only up to a sign change, the estimate $\hat \psi_m$ is reflected, i.e. multiplied by $-1$, if $\myscal{\psi_m}{\hat \psi_m} < 0$.  The goodness of the reconstructed observations $\hat x_i = \sum \nolimits _{m = 1}^M \hat \rho_{i,m} \hat \psi_m^{(j)}$ is evaluated by the mean relative squared error  $\operatorname{MRSE}
 =N^{-1} \sum \nolimits _{i = 1}^N \left( \mynorm{x_i - \hat x_i}^2 / \mynorm{x_i}^2 \right)$.

\subsection{Multivariate Functional Data on One-Dimensional Domains}
\label{sec:simFun}

\textbf{Setting 1:}
For the first setting, the first $M = 8$ Fourier basis functions on $[0,2]$ are split into $p = 2$ parts. The pieces are shifted and multiplied by a random sign to form the elements $\psi_m^{(1)}$ and $\psi_m^{(2)}$ on $\mathcal{T}_1 = \mathcal{T}_2 =[0,1]$ (for technical details, see online appendix). 
The observations $x_i$ are sampled on an equispaced grid of $S_1 = S_2 = 100$ sampling points. The MFPCA is based on $M_1 = M_2 = 8$ univariate functional principal components that are calculated by the PACE algorithm  \citep{YaoEtAl:2005} with penalized splines to smooth the covariance function, as implemented in the \texttt{R} package \texttt{refund} \citep{refund}.  
In this simple setting of a common, one-dimensional domain, the new approach can be compared to the method of \citet{RamsaySilverman:2005}, which is implemented in the \texttt{R} package \texttt{fda} \citep{fda} and in the following denoted by $\text{MFPCA}_\text{RS}$. This method involves pre-smoothing of the elements with $K = 15$ cubic spline basis function. $\text{MFPCA}_\text{RS}$  computes score values $\hat \rho_{i,m}^{(j)} = \scal{x_i^{(j)}}{\hat \psi_m^{(j)}}$ for each observation $i$ and each element $j$. Since they do not have the same interpretation as the scores in the multivariate Karhunen-Lo\`{e}ve representation (Prop.~\ref{prop:KarhunenMultivariate}), $\sum \nolimits _{j = 1}^p \hat \rho_{i,m}^{(j)} = \hat \rho_{i,m}$ is used for comparison purposes.
 
\begin{figure}[hb!]
\centering
\includegraphics[width = 0.85\textwidth]{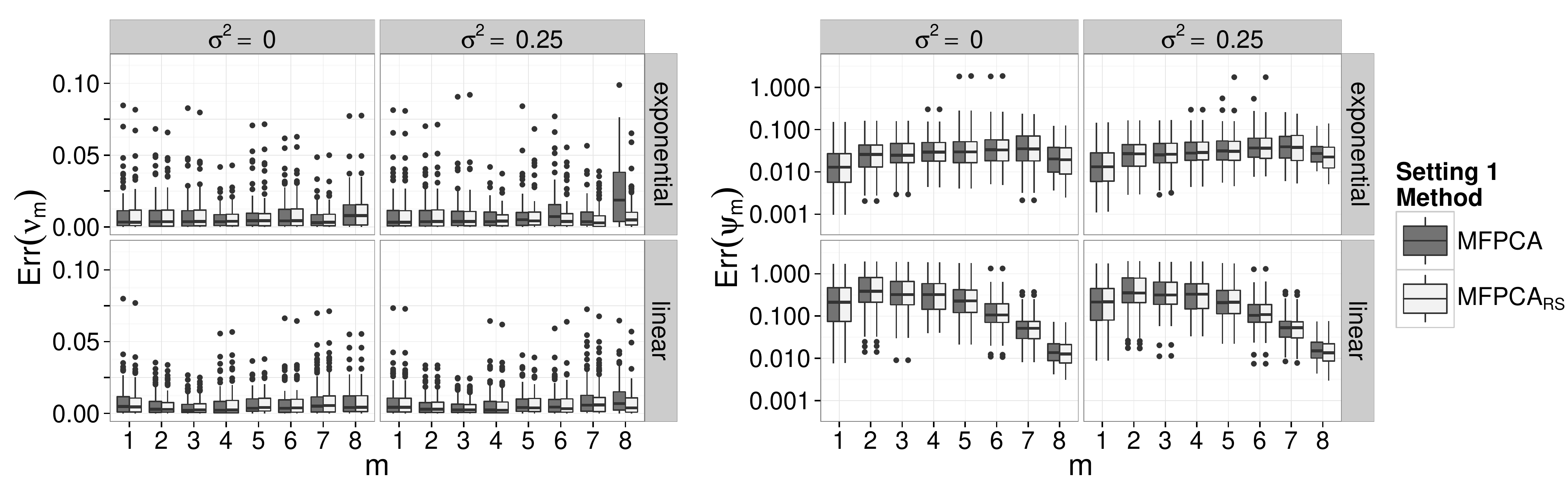}

\includegraphics[width = .85\textwidth]{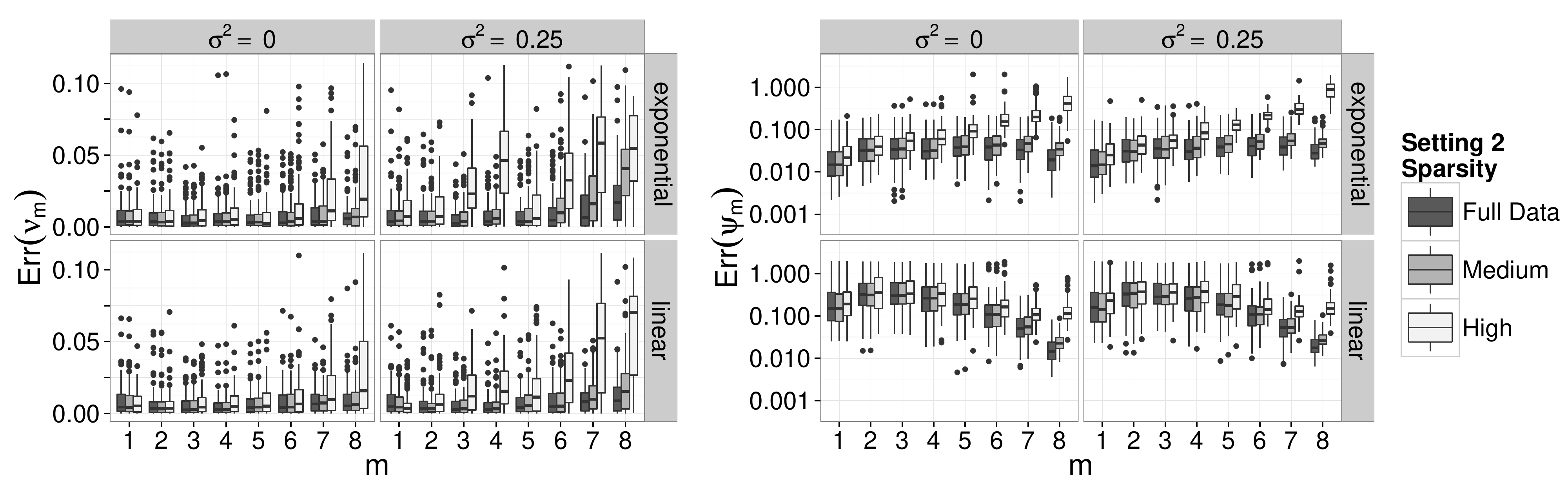}
\caption{\small Relative errors for estimated eigenvalues (left) and eigenfunctions (right, log-scale) for simulation settings 1 and 2, depending on eigenvalue decrease, measurement error and estimation method (setting 1) or sparsity (setting 2).}
\label{fig:sim1eValeFun}
\end{figure}

The results for the first setting are shown in Fig.~\ref{fig:sim1eValeFun}   and Table~\ref{tab:simrMSE}.  In total, the new approach can compete very well with the existing method of \citeauthor{RamsaySilverman:2005} and gives nearly identical results for synthetic and real data (see online appendix for the gait cycle example). Both techniques mostly have higher errors in $\psi_m$ for linearly decreasing eigenvalues, as in these cases, the eigenfunctions are more often confused, i.e. $\hat \psi_m$ is an estimate for e.g. $\psi_{m-1}$ or $\psi_{m+1}$ rather than for $\psi_m$. 
In the ideal case of no measurement error, $\text{MFPCA}_\text{RS}$ yields lower MRSE values than the new approach, which might be an effect of $\text{MFPCA}_\text{RS}$ expecting smooth or presmoothed data. For the practically relevant case of data with measurement error, both methods give almost the same prediction errors (cf. Table~\ref{tab:simrMSE}). Simulations based on Legendre polynomials gave very similar results (not shown here).

\begin{table}[ht]
\centering
\caption{
\small Average $\operatorname{MRSE}$ (in $\%$) for simulation setting 1 and 2, depending on eigenvalue decrease and measurement error.
}
\small
\begin{tabular}{llrrrr}
\hline \hline
\multicolumn{2}{c}{}  & \multicolumn{2}{c}{ $\sigma^2 = 0$} & \multicolumn{2}{c}{$\sigma^2 = 0.25$} \\
\multicolumn{2}{l}{Setting}    & $\nu_m^\text{exp}$ & $\nu_m^\text{lin}$ &  $\nu_m^\text{exp}$ & $\nu_m^\text{lin}$ \\ 
\hline 
1 & MFPCA &         0.006 &  0.009 &          0.740 &  0.355 \\ 
1 & $\text{MFPCA}_\text{RS}$ &    $< 10^{-3}$ & $< 10^{-3}$ &          0.720 &  0.338 \\ 
 \hline
 2 & Full Data &          0.004 &  0.007 &          0.778 &  0.367 \\ 
2 & Medium Sparsity &        0.164 &  0.146 &          2.070 &  1.102 \\ 
2 & High Sparsity &         5.755 &  4.568 &         15.365 & 10.824 \\ 
 \hline
\end{tabular}
\label{tab:simrMSE}
\end{table}

\textbf{Setting 2:}
Here we consider trivariate functional data on  $\mathcal{T}_1 =[-1, 0.5],~ \mathcal{T}_2 =[0,1],~ \mathcal{T}_3 = [1.5,2]$. The eigenfunctions are constructed according to the same scheme as in setting 1 by splitting the first $M = 8$ Fourier basis functions on $[0,2]$ into $p = 3$ parts, followed by a shift and multiplication with a random sign. The observations are sampled on equidistant grids with $S_1 = S_3 = 50$ and $S_2 = 100$ sampling points. We consider the dense observations as well as sparse variants with medium ($50 - 70\%$) and high ($90 - 95 \%$ missings) sparsity. The sparsification mechanism is analogous to \cite{YaoEtAl:2005} and applied to each observation and each element separately.
The MFPCA is calculated in the same way as in setting 1, using the PACE approach to estimate $M_1 = M_2 = M_3 = 8$ functional principal components for each element. For data with high sparsity, we set $M_1 = M_3 = 3$ and $M_2 = 5$ to make computation of the univariate FPCA feasible.

The results are given in Fig.~\ref{fig:sim1eValeFun} and Table~\ref{tab:simrMSE}. Here there is no available competitor. The performance of our MFPCA
for full data is very similar to the simpler case of setting 1. Even for a moderate level of sparsity, the new method yields excellent results for most eigenvalues and eigenfunctions at the expense of somewhat higher reconstruction errors. For very sparse data, the leading eigenvalues and eigenfunctions are still estimated well, but the reconstruction error is considerably higher than for the full data. However, this is still acceptable (average $\operatorname{MRSE}$ is lower than $16\%$ for all levels of sparsity), bearing in mind that data with high sparsity contains at most $10\%$ of the original information. Again, simulations based on Legendre polynomials gave very similar results (not shown here).

\subsection{Multivariate Functional Data Consisting of Functions and Images}
\label{sec:simImage}

\textbf{Setting 3:}
Observations are generated based on $M = 25$ principal components, where the image elements $\psi_m^{(1)}$ are formed by tensor products of Fourier basis functions on $\mathcal{T}_1 = [0,1] \times [0, 0.5]$ and $\psi_m^{(2)}$ are given by  Legendre polynomials on $\mathcal{T}_2 = [-1,1]$. 
The elements are weighted by random factors $\alpha^{1/2}$ and $(1 - \alpha)^{1/2}$, respectively, with $\alpha \in (0.2, 0.8)$ to ensure orthonormality. For the scores, only exponentially decreasing eigenvalues are used.
The observations are discretized using a grid of $S_1 = 100 \times 50$ equidistant points for the image element and $S_2 = 200$ equidistant points for the functions.
 
We consider the new MFPCA approach based on univariate FPCA as well as non-orthogonal basis functions. In the first case, the eigendecomposition for the image data is calculated with the FCP-TPA algorithm for regularized tensor decomposition \citep{Allen:2013}. The smoothing parameters for penalizing second differences in both image directions are chosen via generalized cross-validation in  $[10^{-5}, 10^5]$ \citep{Allen:2013, HuangEtAl:2009}.
Multivariate FPCA is calculated based on $M _1 = 20$ eigenimages and $M_2 = 15$ univariate eigenfunctions. In the case of general basis functions, image elements are expanded in tensor products of $K_1 = 10 \times 12$ B-splines and the one-dimensional element is represented in terms of $K_2 = 15$ B-spline basis functions. In the presence of measurement error the univariate expansions are fit with appropriate smoothness penalties \citep{EilersMarx:1996}.

\begin{figure}[ht]
\centering
\includegraphics[width = 0.85\textwidth]{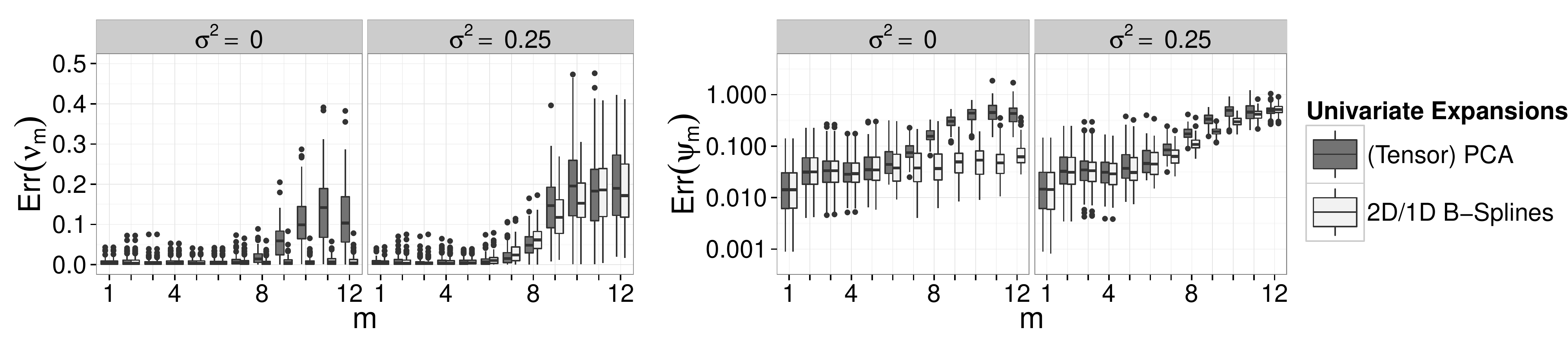}
\caption{\small Relative errors for estimated eigenvalues (left) and eigenfunctions (right, log-scale) for simulation setting 3, depending on measurement error and univariate expansions.}
\label{fig:sim3eValeFun}
\end{figure}

The overall results for the first $M = 12$ eigenvalue/eigenvector pairs are given in Fig.~\ref{fig:sim3eValeFun}. Compared to the settings with one-dimensional domain, the  errors are slightly higher, in particular  for  higher order eigenvalues and eigenfunctions. Exemplary results however, show that even in this case, the new approach is still able to capture the important features of the true eigenfunctions well (see online appendix). The results further show that the general approach with spline basis functions performs mostly better than the pure MFPCA approach. Moreover, the truncated Karhunen-Lo\`{e}ve representation with $M = 12$ (true $M = 25$) estimated eigenfunctions and scores gives an excellent reconstruction of the original data.  The average $\operatorname{MRSE}$ is $1.382\%$/$0.398\%$ (PCA/splines) for data without measurement error and $2.233\%$/$2.048\%$ (PCA/splines) for data with measurement error.

% Application
\section{Application  -- ADNI Study}
\label{sec:application}
\label{sec:applicateADNI}

In this section, the new method is applied to data from the
Alzheimer's Disease Neuroimaging Initiative study (ADNI), which aims at identifying biomarkers for accurate diagnosis of Alzheimer's disease (AD) in an early stage \citep{Mueller:2007}. We use MFPCA to explore how  longitudinal trajectories of a neuropsychological score 
(ADAS-Cog, a current standard for monitoring AD progression) covary with FDG-PET scans at baseline. The latter are used to assess the glucose metabolism in the brain, which is tightly coupled with neuronal function. As the brain images might be predictive of subsequent cognitive decline, common patterns between these two sources of information would be highly relevant.

\textbf{Dataset:} The dataset considered for MFPCA contains data from all $N = 483$ participants enrolled in ADNI1, having an FDG-PET scan at baseline and at least three ADAS-Cog measurements during follow-up. At baseline, $84$ subjects were diagnosed with AD, $302$ were suffering from mild cognitive impairment (MCI, in many cases an early stage of AD) and $97$ were cognitively healthy elderly controls.
The ADAS-Cog trajectories constitute the first element $X^{(1)}$, where high values indicate a high level of cognitive impairment. The measurements contain  missings, mostly in the second half of the study period and thus are sparse.
The second element $X^{(2)}$ is an axial slice of $93 \times 117 $ pixels (139.5 $\times$ 175.5 $\text{mm}^2$) of FDG-PET scans, containing the Precuneus and temporo-parietal regions. Both are believed to show a strong relation between hypometabolism (reduced brain function) and AD \citep{Blennow:2006}. Exemplary data is shown in Fig.~\ref{fig:alzheimerData}.

\begin{figure}[ht!]
\centering
\begin{minipage}{0.45\textwidth}
\includegraphics[width = \textwidth]{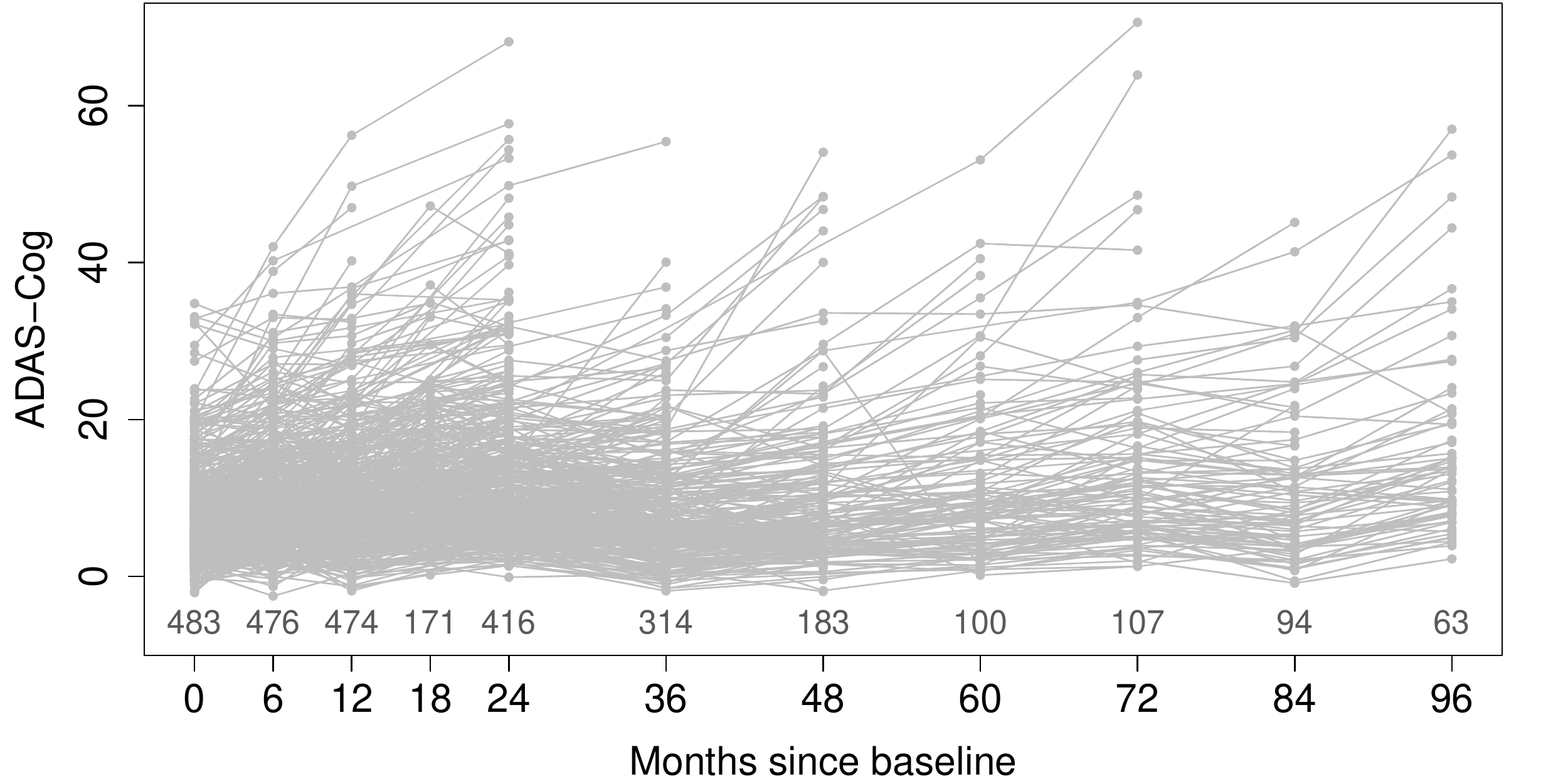}
\end{minipage}
\hfill
\begin{minipage}{0.54\textwidth}
\includegraphics[width =\textwidth]{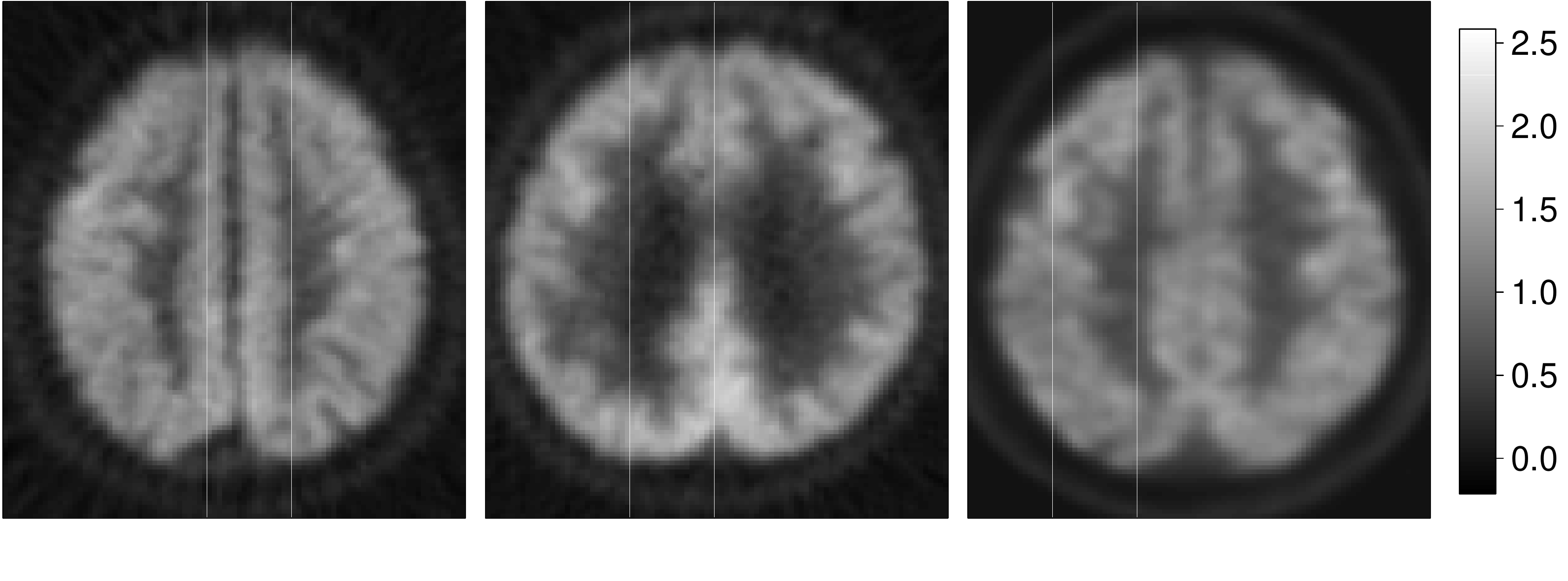}
\end{minipage}
\caption{
\small Left: ADAS-Cog trajectories for all $N = 483$ subjects. Numbers above the x-axis give the total number of measurements for each visit. Right: FDG-PET scans for three randomly chosen male subjects (left to right: AD, MCI, normal; diagnosis at baseline).}
\label{fig:alzheimerData}
\end{figure}

\textbf{Weighted scalar product:}
As the ADAS-Cog trajectories and FDG-PET scans differ considerably in  domain, range and variation (cf. Fig.~\ref{fig:alzheimerData}),  we  use a weighted MFPCA with 
\[w_j = \left(\int_{\mathcal{T}_j} \hat C_{jj} (t_j, t_j) \mathrm{d}  t_j \right)^{-1} = \left( \int_{\mathcal{T}_j} \widehat{\operatorname{Var}} \left(X^{(j)}(t_j) \right) \mathrm{d}  t_j \right)^{-1}, \quad j = 1,2 ,\]
where $\hat C_{jj}$ is estimated from the data. Using these weights, the integrated variance equals $1$ for the rescaled elements $\tilde X^{(j)} = w_j^{1/2}X^{(j)}$. All elements thus contribute equal amounts of variation to the analysis, similarly to multivariate PCA, where the data is usually standardized before the analysis. 
We believe that this a sensible choice for many applications, but there may of course be situations, in which other weighting schemes may be preferable. For example, one could think of data that has two image elements, representing brain regions of different size for the same imaging modality. Here variability is naturally on the same scale and it might be better to keep the information of the site of the individual domains by setting both weights to one. On the other hand, if the images stem from different imaging modalities on the same domain, it might be necessary to correct solely for differences in variation. 
As a general rule, the weights should be chosen in close coordination with practitioners, considering the objective of the analysis and the data at hand.

\textbf{Results: }
The results for the first two multivariate functional principal components, that account for $80.7\%$ of the total weighted variance, are shown in Fig.~\ref{fig:alzheimerMFPCAresults}. 
For the univariate expansions, we use FPCA for $X^{(1)}$ with $M_1 = 3$ principal components (explaining $99.2 \%$ of the univariate variance) and  $20 \times 15$ tensor product B-splines for the images $X^{(2)}$.
Fig.~\ref{fig:alzheimerMFPCAresults} further includes pointwise bootstrap confidence bands for the principal components based on $100$ nonparametric bootstrap iterations on the level of subjects.
The coverage of such confidence bands for data consisting of functions and images has been analyzed in a simulation study, which gave good results, even in the presence of measurement error (see online appendix).
The entire analysis for the ADNI data took around $15$ minutes on a standard laptop (2.7 GHz, 16 GB RAM) including the calculation of the bootstrap confidence bands and without parallelization.

Almost half of the variability in the data ($46.7\%$ of the weighted variance) is explained by the first functional principal component. The ADAS-Cog element -- and hence the degree of cognitive impairment -- is elevated relative to the mean and increases during follow-up. The FDG-PET element exhibits hypometabolism in the Precuneus and the temporo-parietal regions, i.e. this component reflects reduced brain activity in these regions already at baseline.  In total, the first eigenfunction seems to be interpretable as an AD related effect, as the pattern for positive scores perfectly agrees with medical knowledge about AD progression. This interpretation is supported by the estimated scores, which are mainly positive for people diagnosed with AD by their last visit, while scores of subjects who remained cognitively normal during follow-up are nearly all negative. Persons with MCI have intermediate score values, which is in line with the hypothesis that this diagnosis can constitute a transitional phase between normal ageing and AD.

For the second functional principal component (explains $33.9\%$ of  weighted variance),  the ADAS-Cog element is nearly constant and has wide bootstrap confidence bands that include zero during the whole follow-up.  In contrast, the FDG-PET element differs significantly from zero in almost all voxels (cf. Fig.~\ref{fig:alzheimerMFPCAresults}). 
Hence, this principal component reflects variation in the FDG-PET scans at baseline. Plotting the overall mean plus or minus this component   suggests that it can be interpreted as an effect of imperfect registration that manifests in different brain sizes, which are known to correlate with gender \citep{Ruigrok:2014}. This hypothesis is supported by the boxplots of the estimated scores  in Fig.~\ref{fig:alzheimerMFPCAresults}, while scores do not differ notably by diagnosis (not shown here).

\textbf{Discussion: }
The results show that MFPCA is able to capture important sources of variation in the data that have a meaningful interpretation from a medical and neuroimaging point of view. An important issue not addressed here is that for ADAS-Cog, there may well be an informative dropout of patients with high score values (cf. Fig.~\ref{fig:alzheimerData}). While addressing informative missingness goes beyond the scope of this paper, interpretation of results should take this possibility into account. For instance, it is easily conceivable that $\hat \psi_1^{(1)}$ may be underestimating $\psi_1^{(1)}$ towards the end of the study period.

\begin{figure}[h!]
\centering
\includegraphics[width = \textwidth]{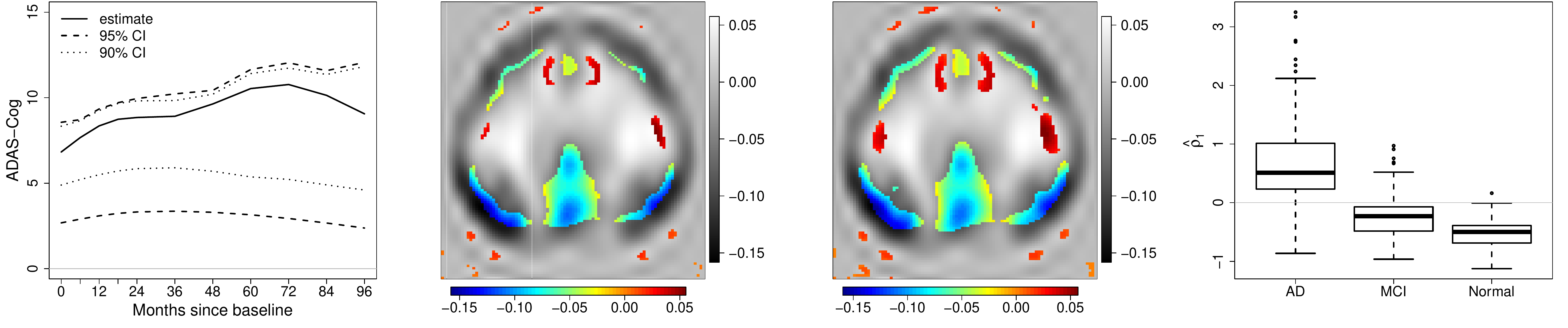}

\includegraphics[width = \textwidth]{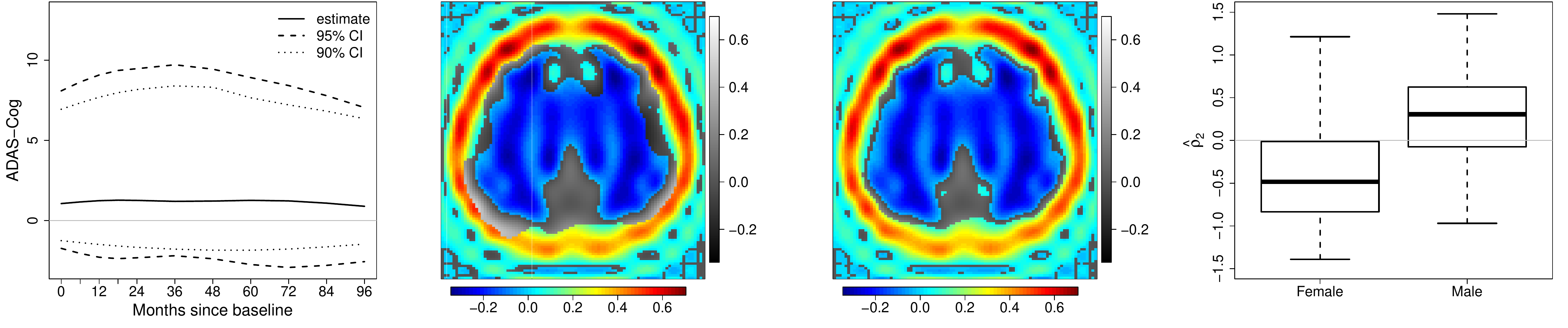}

\caption{
\small The first two estimated multivariate functional principal components for the ADNI data (1st row: $\hat \psi_1$, 2nd row: $\hat \psi _2$). Estimates are given with pointwise $95\%$ and $90\%$ bootstrap confidence bands based on $100$ nonparametric bootstrap iterations (ADAS-Cog, 1st column: Dashed lines; FDG-PET, 2nd and 3rd column: Pixels with pointwise $95\%$ (left) and $90\%$ (right) confidence bands not including zero in color). Boxplots of the scores (4th column) support the interpretation.}
\label{fig:alzheimerMFPCAresults}
\end{figure}

% Discussion
\section{Discussion and Outlook}
\label{sec:Discussion}

This paper introduces methodology and a practical estimation algorithm for multivariate functional principal component analysis. While other methods for MFPCA are restricted to observations on a common, one-dimensional interval, the new approach is suitable for  data on different domains, which may also differ in dimension, such as functions and images. 
The key results are  1. a Karhunen-Lo\`{e}ve Theorem, that establishes the theoretical basis for MFPCA (Prop.~\ref{prop:KarhunenMultivariate}), 2. an explicit relation between multivariate and univariate FPCA, which serves as a starting point for the estimation (Prop.~\ref{prop:multi_single_EV}) and 3. asymptotic results for the estimators (Prop.~\ref{prop:asymptBias} and \ref{prop:asymptVar}).
The  estimation algorithm can be extended to expansions of the univariate elements in  not necessarily orthonormal bases. This allows to flexibly choose an appropriate basis for each element depending on the data structure, in particular also mixtures of univariate FPCA and general bases. The algorithm is applicable to  sparse data or data with measurement error, as well as to images. Notably, the proposed method can be used to calculate smooth univariate functional principal components for data on higher dimensional domains and is hence an alternative to existing methods for tensor PCA \citep{Allen:2013}. The results of MFPCA give insights into simultaneous variation within the data and provide a natural tool for dimension reduction. Moreover, they can be used as a building block for further statistical analyses such as functional clustering methods or functional principal component regression with multiple covariates \citep[cf.][for the univariate case]{MuellerStadtmueller:2005}. If the elements differ in domain, range or variation, the new method can incorporate weights, which should be chosen with respect to the question of interest and the data at hand.

Possible extensions of the approach include normalization methods as an alternative to the weighted scalar product, following the ideas in \citet{JacquesPreda:2014} or \citet{ChiouEtAl:2014} for functions observed on a common interval. However, one should take into account that the domains may have different dimensions and sizes.
The concept of MFPCA could further be extended to \textit{hybrid data}, i.e. data consisting of a functional and a vector part \citep[Chapter 10.3.]{RamsaySilverman:2005}. A natural starting point would be to extend the scalar product suggested by \cite{RamsaySilverman:2005} in this context to multivariate functional data as proposed in Prop.~\ref{prop:HisHilbert}. However, transferring the results for MFPCA shown in this paper requires a  careful revision of the concept of the covariance operator and related proofs.
Finally, one could think of estimating the multivariate covariance operator directly without computing a univariate decomposition for each element. This operator is typically high-dimensional, making smoothing as well as an eigendecomposition hardly feasible, which is avoided in our two-step approach.

\section*{Supplementary Material}
The online appendix contains detailed proofs for all propositions, some  additional simulation results and \texttt{R} code for reproducing the analysis for the ADNI and gait cycle data based on the \texttt{R} packages \texttt{fundata} and \texttt{MFPCA} 
\if1\blind
{
\citep{funData, MFPCA}.
}
\else
{.
}\fi

\newpage
% Bibliography

%\printbibliography

\bibliographystyle{apalike}
\bibliography{bibliography/literature_MFPCA}

\begin{thebibliography}{}

\bibitem[Allen, 2013]{Allen:2013}
Allen, G. (2013).
\newblock Multi-way functional principal components analysis.
\newblock In {\em IEEE 5th International Workshop on Computational Advances in
  Multi-Sensor Adaptive Processing (CAMSAP)}, pages 220--223.

\bibitem[Berrendero et~al., 2011]{Berrendero:2011}
Berrendero, J., Justel, A., and Svarc, M. (2011).
\newblock {Principal components for multivariate functional data}.
\newblock {\em Computational Statistics \& Data Analysis}, 55(9):2619--2634.

\bibitem[Blennow et~al., 2006]{Blennow:2006}
Blennow, K., Leon, M., and Zetterberg, H. (2006).
\newblock {Alzheimer’s disease}.
\newblock {\em Lancet}, 368:387--403.

\bibitem[Bosq, 2000]{Bosq:2000}
Bosq, D. (2000).
\newblock {\em {Linear Processes in Function Spaces}}.
\newblock Springer, New York.

\bibitem[Carroll and Chang, 1970]{CarrollChang:1970}
Carroll, J.~D. and Chang, J.~J. (1970).
\newblock {Analysis of individual differences in multidimensional scaling via
  an n-way generalization of "Eckart-Young" decomposition}.
\newblock {\em Psychometrika}, 35(3):283--319.

\bibitem[Cattell, 1966]{Cattell:1966}
Cattell, R. (1966).
\newblock The scree test for the number of factors.
\newblock {\em Multivariate Behavioral Research}, 1(2):245--276.

\bibitem[Chiou et~al., 2014]{ChiouEtAl:2014}
Chiou, J.-M., Yang, Y.-F., and Chen, Y.-T. (2014).
\newblock {Multivariate functional principal component analysis: A
  normalization approach}.
\newblock {\em Statistica Sinica}, 24:1571--1596.

\bibitem[Coppi and Bolasco, 1989]{CoppiBolasco:1989}
Coppi, R. and Bolasco, S., editors (1989).
\newblock {\em Multiway data analysis}.
\newblock North-Holland, Amsterdam.

\bibitem[Crainiceanu et~al., 2014]{refund}
Crainiceanu, C., Reiss, P., Goldsmith, J., Huang, L., Huo, L., and Scheipl, F.
  (2014).
\newblock {\em refund: Regression with Functional Data}.
\newblock R package version 0.1-14.

\bibitem[Eilers and Marx, 1996]{EilersMarx:1996}
Eilers, P. and Marx, B. (1996).
\newblock {Flexible Smoothing with B-splines and Penalties}.
\newblock {\em Statistical Science}, 11(2):89--121.

\bibitem[Hall and Horowitz, 2007]{HallHorowitz:2007}
Hall, P. and Horowitz, J.~L. (2007).
\newblock {Methodology and convergence rates for functional linear regression}.
\newblock {\em Annals of Statistics}, 35(1):70--91.

\bibitem[Hall and Hosseini-Nasab, 2006]{Hall:2006}
Hall, P. and Hosseini-Nasab, M. (2006).
\newblock {On properties of functional principal components analysis}.
\newblock {\em Journal of the Royal Statistical Society: Series B (Statistical
  Methodology)}, 68(1):109--126.

\bibitem[Happ, 2016a]{funData}
Happ, C. (2016a).
\newblock {\em funData: An S4 class for functional data in R}.
\newblock R package version 1.0.

\bibitem[Happ, 2016b]{MFPCA}
Happ, C. (2016b).
\newblock {\em MFPCA: Multivariate Functional Principal Component Analysis}.
\newblock R package version 1.0-1.

\bibitem[Horn and Johnson, 1991]{HornJohnson:1991}
Horn, R.~A. and Johnson, C.~R. (1991).
\newblock {\em Topics in matrix analysis}.
\newblock Cambridge Univ. Press, Cambridge.

\bibitem[Huang et~al., 2009]{HuangEtAl:2009}
Huang, J.~Z., Shen, H., and Buja, A. (2009).
\newblock {The Analysis of Two-Way Functional Data Using Two-Way Regularized
  Singular Value Decompositions}.
\newblock {\em Journal of the American Statistical Association},
  104(488):1609--1620.

\bibitem[Jacques and Preda, 2014]{JacquesPreda:2014}
Jacques, J. and Preda, C. (2014).
\newblock {Model-based clustering for multivariate functional data}.
\newblock {\em Computational Statistics \& Data Analysis}, 71:92--106.

\bibitem[Kato, 1976]{Kato:1980}
Kato, T. (1976).
\newblock {\em {Perturbation Theory for Linear Operators}}.
\newblock Springer, Berlin, 2. edition.

\bibitem[Lu et~al., 2013]{LuEtAl:2013}
Lu, H., Plataniotis, K.~N., and Venetsanopoulos, A. (2013).
\newblock {\em Multilinear Subspace Learning: Dimensionality Reduction of
  Multidimensional Data}.
\newblock CRC press, Boca Raton.

\bibitem[Mercer, 1909]{Mercer:1909}
Mercer, J. (1909).
\newblock {Functions of Positive and Negative Type, and their Connection with
  the Theory of Integral Equations}.
\newblock {\em Philosophical Transactions A}, 209(441-458):415--446.

\bibitem[Mueller et~al., 2005]{Mueller:2007}
Mueller, S., Weiner, M., Thal, L., Petersen, R., Jack, C., Jagust, W.,
  Trojanowski, J., Toga, A., and Beckett, L. (2005).
\newblock {Ways toward an early diagnosis in Alzheimer's disease: The
  Alzheimer's Disease Neuroimaging Initiative (ADNI).}
\newblock {\em Alzheimer's \& Dementia}, 1(1):55--66.

\bibitem[M\"{u}ller and Stadtm\"{u}ller, 2005]{MuellerStadtmueller:2005}
M\"{u}ller, H.~G. and Stadtm\"{u}ller, U. (2005).
\newblock {Generalized functional linear models}.
\newblock {\em Annals of Statistics}, 33(2):774--805.

\bibitem[Munkres, 2000]{Munkres:2000}
Munkres, J. (2000).
\newblock {\em Topology}.
\newblock Prentice-Hall, Upper Saddle River, NJ, 2. edition.

\bibitem[Ramsay and Silverman, 2005]{RamsaySilverman:2005}
Ramsay, J. and Silverman, B. (2005).
\newblock {\em Functional Data Analysis}.
\newblock Springer, New York, 2. edition.

\bibitem[Ramsay et~al., 2014]{fda}
Ramsay, J., Wickham, H., Graves, S., and Hooker, G. (2014).
\newblock {\em fda: Functional Data Analysis}.
\newblock R package version 2.4.4.

\bibitem[Reed and Simon, 1980]{ReedSimon:2005}
Reed, M. and Simon, B. (1980).
\newblock {\em {Methods of Modern Mathematical Physics. I: Functional
  Analysis}}.
\newblock Academic Press, San Diego, rev. and enl. edition.

\bibitem[Rudin, 1987]{Rudin:1987}
Rudin, W. (1987).
\newblock {\em Real and complex analysis}.
\newblock McGraw-Hill, New York, 3. edition.

\bibitem[Ruigrok et~al., 2014]{Ruigrok:2014}
Ruigrok, A., Salimi-Khorshidi, G., Lai, M.-C., Baron-Cohen, S., Lombardo, M.,
  Tait, R., and Suckling, J. (2014).
\newblock {A meta-analysis of sex differences in human brain structure.}
\newblock {\em Neuroscience and biobehavioral reviews}, 39:34--50.

\bibitem[Saporta, 1981]{Saporta:1981}
Saporta, G. (1981).
\newblock {\em {M\'{e}thodes exploratoires d'analyse de donn\'{e}es
  temporelles}}.
\newblock PhD thesis, Universit\'{e} Pierre et Marie Curie, Paris.

\bibitem[Tucker, 1966]{Tucker:1966}
Tucker, L.~R. (1966).
\newblock {Some Mathematical Notes on Three-Mode Factor Analysis}.
\newblock {\em Psychometrika}, 31(3):279--311.

\bibitem[Werner, 2011]{Werner:2011}
Werner, D. (2011).
\newblock {\em Funktionalanalysis}.
\newblock Springer, Berlin, 7. edition.

\bibitem[Yao et~al., 2005]{YaoEtAl:2005}
Yao, F., M\"uller, H.-G., and Wang, J.-L. (2005).
\newblock Functional data analysis for sparse longitudinal data.
\newblock {\em Journal of the American Statistical Association},
  100(470):577--590.

\bibitem[Yu et~al., 2015]{YuEtAl:2015}
Yu, Y., Wang, T., and Samworth, R.~J. (2015).
\newblock {A useful variant of the Davis-Kahan theorem for statisticians}.
\newblock {\em Biometrika}, 102(2):315--323.

\bibitem[Zemyan, 2012]{Zemyan:2012}
Zemyan, S. (2012).
\newblock {\em {The Classical Theory of Integral Equations}}.
\newblock Birkh\"{a}user, Basel.

\end{thebibliography}

% Appendix
 \section*{Proofs of Propositions}

\begin{proof}[Proof of Prop.~\ref{prop:HisHilbert}] \label{proof:HisHilbert}
$\mathcal{H}$ is a direct sum of the  Hilbert spaces  $L^2(\mathcal{T}_j),~j = 1 , \ldots ,  p$, with natural scalar product $\myscal{\cdot}{\cdot}$ \citep[cf.][Chapter II.1.]{ReedSimon:2005}
\end{proof}

\begin{proof}[Proof of Prop.~\ref{prop:GammaProperties}] \label{proof:GammaProperties} \leavevmode
\begin{enumerate}
\item $\Gamma$ is linear: Follows from the linearity of the scalar product in \eqref{eq:GammaFormula}.
\item $\Gamma$ is self-adjoint: Follows from the symmetry $C_{ij}(s_i, t_j) = C_{ji}(t_j, s_i) $.
\item $\Gamma$ is positive: Let $f \in \mathcal{H}$. Then
\begin{align*}
\myscal{f}{\Gamma f}
&=  \sum \nolimits_{j=1}^p \int_{\mathcal{T}_j} f^{(j)}(t_j)  \sum \nolimits  _{i=1}^p \int_{\mathcal{T}_i} \mathbb{E} \left(X^{(i)}(s_i)  X^{(j)}(t_j)\right) f^{(i)}(s_i)  \mathrm{d}  s_i~  \mathrm{d}  t_j \\
&=  \mathbb{E}  \left(  \sum \nolimits_{j=1}^p \int_{\mathcal{T}_j}  f^{(j)}(t_j)  X^{(j)}(t_j)  \mathrm{d}  t_j \right)^ 2 \geq 0.
\end{align*}

\item $\Gamma$ is compact: Let $\mathcal{B} := \{f \in \mathcal{H}: \mynorm{f}^2 \leq B \}$ be a bounded family in $\mathcal{H}$ for some constant $0 < B < \infty$.
Clearly, $\norm{f^{(j)}}^2 \leq B$ for all $j = 1 , \ldots ,  p$.
Define the image of $\mathcal{B}$ under $\Gamma$ by $\mathcal{Z} = \Gamma \mathcal{B} = \{ g \in \mathcal{H} : \exists~ f \in \mathcal{B} \text{ such that } g = \Gamma f\}$, which has the following properties:
\begin{itemize}
\item $\mathcal{Z}$ is uniformly bounded:
Let $g \in \mathcal{Z}$ and $\boldsymbol{t} \in \mathcal{T}$.
Define $K:= \max \nolimits_{i,j = 1 , \ldots ,  p} K_{ij}$ with $K_{ij}$ as in \eqref{eq:Kijconstants}. Then
\begin{align*}
\vecnorm{g(\boldsymbol{t})}^2 
&\leq \sum \nolimits_{j=1}^p  \left( \sum \nolimits_{i =1}^p  \int_{\mathcal{T}_i} \left \vert C_{ij}(s_i, t_j) f^{(i)}(s_i) \right \vert \mathrm{d}  s_i  \right) ^2\\
&\stackrel{\text{H\"older}}{\leq} \sum \nolimits_{j=1}^p \left( \sum \nolimits_{i =1}^p    \left[ \int_{\mathcal{T}_i}  C_{ij}(s_i, t_j)^2  \mathrm{d}  s_i \right]^{1/2}
\left[  \int_{\mathcal{T}_i}   f^{(i)}(s_i)^2  \mathrm{d}  s_i \right ]^{1/2}  \right) ^2\\
& \leq \sum \nolimits_{j=1}^p  \left(  \sum \nolimits_{i =1}^p  K ^{1/2} B ^{1/2} \right) ^2
 = p^3 K B < \infty.
\end{align*}
\item $\mathcal{Z}$ is equicontinuous: 
Denote by $\lambda(\mathcal{T}_j)$ the Lebesgue measure of $\mathcal{T}_j$ and let $T = \max \nolimits_{j = 1 , \ldots ,  p} \lambda(\mathcal{T}_j)$. For $\varepsilon  > 0$, define $\tilde{\varepsilon }:= \frac{\varepsilon }{p (pTB)^{1/2}}$. By the  continuity assumption for $C_{ij}(s_i, \cdot)$, there exist $ \delta_{ij} > 0$ such that
\begin{equation} \label{eq:CovCont}
\vecnorm{ t_j - t^\ast_j } < \delta_{ij}
\quad \Rightarrow \quad
\vert C_{ij}(s_i, t_j)  - C_{ij}(s_i, t_j^\ast) \vert < \tilde \varepsilon ,  \quad \forall~  s_i \in \mathcal{T}_i.
\end{equation}
for all $i,j = 1 , \ldots ,  p$. Set $\delta := \min \nolimits_{i,j = 1, \ldots, p} \delta_{ij}$ and let $\vecnorm{\boldsymbol{t} - \boldsymbol{t^\ast}}_{\mathcal{T}} < \delta$. Clearly, $\vecnorm{ t_j - t_j^ \ast } <\delta$ for all $j = 1, \ldots, p$ and for $g \in \mathcal{Z}$ it holds
\begin{align*}
&\vecnorm{g(\boldsymbol{t}) - g(\boldsymbol{t^\ast})}^2
 \leq  \sum \nolimits_{j=1}^p \left( \sum \nolimits_{i =1}^p \left \vert \int_{\mathcal{T}_i}   \left( C_{ij}(s_i, t_j)  -   C_{ij}(s_i, t_j^\ast)  \right) f^{(i)}(s_i) \mathrm{d}  s_i  \right \vert  \right)    ^2 \\
& \quad ~ ~ \stackrel{\text{H\"older}}{ \leq}  \sum \nolimits_{j=1}^p \left( \sum \nolimits_{i =1}^p 
\left[
\int_{\mathcal{T}_i}   \left( C_{ij}(s_i, t_j)  -  C_{ij}(s_i,t_j^\ast)  \right)^2  \mathrm{d}  s_i 
 \right]^{1/2}
\left[ \int_{\mathcal{T}_i}   f^{(i)}(s_i)^2 \mathrm{d}  s_i \right]^{1/2}
   \right)  ^2 \\
& \qquad \stackrel{\eqref{eq:CovCont}}{<}  \sum \nolimits_{j=1}^p \left( \sum \nolimits_{i =1}^p 
\left(
\int_{\mathcal{T}_i}  \tilde\varepsilon ^2 \mathrm{d}  s_i 
 \right)^{1/2}
 \norm{f^{(i)}}
   \right)  ^2 
    \leq p^3 T B \tilde\varepsilon ^2 = \varepsilon ^2.
\end{align*}
\end{itemize}
By the Theorem of Arzel\`{a}-Ascoli \citep[Thm. I.28. and related  notes for Chapter I]{ReedSimon:2005}, for each sequence $\{ f_n\}_{n \in \mathbb N }$ in $\mathcal{B}$ there exists a convergent subsequence $\{ g_{n(i)}  = \Gamma f_{n(i)}\}_{i \in \mathbb N }$ of the corresponding sequence  $\{g_n\}_{n \in \mathbb N }$ in $\mathcal{Z}$, which implies that $\Gamma$ is a compact operator \citep[cf.][Chapter VI.5.]{ReedSimon:2005}.
\end{enumerate}
\end{proof}

\begin{lemma} \label{lemma:eFunCont}
For fixed $m \in \mathbb N $ and $j  \in \{1 , \ldots ,  p\}$, the $j$-th element $\psi_m^{(j)}$ of the eigenfunction $\psi_m$ is continuous, if $\nu_m > 0$ and  $C_{ij}$ is uniformly continuous as in Prop.~\ref{prop:GammaProperties}.
\end{lemma}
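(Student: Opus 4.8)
The plan is to exploit the eigenfunction equation $\Gamma \psi_m = \nu_m \psi_m$ to obtain an explicit pointwise representation of $\psi_m^{(j)}$, and then to transfer the uniform continuity of the kernels $C_{ij}$ to $\psi_m^{(j)}$. Since $\nu_m > 0$, I would divide by $\nu_m$ and use \eqref{eq:GammaFormula} to write
\[
\psi_m^{(j)}(t_j) = \nu_m^{-1} (\Gamma \psi_m)^{(j)}(t_j) = \nu_m^{-1} \sum \nolimits_{i=1}^p \int_{\mathcal{T}_i} C_{ij}(s_i, t_j)\, \psi_m^{(i)}(s_i)\, \mathrm{d} s_i .
\]
The right-hand side is a genuine function of $t_j$, not merely an $L^2$-equivalence class, so it suffices to show that this particular representative is continuous; this is the one subtle point, and $\nu_m > 0$ is precisely what makes the division (and hence the upgrade from an $L^2$ object to a concrete function) possible.

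Next I would estimate the increment $\lvert \psi_m^{(j)}(t_j) - \psi_m^{(j)}(t_j^\ast) \rvert$ for $t_j, t_j^\ast \in \mathcal{T}_j$. Pulling the difference inside the integral, applying the triangle inequality over $i = 1, \ldots, p$, and then the Cauchy--Schwarz (Hölder) inequality in each summand exactly as in the proof of Prop.~\ref{prop:GammaProperties}, gives
\[
\lvert \psi_m^{(j)}(t_j) - \psi_m^{(j)}(t_j^\ast) \rvert \leq \nu_m^{-1} \sum \nolimits_{i=1}^p \left[\int_{\mathcal{T}_i} \left(C_{ij}(s_i,t_j) - C_{ij}(s_i, t_j^\ast)\right)^2\, \mathrm{d} s_i\right]^{1/2} \norm{\psi_m^{(i)}} .
\]
Here I would note that $\norm{\psi_m^{(i)}} \leq \mynorm{\psi_m} = 1$, since $\psi_m$ is a unit-norm element of the orthonormal basis of $\mathcal{H}$, so the eigenfunction factors are uniformly controlled independently of the points chosen.

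Finally I would invoke the uniform continuity hypothesis on $C_{ij}$: given $\varepsilon > 0$, setting $\tilde\varepsilon := \varepsilon \big(\nu_m \sum \nolimits_i (\lambda(\mathcal{T}_i))^{1/2}\big)^{-1}$ and choosing $t_j^\ast$ within $\delta := \min \nolimits_{i} \delta_{ij}$ of $t_j$ forces $\lvert C_{ij}(s_i, t_j) - C_{ij}(s_i, t_j^\ast) \rvert < \tilde\varepsilon$ uniformly in $s_i$, so that each integral is bounded by $\tilde\varepsilon^2 \lambda(\mathcal{T}_i)$. Summing and absorbing the finite constant then yields $\lvert \psi_m^{(j)}(t_j) - \psi_m^{(j)}(t_j^\ast) \rvert < \varepsilon$, establishing (uniform) continuity of $\psi_m^{(j)}$. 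The argument is essentially a direct computation mirroring the equicontinuity step of Prop.~\ref{prop:GammaProperties}; there is no real obstacle beyond the bookkeeping, the only genuinely conceptual point being the use of the eigenequation to produce the continuous representative.
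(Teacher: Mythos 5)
Your proposal is correct and takes essentially the same route as the paper's proof: divide the eigenequation by $\nu_m$ to get a pointwise representative of $\psi_m^{(j)}$, bound the increment via H\"older and $\norm{\psi_m^{(i)}} \leq \mynorm{\psi_m} = 1$, and invoke the uniform continuity of $C_{ij}$, exactly as in the paper (which merely applies the $\tilde\varepsilon$-bound before H\"older rather than after). The only slip is bookkeeping: your $\tilde\varepsilon$ should carry $\nu_m$ in the numerator rather than the denominator (the paper uses $\tilde\varepsilon = \varepsilon\nu_m\left(2\sum_{i}\lambda(\mathcal{T}_i)^{1/2}\right)^{-1}$), but since $\varepsilon>0$ is arbitrary this does not affect the conclusion.
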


\begin{proof}
Let $\varepsilon  > 0$ and $\tilde \varepsilon := \varepsilon \nu_m  \left(2 \sum_{j = 1}^p \lambda(\mathcal{T}_j)^{1/2} \right)^{-1}$. By the uniform continuity assumption for $C_{ij}$, there exist $\delta_{ij} > 0$ such that for all $i = 1, \ldots, p$ \eqref{eq:CovCont} holds.
Let $\delta_j = \min \nolimits_{i = 1, \ldots, p} \delta_{ij}$ and $\vecnorm{t_j - t_j^\ast} < \delta_j$. Then, as $\nu_m > 0$ and $ \norm{\psi_m^{(i)}} \leq \mynorm{\psi_m} = 1$,
\begin{align*}
& \left| \psi_m^{(j)}(t_j)  - \psi_m^{(j)}(t_j^\ast)\right|
 = \frac{1}{\nu_m}~  \left| \sum \nolimits _{i = 1}^p  \int_{\mathcal{T}_i}\left[C_{ij}(s_i, t_j) - C_{ij}(s_i, t_j^\ast) \right] \psi_m^{(i)}(s_i) \mathrm{d}  s_i\right| \\
& \qquad \leq \frac{1}{\nu_m}  \tilde \varepsilon~  \sum \nolimits _{i = 1}^p  \int_{\mathcal{T}_i}   \left|  \psi_m^{(i)}(s_i) \right| \mathrm{d}  s_i  \stackrel{\text{H\"older}}{\leq}  \frac{\tilde \varepsilon }{\nu_m}~  \sum \nolimits _{i = 1}^p \norm{\psi_m^{(i)}} \lambda(\mathcal{T}_i)^{1/2} \\
& \qquad \leq  \frac{\tilde \varepsilon }{\nu_m}~  \sum \nolimits _{i = 1}^p \lambda(\mathcal{T}_i)^{1/2} 
 = \frac{\varepsilon}{2} < \varepsilon.
\end{align*}
\end{proof}

\begin{proof}[Proof of Prop.~\ref{prop:Mercer}]
\label{proof:Mercer}
The proof follows the idea in \citet[Chapter VI.4.]{Werner:2011} for the proof of Mercer's Theorem in the univariate case.
From the Spectral Theorem for compact self-adjoint operators \citep[Thm. VI.3.2.]{Werner:2011}, it is known that
\[\Gamma f = \sum \nolimits_{m=1}^\infty \nu_m \myscal{f}{\psi_m} \psi_m \quad \forall~ f \in \mathcal{H}.\]
For $M \in \mathbb N $ define
\[\Gamma_M f := \sum \nolimits_{m=1}^M \nu_m \myscal{f}{\psi_m} \psi_m \quad \forall~ f \in \mathcal{H}.\]
Then for all $f \in \mathcal{H}: ~ \myscal{\Gamma f}{f} - \myscal{\Gamma_M f}{f} = \sum \nolimits_{m=M+1}^\infty \nu_m \myscal{f}{\psi_m} ^2 \geq 0$. 
Let $j \in \{1 , \ldots ,  p\}$ and $t^\ast \in \mathcal{T}_j$. Define $f = \left( 0 , \ldots ,  0, f^{(j)}, 0 , \ldots ,  0\right) $ with 
$f^{(j)} = \lambda \left(B_{t^\ast}\left(\frac{1}{n}\right)\right)^{-1}  \boldsymbol{1}_{B_{t^\ast}(\frac{1}{n})} $ 
  for some $n \in \mathbb N $, where $B_{t^\ast}(\frac{1}{n})$ is a closed ball in $\mathcal{T}_j$ with center $t^\ast$ and radius $\frac{1}{n}$, $\lambda(\cdot)$ denotes the Lebesgue measure and $\boldsymbol{1}$ is the indicator function. Clearly, $f ^{(j)} \in L^2(\mathcal{T}_j)$ and $f \in \mathcal{H}$. Therefore
\begin{align*}
0 
& \leq \myscal{\Gamma f}{f} - \myscal{\Gamma_M f}{f} \\
& =  \lambda \left(B_{t^\ast}\left(\tfrac{1}{n}\right)\right) ^{-2} \int_{B_{t^\ast}(\frac{1}{n})} \int_{B_{t^\ast}(\frac{1}{n})} C_{jj}(s_j, t_j)  - \sum \nolimits_{m=1}^M \nu_m  \psi_m^{(j)}(s_j) \psi_m^{(j)}(t_j) \mathrm{d}  s_j~  \mathrm{d}  t_j \\
& \to C_{jj}(t^\ast, t^\ast) - \sum \nolimits_{m=1}^M \nu_m   \psi_m^{(j)}(t^\ast) \psi_m^{(j)}(t^\ast) \quad \text{for} ~ n \to \infty
\end{align*}
by the Lebesgue Differentiation Theorem \citep[Thm. 7.10.]{Rudin:1987}. As $t^\ast$ was arbitrary in $\mathcal{T}_j$, this implies that for all $M \in \mathbb N $
\[\sum \nolimits_{m=1}^M \nu_m   \psi_m^{(j)}(t)^2  \leq  C_{jj}( t, t)  \leq \left\Vert  C_{jj} \right\Vert_\infty  < \infty \quad \forall~  t \in \mathcal{T}_j,\]
since $C_{jj}$ is continuous and $\mathcal{T}_j$ is compact, implying that  $\left\Vert  C_{jj}  \right\Vert_\infty:= \sup_{t \in \mathcal{T}_j} \left|C_{jj}(t,t)\right|$ is finite. Using H\"older's inequality
\[
\sum \nolimits _{m=1}^\infty \left | \nu_m \psi_m^{(j)}(s) \psi_m^{(j)}(t)  \right|
 \leq C_{jj}(s,s)^{1/2} C_{jj}(t,t)^{1/2}
 < \infty,
\]
i.e.  the series $\tilde C_j (s,t) := \sum \nolimits _{m=1}^\infty  \nu_m \psi_m^{(j)}(s) \psi_m^{(j)}(t) $ is absolutely convergent for all $s,t \in \mathcal{T}_j$.
In the following, assume  $t \in \mathcal{T}_j$ to be fixed.
For $\varepsilon  > 0$ choose $M \in \mathbb N $ such that 
$ \sum \nolimits _{m = M+1}^\infty \nu_m   \psi_m^{(j)}(t)^2 < \varepsilon ^2$. Then, again by H\"older's inequality
\begin{align}
\sum \nolimits _{m=M+1}^\infty \left | \nu_m \psi_m^{(j)}(s) \psi_m^{(j)}(t)  \right| 
& \leq  C_{jj}(s,s)^{1/2} \cdot \varepsilon 
\leq  \left\Vert  C_{jj} \right\Vert_\infty^{1/2}\cdot \varepsilon.
\label{eq:MercerUnifConv}
\end{align}
The upper bound in \eqref{eq:MercerUnifConv} does not depend on $s$, hence $\tilde C_j (s,t)$ converges uniformly for fixed $t$.
As the eigenfunctions $\psi_m^{(j)}(s)$ are continuous in $s$ for all $m \in \mathbb N $ (Lemma~\ref{lemma:eFunCont}), $\tilde C_j(s,t)$ is also continuous in $s$ \citep[Uniform Limit Theorem,][Thm. 21.6.]{Munkres:2000}.
Define
\[h_j(s) := C_{jj}(s,t) - \tilde C_j(s,t), \quad s \in \mathcal{T}_j.\]
Let now $g^{(j)} \in L^2(\mathcal{T}_j)$ and define $g := \left( 0 , \ldots ,  0, g^{(j)}, 0 , \ldots ,  0 \right)$, which is clearly in $\mathcal{H}$.
Therefore
\begin{align*}
& \int_{\mathcal{T}_j} h_j(s) g^{(j)}(s) \mathrm{d}  s
 = \sum \nolimits _{i = 1}^p \int_{\mathcal{T}_i} C_{ij}(s_i,t) g^{(i)}(s_i) \mathrm{d}  s_i - \sum \nolimits _{m=1}^\infty \nu_m  \sum \nolimits _{i = 1}^p \int_{\mathcal{T}_i}  \psi_m^{(i)}(s_i)   g^{(i)}(s_i) \mathrm{d}  s_i~\psi_m^{(j)}(t)  \\
& \qquad = (\Gamma g)^{(j)}(t) -  \sum \nolimits _{m=1}^\infty \nu_m   \myscal{\psi_m}{g} \psi_m^{(j)}(t) \\
& \qquad = \sum \nolimits_{m=1}^\infty \nu_m \myscal{g}{\psi_m} \psi_m^{(j)}(t) -  \sum \nolimits _{m=1}^\infty \nu_m   \myscal{\psi_m}{g} \psi_m^{(j)}(t)  = 0
\end{align*}
according to the Spectral Theorem.
Choosing $g^{(j)} = h_j$ implies $h_j(s) = 0$ for all $s \in \mathcal{T}_j$, as $h_j$ is continuous in $s$.
Therefore,
\[\tilde C_j(s,t) = \sum \nolimits _{m = 1}^\infty \nu_m \psi_m^{(j)}(s) \psi_m^{(j)}(t) = C_{jj}(s,t) \quad \forall~ s \in \mathcal{T}_j.\]
By Dini's Theorem \citep[Thm. VI.4.6.]{Werner:2011} the series $C_{jj}(t,t) = \sum \nolimits _{m = 1}^\infty \nu_m \psi_m^{(j)}(t)^2$ converges uniformly. Hence, $M$ can be chosen independent of $t$ in \eqref{eq:MercerUnifConv}. This implies that $\tilde C_j(s,t)$ converges absolutely and uniformly to $C_{jj}(s,t)$ for all $s,t \in \mathcal{T}_j$.
\end{proof}

\begin{proof}[Proof of Prop.~\ref{prop:KarhunenMultivariate}] \label{proof:KarhunenMultivariate}
By the Hilbert-Schmidt Theorem \citep[Thm. VI.16.]{ReedSimon:2005}, the (deterministic) eigenfunctions of $\Gamma$ form an orthonormal basis of $\mathcal{H}$, i.e. $X$ can be written in the form
$X(\boldsymbol{t}) = \sum   \nolimits_{m=1}^\infty \rho_m \psi_m(\boldsymbol{t}),~ \boldsymbol{t} \in \mathcal{T}$ with random variables $\rho_m = \myscal{X}{\psi_m}$. Hence for $m,n \in \mathbb N $
\begin{enumerate}
\item 
$\mathbb{E} \left(\rho_m \right)
= \displaystyle \sum \nolimits_{j = 1}^p \displaystyle \int_{\mathcal{T}_j} \mathbb{E}  \left(X^{(j)}(t_j) \right) \psi_m^{(j)}(t_j) \mathrm{d}  t_j
= 0,$
since $\mathbb{E}\left( X^{(j)}(t_j) \right) = 0$ for all $t_j \in~ \mathcal{T}_j$, $j = 1 , \ldots ,  p$ by assumption.
 \item $
\begin{aligned}[t]
& \operatorname{Cov}(\rho_m, \rho_n)
 = \mathbb{E} \left(
	\sum \nolimits_{i = 1}^p \int_{\mathcal{T}_i} X^{(i)}(s_i) \psi_m^{(i)}(s_i) \mathrm{d}  s_i
	\cdot \sum \nolimits_{j = 1}^p \int_{\mathcal{T}_j} X^{(j)}(t_j) \psi_n^{(j)}(t_j) \mathrm{d}  t_j
 \right) \\
 & \qquad = 
\sum   \nolimits_{j = 1}^p   \int_{\mathcal{T}_j}
\sum  \nolimits_{i = 1}^p    \int_{\mathcal{T}_i}
 C_{ij}(s_i, t_j)  \psi_m^{(i)}(s_i)  \mathrm{d}  s_i~ \psi_n^{(j)}(t_j) \mathrm{d}  t_j \\
     & 
 \qquad = \sum  \nolimits_{j = 1}^p    \int_{\mathcal{T}_j} \nu_m \psi_m^{(j)}(t_j)  \psi_n^{(j)}(t_j) \mathrm{d}  t_j = \nu_m \delta_{mn}.
\end{aligned}$

\item Let $X_{\lceil M \rceil}(\boldsymbol{t}):=  \displaystyle \sum  \nolimits_{m=1}^M  \rho_m \psi_m(\boldsymbol{t}) =  \displaystyle \sum  \nolimits_{m=1}^M
 \left[ \displaystyle \sum \nolimits_{i=1}^p \displaystyle \int_{\mathcal{T}_i} X^{(i)}(s_i) \psi_m^{(i)}(s_i)\mathrm{d}  s_i \right]  \psi_m(\boldsymbol{t})$ for $\boldsymbol{t} \in~\mathcal{T}$ be the truncated Karhunen-Lo\`{e}ve representation of $X$. Then
\begin{align*}
&\mathbb{E} \left( \vecnorm{X(\boldsymbol{t}) - X_{\lceil M \rceil}(\boldsymbol{t})}^2 \right) 
=  \sum  \nolimits_{j=1}^p \left[ \mathbb{E}  \left( X^{(j)}(t_j)^2 \right) - 2 \mathbb{E} \left( X^{(j)}(t_j)S_M^{(j)}(t_j) \right) + \mathbb{E} \left( S_M^{(j)}(t_j)^2 \right) \right]\\
  & \qquad =  \sum  \nolimits_{j=1}^p \Big[ C_{jj}(t_j, t_j)
 - 2 \sum  \nolimits_{m=1}^M \sum \nolimits_{i=1}^p \int_{\mathcal{T}_i} C_{ij}(s_i, t_j) \psi_m^{(i)}(s_i)\mathrm{d}  s_i~ \psi_m^{(j)}(t_j)  \\
 & \qquad ~ ~  + \sum   \nolimits_{m=1}^M    \sum  \nolimits_{n=1}^M   \sum \nolimits_{i=1}^p 
  \int_{\mathcal{T}_i}
  \sum  \nolimits_{k=1}^p 
  \int_{\mathcal{T}_k} 
 C_{ki}(u_k, s_i) 
 \psi_n^{(k)}(u_k)\mathrm{d}  u_k~
 \psi_m^{(i)}(s_i)\mathrm{d}  s_i~ \psi_m^{(j)}(t_j) \psi_n^{(j)}(t_j)  \Big]\\
   & \qquad =  \sum \nolimits_{j=1}^p \Big[ C_{jj}(t_j, t_j)
 - 2 \sum \nolimits_{m=1}^M \nu_m \psi_m^{(j)}(t_j)  \psi_m^{(j)}(t_j)  \\
 & \qquad ~ ~  + \sum   \nolimits_{m=1}^M    \sum  \nolimits_{n=1}^M   \sum \nolimits_{i=1}^p
  \int_{\mathcal{T}_i} 
 \nu_n \psi_n^{(i)}(s_i)
 \psi_m^{(i)}(s_i)\mathrm{d}  s_i  \psi_m^{(j)}(t_j)\psi_n^{(j)}(t_j) \Big] \\
     & \qquad=  \sum   \nolimits_{j=1}^p \left[ C_{jj}(t_j, t_j)
 -  \sum  \nolimits_{m=1}^M \nu_m \psi_m^{(j)}(t_j) ^2 \right] \to 0 \quad \text{for} ~ M \to \infty
\end{align*}
uniformly for $\boldsymbol{t} \in \mathcal{T}$ by Prop.~\ref{prop:Mercer}.
\end{enumerate}
\end{proof}

\begin{proof}[Proof of Prop.~\ref{prop:multi_single_EV}] \label{proof:multi_single_EV}
\leavevmode
\begin{enumerate}
\item

Let $X$ have a finite Karhunen-Lo\`{e}ve representation \eqref{eq:truncKH}. Then, each element is given by $X^{(j)} = \sum \nolimits _{m =1}^M \rho_m \psi_m^{(j)}$. For $t \in \mathcal{T}_j$
\begin{align}
\left(\Gamma^{(j)}  \tilde \phi^{(j)} \right)(t)
&= \int_{\mathcal{T}_j} \operatorname{Cov}(X^{(j)}(s), X^{(j)}(t)) \tilde \phi^{(j)}(s) \mathrm{d}  s  \nonumber \\
& = \int_{\mathcal{T}_j} \sum \nolimits  _{m=1}^M \nu_m \psi_m^{(j)}(s) \psi_m^{(j)}(t) \tilde \phi^{(j)}(s) \mathrm{d}  s
  \stackrel{!}{=} \lambda^{(j)} \tilde \phi^{(j)}(t),
 \label{eq:eigeneqSingle}
\end{align}
which is a homogenous Fredholm integral equation of the second kind with separable kernel function $K(s,t) = \sum \nolimits  _{m=1}^M \nu_m \psi_m^{(j)}(s) \psi_m^{(j)}(t) = \sum \nolimits  _{m=1}^M a_m(s) b_m(t)$ with continuous functions $a_m(s) = \nu_m^{1/2} \psi_m^{(j)}(s),~ b_m(t) = \nu_m^{1/2} \psi_m^{(j)}(t)$ (cf. Lemma \ref{lemma:eFunCont}). Following the argumentation in \citet[Chapter 1.3.]{Zemyan:2012}, \eqref{eq:eigeneqSingle} can be transformed into the matrix eigenequation
\[\boldsymbol{A^{(j)}u} = \lambda^{(j)} \boldsymbol{u}\]
with a symmetric matrix $\boldsymbol{A^{(j)}} \in \mathbb R ^{M \times M}$ given by $A_{mn}^{(j)} = \scal{a_m}{b_n}$. Positivity of $\Gamma^{(j)}$ implies $\lambda^{(j)} \geq 0$ and therefore $\boldsymbol{A^{(j)}}$ is positive semidefinite. Hence it can have at most $M$ strictly positive eigenvalues $\lambda_m ^{(j)}$ associated with eigenvectors $\boldsymbol{u_m^{(j)}} \in \mathbb R ^M$
\[\boldsymbol{A^{(j)} u_m^{(j)}} = \lambda_m^{(j)} \boldsymbol{u_m^{(j)}}.\]
Let $\lambda_1^{(j)} \geq \ldots \geq \lambda_{M_j}^{(j)} > 0,~  M_j \leq M$ be the non-zero eigenvalues of $\boldsymbol{A^{(j)}}$. They are at the same time the only non-zero eigenvalues of $\Gamma^{(j)}$. The associated eigenfunctions $\tilde \phi_m^{(j)}$ of $\Gamma^{(j)}$ are parametrized by the eigenvectors $\boldsymbol{u_m^{(j)}}$ via \citep[Chapter 1.3.]{Zemyan:2012}
\[\Gamma^{(j)} \tilde \phi_m^{(j)} = \sum \nolimits _{n=1}^M \nu_n^{1/2} \psi_n^{(j)}  [\boldsymbol{u_m^{(j)}}]_n = \lambda_m^{(j)} \tilde \phi^{(j)}_m
\quad
\Leftrightarrow
\quad
 \tilde \phi^{(j)}_m
= \left({\lambda_m^{(j)}}\right)^{-1} \sum \nolimits _{n=1}^M \nu_n^{1/2} \psi_n^{(j)} [\boldsymbol{u_m^{(j)}}]_n
.\]
Since
\[
\scal{\tilde \phi_m^{(j)}}{\tilde \phi_n^{(j)}}
 =\left(\lambda_m^{(j)}\lambda_n^{(j)}\right)^{-1} \boldsymbol{ {u_m^{(j)}} ^\top  A^{(j)} u_n ^{(j)}}
 =\left(\lambda_m^{(j)} \right)^{-1} \delta_{mn},
\]
orthonormal eigenfunctions $\phi_m^{(j)}$ are given by
\[ \phi_m^{(j)} = {\lambda_m^{(j)}}^{1/2} \tilde \phi_m ^{(j)} = \left(\lambda_m^{(j)} \right)^{-1/2} \sum \nolimits _{n=1}^M \nu_n^{1/2} \psi_n^{(j)} [\boldsymbol{u_m^{(j)}}]_n.\]
Therefore, $X^{(j)}$ has a finite Karhunen-Lo\`{e}ve representation
$X^{(j)} = \sum \nolimits _{m=1}^{M_j} \xi_m^{(j)} \phi_m^{(j)}$
with scores 
\[\xi_m^{(j)} 
= \scal{X^{(j)}}{\phi_m^{(j)}}
= \left(\lambda_m^{(j)}\right)^{-1/2} \sum \nolimits _{n = 1}^M \nu_n^{1/2} \left[\boldsymbol{u_m^{(j)}} \right]_n \sum \nolimits _{k = 1}^M \rho_k \scal{\psi_n^{(j)}}{\psi_k^{(j)}}
\]
and $\mathbb{E}(\xi_m^{(j)}) = 0,~ \operatorname{Cov}(\xi_m^{(j)}, \xi_n^{(j)}) = \lambda_m^{(j)} \delta_{mn}$.

\item
Assume the functional covariates $X^{(1)} , \ldots ,  X^{(p)}$ do each have a finite Karhunen-Lo\`{e}ve representation, i.e. for each $j = 1, \ldots, p: ~X^{(j)} = \sum \nolimits _{m = 1}^{M_j} \xi_m^{(j)} \phi_m^{(j)}$. Let $\Gamma \psi = \nu \psi$. Then for all $j = 1 , \ldots ,  p$ and $t_j \in \mathcal{T}_j$
\begin{align*}
\left( \Gamma \psi \right)^{(j)}(t_j)  
&= \sum \nolimits _{k=1}^p \int_{\mathcal{T}_k} \operatorname{Cov} \left(X^{(k)}(s_k), X^{(j)}(t_j) \right) \psi ^{(k)}(s_k) \mathrm{d}  s_k \\
&= \sum \nolimits _{k=1}^p  \sum \nolimits _{l = 1}^{M_j} \sum \nolimits _{n = 1}^{M_k}   \underbrace{\operatorname{Cov} \left( \xi_l^{(j)}, \xi_n^{(k)} \right)}_{=: Z_{ln}^{(jk)}}  \phi_l^{(j)}(t_j)   \underbrace{\int_{\mathcal{T}_k}  \phi_n^{(k)}(s_k)  \psi ^{(k)}(s_k) \mathrm{d}  s_k}_{=: c_n^{(k)}} 
\stackrel{!}{=} \nu \psi ^{(j)}(t_j).
\end{align*}
With a similar argumentation as in \citet[Chapter 1.3.]{Zemyan:2012} it holds
\begin{align}
& \sum \nolimits _{k=1}^p  \sum \nolimits _{l = 1}^{M_j}  \sum \nolimits _{n = 1}^{M_k}  Z_{ln}^{(jk)}  \phi_l^{(j)}(t_j)  c_n^{(k)} = \nu \psi ^{(j)}(t_j) \label{eq:eVecsMulti}\\
\Rightarrow &
\int_{\mathcal{T}_j} \phi_m^{(j)}(t_j) \cdot \sum \nolimits _{k=1}^p \sum \nolimits _{l = 1}^{M_j}  \sum \nolimits _{n = 1}^{M_k}   Z_{ln}^{(jk)}  \phi_l^{(j)}(t_j)  c_n^{(k)} \mathrm{d}  t_j = \int_{\mathcal{T}_j} \phi_m^{(j)}(t_j) \cdot \nu \psi ^{(j)}(t_j) \mathrm{d}  t_j \nonumber \\
\Leftrightarrow &
\sum \nolimits _{k=1}^p  \sum \nolimits _{n = 1}^{M_k}   Z_{mn}^{(jk)}  c_n^{(k)} = \nu c_m^{(j)}  \nonumber
\end{align}
for $m  = 1 , \ldots ,  M_j$ due to orthonormality of $\phi_m^{(j)}$.  Since $m$ and $j$  were arbitrarily chosen, this is equivalent to 
\[
\underbrace{
\begin{pmatrix}
\boldsymbol{Z^{(11)} }& \ldots  & \boldsymbol{Z^{(1p)}} \\
\vdots& \ddots & \vdots \\
\boldsymbol{Z^{(p1)}}& \ldots & \boldsymbol{Z^{(pp)}}
\end{pmatrix}
}_{=:\boldsymbol{Z}}
\underbrace{
\begin{pmatrix}
\boldsymbol{c^{(1)}} \\
\vdots \\
\boldsymbol{c^{(p)}}
\end{pmatrix}}_{=:\boldsymbol{c}}
=
\nu
\begin{pmatrix}
\boldsymbol{c^{(1)}} \\
\vdots \\
\boldsymbol{c^{(p)}}
\end{pmatrix}
\] 
with matrices $\boldsymbol{Z^{(jk)}} \in \mathbb R ^{M_j \times M_k}$ and $\boldsymbol{c^{(j)}} \in \mathbb R ^{M_j}$. The last equation is again an eigenequation for the symmetric (and since $\nu \geq 0$) positive semidefinite block matrix $\boldsymbol{Z} \in \mathbb R ^{M_+ \times M_+}$.
Let $\nu_1 \geq \ldots \geq \nu_M > 0,~ M \leq M_+$ be the non-zero eigenvalues of $\boldsymbol{Z}$. These are also the only non-zero eigenvalues of $\Gamma$ and the elements $\psi_m^{(j)}$ of the associated eigenfunctions $\psi_m$ are parametrized by the (orthonormal) eigenvectors $\boldsymbol{c_1} , \ldots ,  \boldsymbol{c_M}$ associated with $\nu_1 , \ldots ,  \nu_M$:
\[
 \psi_m^{(j)}(t_j)
\stackrel{\eqref{eq:eVecsMulti}}{=} \frac{1}{\nu_m}
\sum \nolimits _{k=1}^p  \sum \nolimits _{l = 1}^{M_j} \sum \nolimits _{n = 1}^{M_k}   Z_{ln}^{(jk)}   [\boldsymbol{c_m}]^{(k)}_n  \phi_l^{(j)}(t_j) 
= \sum \nolimits _{l = 1}^{M_j} [\boldsymbol{c_m}]_l^{(j)} \phi_l^{(j)}(t_j), \quad t_j \in \mathcal{T}_j
\]
for $m = 1 , \ldots ,  M,~ j = 1 , \ldots ,  p.$ The eigenfunctions form an orthonormal system with respect to $\myscal{\cdot}{\cdot}$:
\[
\myscal{\psi_m}{\psi_n}
= \sum \nolimits _{j = 1}^p \scal{\psi_m^{(j)}}{\psi_n^{(j)}} 
= \sum \nolimits _{j = 1}^p \sum \nolimits _{l = 1}^{M_j} [\boldsymbol{c_m}]^{(j)}_{l} [\boldsymbol{c_n}]^{(j)}_{l}  
= \boldsymbol{c_n} ^\top  \boldsymbol{c_m} = \delta_{mn}.
\]
The Karhunen-Lo\`{e}ve decomposition of $X$ is therefore given by $X = \sum \nolimits  _{m=1}^M \rho_m \psi_m$ with scores
\[
\rho_m
 = \myscal{X}{\psi_m} 
= \sum \nolimits _{j = 1}^p \sum \nolimits _{n = 1}^{M_j}\left[\boldsymbol{c_m}\right]_n^{(j)}  \xi_n^{(j)}
\]
and $\mathbb{E}(\rho_m)=0,~\operatorname{Cov}(\rho_m, \rho_n) = \nu_m \delta_{mn},~ m = 1 , \ldots ,  M \leq M_+$.
\end{enumerate}
\end{proof}

\begin{proof}[Proof of Prop.~\ref{prop:asymptBias}]  \label{proof:asymptBias}

For $f \in \mathcal{H},~\boldsymbol{t} \in \mathcal{T}$ and $j = 1 , \ldots,  p$, the covariance operator $\Gamma^{[M]}$ associated with $X^{[M]}$ is given by
\[(\Gamma^{[M]} f)^{(j)}(t_j) = \sum_{i = 1}^p \int_{\mathcal{T}_i} \operatorname{Cov}(X^{[M](i)}(s_i),  X^{[M](j)}(t_j)) f^{(i)}(s_i) \mathrm{d} s_i. \]
In the following, use $C^{[M]}_{ij}(s_i, t_j) := \operatorname{Cov}(X^{[M](i)}(s_i),  X^{[M](j)}(t_j))$ as short notation for the covariance functions (cf. the definition of $C_{ij}$ in \eqref{eq:CovSymm} in the paper).
Next, recall some well-known results for univariate functional data: By Mercer's Theorem \citep{Mercer:1909}
\begin{equation}
C^{[M]}_{jj}(t_j, t_j) =  \sum_{m = 1}^{M_j} \lambda_m^{(j)} \phi_m^{(j)}(t_j)^2 \nearrow \sum_{m = 1}^\infty \lambda_m^{(j)}   \phi_m^{(j)}(t_j)^2 = C_{jj}(t_j, t_j) ~ \text{for}~M_j \to \infty,~t_j \in \mathcal{T}_j.
\label{eq:varMtoVar}
\end{equation} 
The univariate Karhunen-Lo{\`e}ve Theorem \citep[e.g.][Thm 1.5.]{Bosq:2000} states that
\[\mathbb{E}\left[ \left|X^{(j)}(t_j) - \sum\nolimits_{m = 1}^{M_j} \xi_m^{(j)} \phi_m^{(j)}(t_j) \right|^2 \right]\] converges uniformly to 0 for $t_j \in \mathcal{T}_j$ and $M_j \to \infty$. As both $X^{(j)}$ and $X^{[M](j)}$ have zero mean ($X^{(j)}$ by assumption and  $X^{[M](j)}$ since the scores $\xi_m^{(j)}$ have zero mean), this implies 
\begin{equation}
\operatorname{Var}\left(X^{(j)}(t_j) - X^{[M](j)}(t_j)\right) \to 0 \quad \text{for}~M_j \to \infty.
\label{eq:varDiffto0}
\end{equation}
With the assumptions of Prop.~\ref{prop:GammaProperties} it further holds (cf. proof of Prop.~\ref{prop:Mercer}) that
\begin{equation}
 \operatorname{Var}(X^{(j)}(t_j))= C_{jj}(t_j,t_j) \leq \left \Vert C_{jj} \right \Vert_\infty< \infty.
\label{eq:varFinite}
\end{equation}
For fixed $s_i \in \mathcal{T}_i,~t_j \in \mathcal{T}_j$ with $i,j = 1,\ldots,p$, these three properties give
 \begin{align*}
&  \left|C_{ij}(s_i, t_j) - C^{[M]}_{ij}(s_i, t_j) \right| \\
& \quad \leq  \left|\operatorname{Cov}(X^{(i)}(s_i) - X^{[M](i)}(s_i),  X^{(j)}(t_j)) \right| + \left|\operatorname{Cov}(X^{[M](i)}(s_i), X^{(j)}(t_j) -  X^{[M](j)}(t_j)) \right| \\
   & \quad \stackrel{\eqref{eq:varMtoVar}}{\leq}  \underbrace{\operatorname{Var}(X^{(i)}(s_i) - X^{[M](i)}(s_i))^{1/2}}_{\to 0~\text{for}~M_i \to \infty ~ \eqref{eq:varDiffto0}} \underbrace{C_{jj}(t_j, t_j)^{1/2}}_{ < \infty  ~ \eqref{eq:varFinite}} + \underbrace{C_{ii}(s_i, s_i))^{1/2}}_{< \infty ~ \eqref{eq:varFinite} }  \underbrace{\operatorname{Var}( X^{(j)}(t_j) -  X^{[M](j)}(t_j))^{1/2}}_{\to 0 ~\text{for}~M_j \to \infty ~ \eqref{eq:varDiffto0}}
 \end{align*}
Hence it holds that
\begin{equation}
C^{[M]}_{ij}(s_i, t_j) \to  C_{ij}(s_i, t_j) \quad \text{for}~M_i, M_j \to \infty.
 \label{eq:convCov}
\end{equation}
The main proof is now in three steps:
\begin{enumerate}
\item $\Gamma^{[M]}$ converges in norm to $\Gamma$ for $M_1, \ldots, M_p \to \infty$: Let $\mynorm{\cdot}_\text{op}$ be the operator norm induced by $\mynorm{\cdot}$. Then
\begin{align}
\mynorm{\Gamma - \Gamma^{[M]}}_\text{op}^2
& = \sup_{\mynorm{f} = 1} \sum_{j = 1}^p \int_{\mathcal{T}_j} \left[ \sum_{i = 1}^p \int_{\mathcal{T}_i} \left(C_{ij}(s_i, t_j) -C_{ij}^{[M]}(s_i, t_j)\right) f^{(i)}(s_i) \mathrm{d} s_i \right]^2 \mathrm{d} t_j \nonumber \\
& \leq \sup_{\mynorm{f} = 1} \sum_{j = 1}^p \int_{\mathcal{T}_j} \left[ \sum_{i = 1}^p \int_{\mathcal{T}_i} \left| (C_{ij}(s_i, t_j) -C_{ij}^{[M]}(s_i, t_j)) f^{(i)}(s_i) \right| \mathrm{d} s_i \right]^2 \mathrm{d} t_j \nonumber \\
& \stackrel{\text{H{\"o}lder}}{\leq}
\sup_{\mynorm{f} = 1} \sum_{j = 1}^p \int_{\mathcal{T}_j} \left[ \sum_{i = 1}^p \norm{ C_{ij}(\cdot,t_j) -C_{ij}^{[M]}(\cdot, t_j)} \norm{f^{(i)}}\right]^2 \mathrm{d} t_j \nonumber \\
& \leq \sum_{j = 1}^p \int_{\mathcal{T}_j} \left[ \sum_{i = 1}^p \norm{ C_{ij}(\cdot, t_j) -C_{ij}^{[M]}(\cdot, t_j)} \right]^2 \mathrm{d} t_j,
\label{eq:normConv}
\end{align}
where the last equality holds since $\norm{f^{(i)}} \leq 1$ for all $f \in \mathcal{H}$ with $\mynorm{f} =1$. The final bound for $\mynorm{\Gamma - \Gamma^{[M]}}_\text{op}^2$ converges to zero for $M_1 , \ldots,  M_p \to \infty$ by \eqref{eq:convCov}, applying the dominated convergence theorem twice: For the norm term in \eqref{eq:normConv} consider
\[
\left|C_{ij}(s_i, t_j) - C^{[M]}_{ij}(s_i, t_j) \right|
 \leq \left|C_{ij}(s_i, t_j)\right| + \left| C^{[M]}_{ij}(s_i, t_j) \right|   \stackrel{\eqref{eq:varMtoVar}}{\leq} 2 C_{ii}(s_i, s_i)^{1/2} C_{jj}(t_j, t_j)^{1/2} ,
\]
thus $\left|C_{ij}(s_i, t_j) - C^{[M]}_{ij}(s_i, t_j) \right|^2
  \stackrel{\eqref{eq:varFinite}}{\leq} 4 \left \Vert C_{ii} \right \Vert_\infty\left \Vert C_{jj} \right \Vert_\infty  < \infty$. This upper bound is constant and therefore integrable over $\mathcal{T}_i$, which implies
\[\lim_{M_i \to \infty}
\norm{ C_{ij}(\cdot, t_j) -C_{ij}^{[M]}(\cdot, t_j)}
= \norm{  C_{ij}(\cdot, t_j) - \lim_{M_i \to \infty}C_{ij}^{[M]}(\cdot, t_j)}.\]

For the outer integral in \eqref{eq:normConv} the results of the norm term give
\begin{align*}
\left(\sum_{i = 1}^p  \norm{  C_{ij}(\cdot, t_j) -C_{ij}^{[M]}(\cdot, t_j)} \right)^2 
& \leq  4 \left \Vert C_{jj} \right \Vert_\infty \left(\sum_{i = 1}^p  \left( \left \Vert C_{ii} \right \Vert_\infty \lambda({\mathcal{T}_i})\right)^{1/2} \right)^2 ,
\end{align*}
where $\lambda({\mathcal{T}_i})$ is the Lebesuge measure of $\mathcal{T}_i$ as in the Proof of Prop.~\ref{prop:GammaProperties}. The term on the right hand side is constant and hence integrable over $\mathcal{T}_j$, which gives that for $M_1, \ldots, M_p \to \infty$, the limit $M_j \to \infty$ and the integral over $\mathcal{T}_j$ in \eqref{eq:normConv} can be interchanged.
In summary, these results give that $\Gamma^{[M]}$ converges to $\Gamma$ in norm for $M_1, \ldots, M_p \to \infty$.

\item $\Gamma^{[M]}$ is bounded: Let $f \in \mathcal{H}$. Clearly, $\norm{f^{(i)}} \leq \mynorm{f}$ for all $i = 1, \ldots, p$ and therefore
\begin{align*}
\mynorm{\Gamma^{[M]}f }^2 
& = \sum_{j = 1}^p \int_{\mathcal{T}_j} \left(\sum_{i = 1}^p \int_{\mathcal{T}_i} C_{ij}^{[M]}(s_i, t_j) f^{(i)}(s_i) \mathrm{d} s_i \right)^2 \mathrm{d} t_j \\
&  \stackrel{\text{H{\"o}lder}}{\leq} \sum_{j = 1}^p \int_{\mathcal{T}_j} \left(\sum_{i = 1}^p \left(\int_{\mathcal{T}_i} \left|C_{ij}^{[M]}(s_i, t_j)\right|^2 \mathrm{d} s_i \right)^{1/2} \left(\int_{\mathcal{T}_i} \left|f^{(i)}(s_i)\right|^2 \mathrm{d} s_i \right)^{1/2} \right)^2 \mathrm{d} t_j \\
&  \stackrel{\eqref{eq:varMtoVar} \eqref{eq:varFinite}}{\leq} \sum_{j = 1}^p \int_{\mathcal{T}_j} \left(\sum_{i = 1}^p \left \Vert C_{ii} \right \Vert_\infty^{1/2} \left \Vert C_{jj} \right \Vert_\infty^{1/2} \lambda(\mathcal{T}_i)^{1/2}  \norm{f^{(i)}} \right)^2 \mathrm{d} t_j \\
&  \leq \sum_{j = 1}^p \left \Vert C_{jj} \right \Vert_\infty\lambda(\mathcal{T}_j) \left(\sum_{i = 1}^p  \left \Vert C_{ii} \right \Vert_\infty^{1/2} \lambda(\mathcal{T}_i)^{1/2}   \right)^2  \mynorm{f}^2
 \leq p^3 C^2 T^2 \mynorm{f}^2,
\end{align*}
for $T=\max_{j = 1, \ldots,  p} \lambda(\mathcal{T}_j)$ and $C = \max_{j = 1, \ldots,  p} \left \Vert C_{jj} \right \Vert_\infty$. The value $p^{3/2} C T$ is constant and finite, hence $\Gamma^{[M]}$ is bounded.

\item Convergence results for $\nu_m^{[M]}, \psi_m^{[M]}$ and $\rho_m^{[M]}$: In Prop.~\ref{prop:GammaProperties}, it was shown that $\Gamma$ is compact, which implies that this operator is also bounded \citep[][Chapter VI.5.]{ReedSimon:2005}. As $\Gamma$ and $\Gamma^{[M]}$ are both bounded, norm convergence is equivalent to convergence in the generalized sense \citep[Chapter IV, \S 2.6., Thm. 2.23]{Kato:1980}. This implies that the eigenvalues $\nu_m^{[M]}$ of $\Gamma^{[M]}$ converge to the eigenvalues $\nu_m$ of $\Gamma$ including multiplicity (if the multiplicity is finite, which holds for all nonzero eigenvalues, as $\Gamma$  is compact \citep[cf.][Thm. VI.15.]{ReedSimon:2005}) and the associated total projections converge in norm \citep[Chapter IV, \S 3.5.]{Kato:1980}. If the $m$-th eigenvalue has multiplicity $1$, then the projections on the eigenspaces spanned by $\psi_m$ and $\psi^{[M]}_m$, respectively, are given by
\[P_m f = \myscal{\psi_m}{f}\psi_m,\quad P_m^{[M]} f = \myscal{\psi_m^{[M]} }{f}\psi_m^{[M]},\quad f \in \mathcal{H}.\]
Without loss of generality one may choose the orientation of $\psi_m$ and $\psi_m^{[M]}$ such that $\myscal{\psi_m}{\psi_m^{[M]}} \geq 0$. In this case, as $\psi_m,\psi_m^{[M]}$ both have norm $1$,
\begin{align*}
\mynorm{P_m - P^{[M]}_m}_\text{op} ^2
& \geq \mynorm{\myscal{\psi_m}{\psi_m}\psi_m - \myscal{\psi_m^{[M]} }{\psi_m}\psi_m^{[M]}}^2 
 = \mynorm{\psi_m - \myscal{\psi_m^{[M]} }{\psi_m}\psi_m^{[M]}}^2 \\
& = \mynorm{\psi_m}^2 -2\myscal{\psi_m}{\myscal{\psi_m^{[M]} }{\psi_m}\psi_m^{[M]}}
+ \myscal{\psi_m^{[M]} }{\psi_m}^2 \mynorm{\psi_m^{[M]}}^2\\
& = (1  -\myscal{\psi_m^{[M]} }{\psi_m})(1  +\myscal{\psi_m^{[M]} }{\psi_m}) 
 \geq (1  -\myscal{\psi_m^{[M]} }{\psi_m}) \\
& =  \frac{1}{2} \cdot \mynorm{\psi_m - \psi_m^{[M]}}^2.
\end{align*}
Norm convergence of the total projections hence implies $\mynorm{\psi_m -\psi_m^{[M]}} \to 0$  for $M_1, \ldots, M_p \to \infty$.

To derive convergence of the scores $\rho_m^{[M]}$, note that for $\varepsilon > 0$  and $c:= \left(\frac{2}{\varepsilon} \sum_{j=1}^p \left \Vert C_{jj} \right \Vert_\infty\lambda(\mathcal{T}_j) \right)^{1/2}$
\begin{align}
P\left( \mynorm{X} > c \right)
& \stackrel{\text{Markov}}{\leq} \frac{1}{c^2}\mathbb{E}\left[ \mynorm{X}^2 \right]
\stackrel{\text{Fubini}}{=} \frac{1}{c^2} \sum_{j = 1}^p \int_{\mathcal{T}_j} \mathbb{E} \left[X^{(j)}(t_j)^2 \right] \mathrm{d} t_j \nonumber \\
& =  \frac{1}{c^2} \sum_{j = 1}^p \int_{\mathcal{T}_j} C_{jj}(t_j, t_j)\mathrm{d} t_j 
 \leq   \frac{1}{c^2} \sum_{j = 1}^p \left \Vert C_{jj} \right \Vert_\infty\lambda(\mathcal{T}_j) = \frac{\varepsilon}{2} < \varepsilon,
 \label{eq:XnormBounded}
\end{align}
i.e. the norm of $X$ is bounded in probability. Moreover, 
\begin{align*}
\mathbb{E} \left[\mynorm{X-X^{[M]}}^2 \right]
& \stackrel{\text{Fubini}}{=} \sum_{j = 1}^p \int_{\mathcal{T}_j} \mathbb{E} \left[ \left|X^{(j)}(t_j)-X^{[M](j)}(t_j) \right|^2\right] \mathrm{d} t_j \to 0
\end{align*}
for $M_1 , \ldots, M_p \to \infty$, as the expectation in the integral converges uniformly to $0$ and is thus bounded (by univariate Karhunen-Lo{\`e}ve). As $\mathcal{T}_j$ has finite measure, the overall integral converges to $0$. Hence $\mynorm{X-X^{[M]}}$ converges in the second mean to $0$, thus $\mynorm{X-X^{[M]}} = o_p(1)$.
Finally, this leads to
\begin{align*}
\left|\rho_m - \rho^{[M]}_m \right|
&= 
 \left| \myscal{X}{\psi_m} - \myscal{X^{[M]}}{\psi_m^{[M]}}\right|
  \leq   \left| \myscal{X}{\psi_m-\psi_m^{[M]}} \right|
 +  \left| \myscal{X-X^{[M]}}{\psi_m^{[M]}}\right| \\
& \leq    \mynorm{X}\mynorm{\psi_m-\psi_m^{[M]}} + 
\mynorm{X-X^{[M]}}\mynorm{\psi_m^{[M]}}
 = O_p(1) o(1) + o_p(1) = o_p(1).
\end{align*}
i.e. $\rho^{[M]}_m$ converges in probability to $ \rho_m$. 
\end{enumerate}
\end{proof}

\begin{lemma}{} \label{lemma:covMatrixRate}

Under the assumptions of Prop.~\ref{prop:asymptVar} it holds that
\[\lambda_{\max}(\boldsymbol{Z} - \boldsymbol{\hat Z}) \leq O_p(M_{\max}\max(N^{-1/2} , \Delta_M r_N^\Gamma)).\]
with $\boldsymbol{Z}$ as defined in Prop.~\ref{prop:multi_single_EV}, $ \boldsymbol{\hat Z} = (N-1)^{-1} \boldsymbol{\Xi}^\top \boldsymbol{\Xi}$ as in Section~\ref{sec:estMFPCA}, $M_{\max} = \max_{j = 1 , \ldots,  p}M_j$ and  $\Delta_M:= \max_{j = 1, \ldots, p}{\Delta^{(j)}_{M_j}}$.
\end{lemma}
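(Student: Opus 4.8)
The plan is to split the error into a \emph{sampling} part and an \emph{eigenfunction-estimation} part, bound each in spectral norm, and recombine. To this end I would introduce the intermediate matrix $\boldsymbol{\tilde Z}$ built from the \emph{true}-eigenfunction scores $\xi_{i,m}^{(j)} := \scal{X_i^{(j)}}{\phi_m^{(j)}}$, with blocks $\tilde Z^{(jk)}_{mn} = (N-1)^{-1}\sum_{i=1}^N \xi_{i,m}^{(j)}\xi_{i,n}^{(k)}$, so that $\boldsymbol{\tilde Z}$ differs from $\boldsymbol{\hat Z}$ only through the replacement of $\phi_m^{(j)}$ by the estimate $\hat\phi_m^{(j)}$ in the scores (recall $\hat\xi_{i,m}^{(j)}=\scal{X_i^{(j)}}{\hat\phi_m^{(j)}}$ by \eqref{ass:scoreCalc}). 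Since $\boldsymbol{Z}-\boldsymbol{\hat Z}$ is symmetric, $\lambda_{\max}(\boldsymbol{Z}-\boldsymbol{\hat Z})$ is bounded by its spectral norm, and by the triangle inequality
\[
\left\Vert \boldsymbol{Z}-\boldsymbol{\hat Z}\right\Vert \leq \left\Vert \boldsymbol{Z}-\boldsymbol{\tilde Z}\right\Vert + \left\Vert \boldsymbol{\tilde Z}-\boldsymbol{\hat Z}\right\Vert.
\]

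For the first term I would use that, because $\{\phi_m^{(j)}\}_m$ is orthonormal in each $L^2(\mathcal{T}_j)$, both $\boldsymbol{Z}$ and $\boldsymbol{\tilde Z}$ are the matrix representations, in the orthonormal system $\{(0,\ldots,\phi_m^{(j)},\ldots,0)\}$ of $\mathcal{H}$, of the multivariate covariance operator $\Gamma$ and of its empirical counterpart $\hat\Gamma$ whose $(j,k)$-block is the integral operator with kernel $\hat C_{jk}(s,t)=N^{-1}\sum_i X_i^{(j)}(s)X_i^{(k)}(t)$ (up to the negligible factor $N/(N-1)$); a short computation using \eqref{eq:GammaFormula} and the symmetry $C_{jk}(s,t)=C_{kj}(t,s)$ confirms $\myscal{e_m^{(j)}}{\Gamma e_n^{(k)}}=Z_{mn}^{(jk)}$, and likewise for $\hat\Gamma$. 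Hence $\boldsymbol{Z}-\boldsymbol{\tilde Z}$ is a compression of the self-adjoint operator $\Gamma-\hat\Gamma$ to a finite-dimensional subspace, so $\left\Vert \boldsymbol{Z}-\boldsymbol{\tilde Z}\right\Vert \le \opnorm{\Gamma-\hat\Gamma}$. By \eqref{ass:finite4Moment} each block operator with kernel $\hat C_{jk}-C_{jk}$ has norm $O_p(N^{-1/2})$, and since $p$ is fixed the full operator satisfies $\opnorm{\Gamma-\hat\Gamma}=O_p(N^{-1/2})$, giving $\left\Vert \boldsymbol{Z}-\boldsymbol{\tilde Z}\right\Vert = O_p(N^{-1/2})$.

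For the second term I would work entrywise. Writing $\hat\xi_{i,m}^{(j)}-\xi_{i,m}^{(j)}=\scal{X_i^{(j)}}{\hat\phi_m^{(j)}-\phi_m^{(j)}}$ and expanding the product difference,
\[
\left|\tilde Z^{(jk)}_{mn}-\hat Z^{(jk)}_{mn}\right| \leq (N-1)^{-1}\sum_{i=1}^N\left(\left|\hat\xi_{i,m}^{(j)}-\xi_{i,m}^{(j)}\right|\left|\hat\xi_{i,n}^{(k)}\right| + \left|\xi_{i,m}^{(j)}\right|\left|\hat\xi_{i,n}^{(k)}-\xi_{i,n}^{(k)}\right|\right).
\]
Bounding each scalar product by Cauchy--Schwarz and using $\norm{\hat\phi_n^{(k)}}=1$ factors out $\norm{\hat\phi_m^{(j)}-\phi_m^{(j)}}$ and leaves the average $(N-1)^{-1}\sum_i\norm{X_i^{(j)}}\norm{X_i^{(k)}}$, which is $O_p(1)$ by the law of large numbers together with \eqref{ass:finite4Moment} and Cauchy--Schwarz. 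By \citet[Lemma~4.3]{Bosq:2000}, the eigengap assumption \eqref{ass:eValGap}, the rate \eqref{ass:asymptCov} and the sign convention \eqref{ass:eFunSign}, the univariate eigenfunction error satisfies $\norm{\hat\phi_m^{(j)}-\phi_m^{(j)}}=O_p(\Delta_M r_N^\Gamma)$ uniformly over $m\le M_j$ and $j$. This yields a uniform entrywise bound $\max_{j,k,m,n}\left|\tilde Z^{(jk)}_{mn}-\hat Z^{(jk)}_{mn}\right| = O_p(\Delta_M r_N^\Gamma)$, and since $\boldsymbol{\tilde Z}-\boldsymbol{\hat Z}$ is $M_+\times M_+$ with $M_+\le p M_{\max}$, bounding the spectral norm by the Frobenius norm gives $\left\Vert \boldsymbol{\tilde Z}-\boldsymbol{\hat Z}\right\Vert \le M_+\max_{j,k,m,n}\left|\tilde Z^{(jk)}_{mn}-\hat Z^{(jk)}_{mn}\right| = O_p(M_{\max}\Delta_M r_N^\Gamma)$. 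Combining the two bounds and using $M_{\max}\ge 1$ yields $\lambda_{\max}(\boldsymbol{Z}-\boldsymbol{\hat Z})=O_p(N^{-1/2})+O_p(M_{\max}\Delta_M r_N^\Gamma)=O_p(M_{\max}\max(N^{-1/2},\Delta_M r_N^\Gamma))$.

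I expect the main obstacle to be the second part: obtaining a \emph{uniform}-in-$m$ eigenfunction rate and, crucially, converting the $O(M_{\max}^2)$ entrywise errors into a spectral-norm bound costing only a single factor $M_{\max}$. The Frobenius bound delivers exactly this, but it depends on the entrywise rate holding simultaneously across all entries, which in turn rests on the uniform eigenfunction rate; the potential dependence between the data-based estimates $\hat\phi_m^{(j)}$ and the sample is sidestepped by factoring each bound into a product of two $O_p$ quantities rather than appealing to any independence.
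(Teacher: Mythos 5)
Your proof is correct, but it takes a genuinely different route from the paper's. The paper works entrywise from the start: it bounds $\lambda_{\max}(\boldsymbol{Z}-\boldsymbol{\hat Z})$ by the maximal absolute row sum, and for each entry $Z^{(jk)}_{ln}-\hat Z^{(jk)}_{ln}$ performs a single decomposition into a term pairing the true kernel $C_{jk}$ with $\phi_l^{(j)}\phi_n^{(k)}-\hat\phi_l^{(j)}\hat\phi_n^{(k)}$ (giving $O_p(\Delta_M r_N^\Gamma)$ via Bosq's Lemma 4.3), a term pairing $C_{jk}-\hat C_{jk}$ with $\hat\phi_l^{(j)}\hat\phi_n^{(k)}$ (giving $O_p(N^{-1/2})$ via the cross-covariance operators $\Gamma^{(jk)}-\hat\Gamma^{(jk)}$ and \eqref{ass:finite4Moment}), and the $O_p(N^{-1})$ normalization correction; the row sum over $M_+\le pM_{\max}$ entries then supplies the factor $M_{\max}$ on both contributions. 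You instead insert the intermediate matrix $\boldsymbol{\tilde Z}$ built from true-eigenfunction scores, identify $\boldsymbol{Z}-\boldsymbol{\tilde Z}$ as (essentially) a compression of $\Gamma-\hat\Gamma$ to the orthonormal system $\{(0,\ldots,\phi_m^{(j)},\ldots,0)\}$, and treat only $\boldsymbol{\tilde Z}-\boldsymbol{\hat Z}$ entrywise with a Frobenius bound. Both arguments rest on exactly the same two analytic inputs --- the $O_p(N^{-1/2})$ operator rate for the empirical cross-covariances from \eqref{ass:finite4Moment} and the uniform $O_p(\Delta_M r_N^\Gamma)$ eigenfunction rate from \eqref{ass:eValGap}, \eqref{ass:asymptCov}, \eqref{ass:eFunSign} together with \eqref{ass:scoreCalc} --- but your split buys a marginally sharper intermediate bound, $O_p(N^{-1/2})+O_p(M_{\max}\Delta_M r_N^\Gamma)$: the compression step is dimension-free, so the sampling term escapes the $M_{\max}$ penalty that the paper's row-sum bound imposes on it. Since the lemma only claims the weaker combined rate, this makes no difference to the final statement, and your handling of the $N/(N-1)$ normalization mismatch ($O_p(N^{-1})$, negligible) and of the uniformity of the eigenfunction rate (via the deterministic bound $8^{1/2}\Delta^{(j)}_{M_j}\opnorm{\Gamma^{(j)}-\hat\Gamma^{(j)}}$, which does not depend on $m$) is sound.
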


\begin{proof}[Proof of Lemma 2]

For $j,k = 1, \ldots, p$ and $f \in L^2(\mathcal{T}_j)$ define the bounded operator $\Gamma^{(jk)}\colon L^2(\mathcal{T}_j) \to L^2(\mathcal{T}_k)$ via 
\[
\left(\Gamma^{(jk)}f \right)(t)
:= \int_{\mathcal{T}_j} \operatorname{Cov}\left(X^{(j)}(s), X^{(k)}(t)\right) f(s) \mathrm{d} s
= \int_{\mathcal{T}_j} C_{jk}(s,t) f(s) \mathrm{d} s
\]
Analogously, define  $\hat \Gamma^{(jk)}\colon L^2(\mathcal{T}_j) \to L^2(\mathcal{T}_k)$ by
\[\left(\hat \Gamma^{(jk)}f \right)(t)
:= \int_{\mathcal{T}_j} \widehat{\operatorname{Cov}}\left(X^{(j)}(s), X^{(k)}(t)\right) f(s) \mathrm{d} s
= \int_{\mathcal{T}_j} \hat C_{jk}(s,t) f(s) \mathrm{d} s
\]
with $\hat C_{jk}(s,t):=\widehat{\operatorname{Cov}}\left(X^{(j)}(s), X^{(k)}(t)\right) = \frac{1}{N} \sum_{i = 1}^N X_i^{(j)}(s) X_i^{(k)}(t)$.
If the $X_i$ are independent copies of the process $X$, it holds
\begin{align*}
& \mathbb{E} \left[ \int_{\mathcal{T}_j} \int_{\mathcal{T}_k} 
\left(C_ {jk}(s,t) - \hat C_{jk}(s,t)  \right)^2 \mathrm{d} s ~ \mathrm{d} t \right]\\
& \quad = \mathbb{E} \left[ \frac{1}{N^2} \sum_{i = 1}^N \sum_{l = 1}^N \int_{\mathcal{T}_j} \int_{\mathcal{T}_k} 
\left(C_{jk}(s,t) - X_i^{(j)}(s)X_i^{(k)}(t)  \right) \left(C_{jk}(s,t) - X_l^{(j)}(s)  X_l^{(k)}(t)  \right)\mathrm{d} s ~ \mathrm{d} t \right] \\
& \quad  =  \frac{1}{N^2} \sum_{i = 1}^N \int_{\mathcal{T}_j} \int_{\mathcal{T}_k} 
\mathbb{E} \left[ \left(C_{jk}(s,t) - X_i^{(j)}(s) X_i^{(k)}(t) \right)^2  \right] \mathrm{d} s ~ \mathrm{d} t \\
& \quad  =  \frac{1}{N} \int_{\mathcal{T}_j} \int_{\mathcal{T}_k} 
\mathbb{E} \left[  X^{(j)}(s)^2 X^{(k)}(t) ^2 \right] - C_{jk}(s,t)^2 \mathrm{d} s ~ \mathrm{d} t = O(N^{-1}).
\end{align*}
The last step follows from the fact that the integral term does not depend on $N$ and is finite by assumption~\eqref{ass:finite4Moment} and the conditions in Prop.~\ref{prop:GammaProperties} for $C_{jk}$. This implies \[\left \Vert \Gamma^{(jk)} - \hat \Gamma^{(jk)} \right \Vert_\text{op} 
\leq \left(  \int_{\mathcal{T}_j} \int_{\mathcal{T}_k} 
\left(C_ {jk}(s,t) - \hat C_{jk}(s,t)  \right)^2 \mathrm{d} s ~ \mathrm{d} t \right)^{1/2} \stackrel{\text{Markov}}{=} O_p(N^{-1/2}).\]

Recall $Z_{ln}^{(jk)} = \operatorname{Cov}(\xi_l^{(j)}, \xi_n^{(k)})$ and $\hat Z_{ln}^{(jk)}=\frac{1}{N-1} \sum_{i  =1}^N \hat \xi_{i,l}^{(j)} \hat \xi_{i,n}^{(k)}$ for $j,k = 1, \ldots, p,~ l = 1,\ldots, M_j,~ n = 1, \ldots, M_k.$
As $\boldsymbol{Z}$ and $\boldsymbol{\hat Z}$ are both symmetric  matrices in $\mathbb{R}^{M_+ \times M_+}$ it holds \citep[cf.][Chapter 3.7]{HornJohnson:1991}
\begin{equation}
\lambda_{\max}(\boldsymbol{Z} - \boldsymbol{\hat Z})
\leq \max_{j = 1, \ldots, p} \max_{l = 1, \ldots, M_j} \sum_{k= 1}^p \sum_{n = 1}^{M_k} \left|Z^{(jk)}_{ln} - \hat Z^{(jk)}_{ln} \right|.
\label{eq:boundLambdaMax}
\end{equation}

\allowdisplaybreaks

Let now  $j,k = 1, \ldots, p,~ l = 1,\ldots, M_j,~ n = 1, \ldots, M_k$ be fixed. Assumption~\eqref{ass:scoreCalc} gives
\begin{align*}
& \left|Z^{(jk)}_{ln} - \hat Z^{(jk)}_{ln} \right|
= \left| \operatorname{Cov}\left(\xi_l^{(j)}, \xi_n^{(k)}\right) 
- \frac{1}{N} \sum_{i  =1}^N  \hat \xi_{i,l}^{(j)} \hat \xi_{i,n}^{(k)}
-\frac{1}{N(N-1)} \sum_{i  =1}^N  \hat \xi_{i,l}^{(j)} \hat \xi_{i,n}^{(k)} \right| \\
& \quad \stackrel{\eqref{ass:scoreCalc}}{\leq}  \left| \operatorname{Cov}\left( \scal{X^{(j)}}{\phi_l^{(j)}}, \scal{X^{(k)}}{\phi_n^{(k)}}  \right) 
-\frac{1}{N} \sum_{i  =1}^N  \scal{X_i^{(j)}}{\hat \phi_l^{(j)}} \cdot \scal{X_i^{(k)}}{\hat \phi_n^{(k)}} \right| \\
&  \qquad + \frac{1}{N(N-1)} \sum_{i = 1}^N \left| \scal{X_i^{(j)}}{\hat \phi_l^{(j)}}\right| \left| \scal{X_i^{(k)}}{\hat \phi_n^{(k)}} \right|
 \\
& \quad \leq  \left| \int_{\mathcal{T}_j} \int_{\mathcal{T}_k} \mathbb{E} \left( X^{(j)}(s)X^{(k)}(t) \right) \phi_l^{(j)}(s) \phi_n^{(k)}(t) \mathrm{d} s ~ \mathrm{d} t -   \int_{\mathcal{T}_j} \int_{\mathcal{T}_k}  \frac{1}{N} \sum_{i  =1}^N \left(X_i^{(j)}(s) X_i^{(k)}(t) \right) \hat \phi_l^{(j)}(s) \hat \phi_n^{(k)}(t) \mathrm{d} s ~ \mathrm{d} t
\right| \\
& \qquad + \frac{1}{N(N-1)} \sum_{i  =1}^N  \norm{X_i^{(j)}} \norm{\hat \phi_l^{(j)}} \norm{X_i^{(k)}} \norm{\hat \phi_n^{(k)}} \\
& \quad \stackrel{\eqref{eq:XnormBounded}}{=} 
\left| \int_{\mathcal{T}_j} \int_{\mathcal{T}_k}
\operatorname{Cov} \left( X^{(j)}(s), X^{(k)}(t) \right) \phi_l^{(j)}(s) \phi_n^{(k)}(t)  -  \widehat {\operatorname{Cov} }\left( X^{(j)}(s), X^{(k)}(t) \right)
\hat \phi_l^{(j)}(s) \hat \phi_n^{(k)}(t)  \mathrm{d} s ~ \mathrm{d} t \right| \\
& \qquad + \frac{1}{N(N-1)} \sum_{i  =1}^N O_p(1)\cdot 1 \cdot O_p(1) \cdot 1\\
& \quad = \left| \int_{\mathcal{T}_j} \int_{\mathcal{T}_k} C_{jk} (s,t) \left[ \phi_l^{(j)}(s) \phi_n^{(k)}(t)  - \hat \phi_l^{(j)}(s) \hat \phi_n^{(k)}(t)\right]  \mathrm{d} s ~ \mathrm{d} t \right| \\
& \qquad +  \left| \int_{\mathcal{T}_j} \int_{\mathcal{T}_k} \left[C_{jk} (s,t) - \hat C_{jk} (s,t) \right] \hat \phi_l^{(j)}(s) \hat \phi_n^{(k)}(t) \mathrm{d} s ~ \mathrm{d} t
\right| + O_p (N^{-1})\\
& \quad \leq  \int_{\mathcal{T}_j} \int_{\mathcal{T}_k} C_{jj} (s,s)^{1/2} C_{kk} (t,t)^{1/2} \left| \phi_l^{(j)}(s) \phi_n^{(k)}(t)  - \hat \phi_l^{(j)}(s) \hat \phi_n^{(k)}(t)\right|\mathrm{d} s ~ \mathrm{d} t  \\
& \qquad +  \int_{\mathcal{T}_k}  \left| \left((\Gamma^{(jk)} - \hat \Gamma^{(jk)}) \hat \phi_l^{(j)} \right) (t) \hat \phi_n^{(k)}(t) \right|  \mathrm{d} t + O_p (N^{-1})
\\
& \quad \leq 
 \left \Vert C_{jj} \right \Vert_\infty^{1/2} \left \Vert C_{kk}\right \Vert_\infty^{1/2} \left(\int_{\mathcal{T}_j}  \int_{\mathcal{T}_k}  \left| \phi_l^{(j)}(s) \right| \left| \phi_n^{(k)}(t)  - \hat \phi_n^{(k)}(t)\right| \mathrm{d} s ~ \mathrm{d} t
+
\int_{\mathcal{T}_j} \int_{\mathcal{T}_k}\left| \phi_l^{(j)}(s) - \hat \phi_l^{(j)}(s) \right|  \left|\hat \phi_n^{(k)}(t)\right| \mathrm{d} s ~ \mathrm{d} t \right) \\
& \qquad +  \norm{(\Gamma^{(jk)} - \hat \Gamma^{(jk)}) \hat \phi_l^{(j)}} \norm{\hat \phi_n^{(k)}} + O_p (N^{-1})
\\
&  \quad \leq 
 \left( \left \Vert C_{jj} \right \Vert_\infty\left \Vert C_{kk} \right \Vert_\infty \lambda(\mathcal{T}_j) \lambda(\mathcal{T}_k) \right)^{1/2} \left( \norm{\phi_l^{(j)}} \norm{ \phi_n^{(k)}  - \hat \phi_n^{(k)}} 
+
\norm{\phi_l^{(j)} - \hat \phi_l^{(j)}} \norm{\hat  \phi_n^{(k)}} \right) \\
& \qquad +  \left \Vert \Gamma^{(jk)} - \hat \Gamma^{(jk)} \right \Vert _\text{op} \norm{\hat \phi_l^{(j)}} \norm{\hat \phi_n^{(k)}} + O_p (N^{-1})
\\ 
& \quad =
 \left( \left \Vert C_{jj} \right \Vert_\infty\left \Vert C_{kk} \right \Vert_\infty \lambda(\mathcal{T}_j) \lambda(\mathcal{T}_k) \right)^{1/2} \left(  \norm{ \phi_n^{(k)}  - \hat \phi_n^{(k)}} 
+\norm{\phi_l^{(j)} - \hat \phi_l^{(j)}} \right)+  \left  \Vert \Gamma^{(jk)} - \hat \Gamma^{(jk)} \right  \Vert_\text{op} + O_p (N^{-1}) \\
& \quad = O_p( \Delta^{(j)}_{M_j} r_N^\Gamma) + O_p(\Delta^{(k)}_{M_k} r_N^\Gamma) + O_p(N^{-1/2}) + O_p (N^{-1}) \\
& \quad = O_p(\max( \Delta^{(j)}_{M_j}  r_N^\Gamma,  \Delta^{(k)}_{M_k} r_N^\Gamma,N^{-1/2} )).
\end{align*}
The rate for $\phi_n^{(k)}$ and $\phi_{l}^{(j)}$ in the last steps is shown at the beginning of the proof of Prop.~\ref{prop:asymptVar}.
In total, equation \eqref{eq:boundLambdaMax}, $M_{\max} = \max_{j = 1 , \ldots,  p}M_j$ and $\Delta_M:= \max_{j = 1, \ldots, p}{\Delta^{(j)}_{M_j}}$ give
\[\lambda_{\max}(\boldsymbol{Z} - \boldsymbol{\hat Z}) \leq O_p(M_{\max}\max(N^{-1/2} , \Delta_M r_N^\Gamma)).\]

\end{proof}

\begin{proof}[Proof of Prop.~\ref{prop:asymptVar}] \label{proof:asymptVar}

Under assumption~\eqref{ass:eFunSign} and using the convention $\lambda_0^{(j)} := \infty$, Lemma 4.3. in \citet{Bosq:2000}  gives for $m = 1, \ldots, M_j$
\begin{align*}
\norm{\phi_m^{(j)} - \hat \phi_m^{(j)}} 
& \leq 8^{1/2} \left[ \min\left(\lambda_m^{(j)} - \lambda_{m+1}^{(j)},\lambda_{m-1}^{(j)} - \lambda_{m}^{(j)}\right) \right] ^{-1}  \left \Vert \Gamma^{(j)}- \hat\Gamma^{(j)} \right \Vert_\text{op} \\
& \leq  8^{1/2} \Delta^{(j)}_{M_j} \left \Vert \Gamma^{(j)}- \hat\Gamma^{(j)} \right  \Vert_\text{op}
= O_p(\Delta^{(j)}_{M_j} r^\Gamma_N).
\end{align*}
Based on this result, Lemma~\ref{lemma:covMatrixRate} states that 
$\lambda_{\max}(\boldsymbol{Z} - \boldsymbol{\hat Z}) \leq O_p(M_{\max} \max(N^{-1/2} , \Delta_M r_N^\Gamma))$
with $\Delta_M:= \max_{j = 1, \ldots, p}{\Delta^{(j)}_{M_j}}$ and  $M_{\max} = \max_{j = 1 , \ldots,  p}M_j$.

\begin{enumerate}
\item Eigenvalues: Let $\boldsymbol{\xi} \in \mathbb{R}^{M_+}$ with entries  $\xi_{m}^{(j)} = \scal{X^{(j)}}{\phi_m^{(j)}} = \scal{X^{[M](j)}}{\phi_m^{(j)}},~ m = 1, \ldots, M_j,~ j = 1 , \ldots, p$. For fixed $m = 1, \ldots, M_+$ it holds that
\begin{align*}
\left| \nu^{[M]}_m - \hat \nu_m \right|
& = \left| \operatorname{Var}(\myscal{X_i^{[M]}}{\psi_m^{[M]}})  - \boldsymbol{ \hat c_m} ^\top \boldsymbol{\hat Z} \boldsymbol{\hat c_m}\right|
= \left|  \boldsymbol{c_m} ^\top  \operatorname{Var}(\boldsymbol{\xi})  \boldsymbol{c_m} -  \boldsymbol{\hat c_m} ^\top   \boldsymbol{\hat Z}  \boldsymbol{\hat c_m} \right| \\
& = \left|  (\boldsymbol{c_m} - \boldsymbol{\hat c_m}) ^\top \boldsymbol{Z} \boldsymbol{c_m} + \boldsymbol{\hat c_m} ^\top \boldsymbol{Z} (\boldsymbol{c_m} - \boldsymbol{\hat c_m}) + \boldsymbol{\hat c_m} ^\top ( \boldsymbol{Z}  -  \boldsymbol{\hat Z} ) \boldsymbol{\hat c_m} \right| \\
& \leq  \vecnorm{\boldsymbol{c_m} - \boldsymbol{\hat c_m}}\cdot  \nu^{[M]}_m \vecnorm{\boldsymbol{c_m}} 
	+ \lambda_{\max}(\boldsymbol{Z}) \vecnorm{ \boldsymbol{c_m} - \boldsymbol{\hat c_m}} + \lambda_{\max}(\boldsymbol{Z}  -  \boldsymbol{\hat Z}) \vecnorm{\boldsymbol{\hat c_m}} \\	
& =   \vecnorm{\boldsymbol{c_m} - \boldsymbol{\hat c_m}} ( \nu^{[M]}_m  + \nu^{[M]}_1)
	+\lambda_{\max}(\boldsymbol{Z}  -  \boldsymbol{\hat Z}) \\
& \leq \frac{8^{1/2}\lambda_{\max}(\boldsymbol{Z} - \boldsymbol{\hat Z)}}{\min(\nu^{[M]}_{m-1}  - \nu^{[M]}_m , \nu^{[M]}_m  - \nu^{[M]}_{m+1})} 2 \nu^{[M]}_1
	+\lambda_{\max}(\boldsymbol{Z}  -  \boldsymbol{\hat Z}) \\
& = \left[ \frac{2^{5/2} \nu^{[M]}_1}{\min(\nu^{[M]}_{m-1}  - \nu^{[M]}_m , \nu^{[M]}_m  - \nu^{[M]}_{m+1})} 
	+ 1 \right] \lambda_{\max}(\boldsymbol{Z}  -  \boldsymbol{\hat Z}) \\
&	= O_p(M_{\max} \max(N^{-1/2} , \Delta_M r_N^\Gamma)) ,
\end{align*} 
as the expression in square brackets converges to a constant $C < \infty$ (cf. Prop.~\ref{prop:asymptBias} and the fact that $\nu_m$ is assumed to have multiplicity $1$, see p.~18 in the main paper). Here  $\lambda_{\max}(\boldsymbol{A})$ denotes the maximal eigenvalue of a symmetric matrix $\boldsymbol{A}$. The second inequality follows from Corollary~1 in \cite{YuEtAl:2015} and the fact that $\nu^{[M]}_m \leq \nu^{[M]}_1$.

\item Eigenfunctions: Consider the $j$-th element of the $m$-th eigenfunctions:
\begin{align*}
\norm{ \psi^{[M](j)}_m - \hat \psi_m^{(j)}} 
& \leq  \sum_{n = 1}^{M_j}  \left| [\boldsymbol{c_m}]_n^{(j)} - [\boldsymbol{\hat c_m}]_n^{(j)} \right| \norm{ \phi_n ^{(j)}} + \left|[\boldsymbol{\hat c_m}]_n^{(j)} \right| \norm{\phi_n ^{(j)} - \hat \phi_n ^{(j)}} \\
& \leq M_j^{1/2}   \vecnorm{ \boldsymbol{c_m} - \boldsymbol{\hat c_m}}  + M_j^{1/2} \vecnorm{ \boldsymbol{\hat c_m} } O_p(\Delta^{(j)}_{M_j} r_N^\Gamma)   \\
& \leq  M_j^{1/2} \left( \frac{8^{1/2} \lambda_{\max}(\boldsymbol{Z} - \boldsymbol{\hat Z)}}{\min(\nu^{[M]}_{m-1}  - \nu^{[M]}_m , \nu^{[M]}_m  - \nu^{[M]}_{m+1})} +  O_p(\Delta^{(j)}_{M_j} r_N^\Gamma) \right) \\
& =  M_j^{1/2}  O_p \left( \max (M_{\max}N^{-1/2} ,M_{\max} \Delta_M r_N^\Gamma, \Delta^{(j)}_{M_j} r_N^\Gamma) \right),
\end{align*}
where the last inequality uses again Corollary~1 in \cite{YuEtAl:2015}. 
By definition of the norm, the result for the single elements implies
\[
\mynorm{\psi^{[M]}_m - \hat \psi_m}
  =   O_p \left(M_{\max}^{3/2} \max (N^{-1/2} ,\Delta_M r_N^\Gamma) \right).
\]
\item Scores and reconstructed $\hat X$: 
For $\hat \rho_{i,m}  = \boldsymbol{\Xi_{i,\cdot}\hat c_m} = \myscal{\hat X_i^{[M]}}{\hat \psi_m}$ as in  Section~\ref{sec:estMFPCA}   with $\hat X_i^{[M](j)} = \sum_{m = 1}^{M_j} \hat \xi_{i,m}^{(j)} \hat \phi_m^{(j)}$,
\begin{align*}
\left| \rho_{i,m}^{[M]}-  \hat \rho_{i,m} \right|
& = \left| \myscal{ X_i^{[M]}}{\psi_m^{[M]} - \hat \psi_m} +
\myscal{X_i^{[M]} -  \hat X_i^{[M]}}{\hat \psi_m}  \right| \\
& \leq  \mynorm{X_i^{[M]}} \cdot \mynorm{\psi_m^{[M]} - \hat \psi_m}
+\mynorm{X_i^{[M]}- \hat X_i^{[M]}} \cdot \mynorm{\hat \psi_m}.
\end{align*}

$ \mynorm{X_i^{[M]}}$ is bounded in probability using equation \eqref{eq:varMtoVar} and analogous arguments as for $ \mynorm{X}$ in the proof of Prop.~\ref{prop:asymptBias} (convergence of $ \rho_m^{[M]}$).
For the second term, note that
\begin{align*}
\norm{X_i^{[M](j)} - \hat X_i^{[M](j)}}
&= \norm{\sum_{m = 1}^{M_j} \xi_{i,m}^{(j)} \phi_m^{(j)} - \hat \xi_{i,m}^{(j)} \hat \phi_m^{(j)}} 
 \leq \sum_{m = 1}^{M_j} \left|  \xi_{i,m}^{(j)} \right| \norm{\phi_m^{(j)} - \hat \phi_m^{(j)}} + \left| \xi_{i,m}^{(j)} - \hat \xi_{i,m}^{(j)} \right| \norm{\hat \phi_m^{(j)}} \\
&  \leq \sum_{m = 1}^{M_j} \left|  \xi_{i,m}^{(j)} \right| \norm{\phi_m^{(j)} - \hat \phi_m^{(j)}} + \left| \scal{X_i^{(j)}}{\phi_m^{(j)}}- \scal{X_i^{(j)}}{\hat \phi_m^{(j)}} \right| \\
 & \leq \sum_{m = 1}^{M_j} \left(\left|  \xi_{i,m}^{(j)} \right|+ \norm{X_i^{(j)}}\right) \norm{\phi_m^{(j)} - \hat \phi_m^{(j)}}.
\end{align*}
The univariate scores are uniformly bounded in probability: For $m = 1, \ldots, M_j$, let $\varepsilon > 0$ and $c := (\frac{2 \lambda_1^{(j)}}{\varepsilon})^{1/2} < \infty$. Then
\[P(|  \xi_{i,m}^{(j)} | > c)
\stackrel{\text{Markov}}{\leq}  \frac{1}{c^2} \mathbb{E} [ |  \xi_{i,m}^{(j)} |^2 ] 
= \frac{1}{c^2} \operatorname{Var}(\xi_{i,m}^{(j)}) = \frac{\lambda_m^{(j)}}{c^2} < \varepsilon.
\]
Hence $\norm{X_i^{[M](j)} - \hat X_i^{[M](j)}} = M_j O_p(1) O_p(\Delta^{(j)}_{M_j} r_N^\Gamma)$ and
\begin{align*}
\mynorm{X_i^{[M]} - \hat X_i^{[M]}}
= O_p(M_{\max} \Delta_M r_N^\Gamma ).
\end{align*}
In total,
\begin{align*}
\left| \rho_{i,m}^{[M]}-  \hat \rho_{i,m} \right|
& \leq  \mynorm{X_i^{[M]}} \mynorm{\psi_m^{[M]} - \hat \psi_m}
+\mynorm{X_i^{[M]}- \hat X_i^{[M]}}\\
& = O_p(1) O_p(M_{\max}^{3/2} \max(N^{-1/2}, \Delta_M r_N^\Gamma)) + O_p(M_{\max} \Delta_M r_N^\Gamma)\\
& = O_p (M_{\max}^{3/2} \max (N^{-1/2} ,\Delta_M r_N^\Gamma)).
\end{align*}
\end{enumerate}

\end{proof}

\newpage

%\FloatBarrier
%\newpage

\section*{Simulation -- Additional Results}

\subsection*{Construction of Eigenfunctions (Technical Details)}

\textbf{Setting 1 and 2:}
The first two settings of the simulation study consider multivariate functional data where each element has a one-dimensional domain (cf. Section~\ref{sec:simFun}). As a starting point for the construction of the multivariate eigenfunctions $\psi_m$ with $p$ elements, we use Fourier basis functions $f_1 , \ldots, f_M$ on the interval $[0,2]$. Next, choose split points $0 = T_1 < T_2 < \ldots < T_p < T_{p+1} = 2$ and shift values $\eta_1 , \ldots ,  \eta_p \in \mathbb R $ such that $\mathcal{T}_j = [T_{j} + \eta_j, T_{j+1}+\eta_j]$. In the first setting with $p = 2$, one has $T_1 = 0, T_2 = 1, T_3 = 2$ and $\eta_1 = 0, \eta_2 = 1$, i.e. the functions are cut at $T_2 = 1$, and the second part is shifted to the left by $1$ such that $\mathcal{T}_1 = \mathcal{T}_2 = [0,1]$. Given random signs $\sigma_1 , \ldots ,  \sigma_p \in \{-1,1\}$, the multivariate eigenfunctions are given by their elements
\begin{align*}
\psi_m^{(j)}(t_j) & = \sigma_j \cdot f_m \vert_{[T_{j}, T_{j+1}]}(t_j- \eta_j), \quad m = 1 , \ldots ,  M.
\end{align*}
The constuction process is illustrated in Fig.~\ref{fig:bigSplitONB}.
Clearly, $\{\psi_m,~ m  = 1 , \ldots ,  M\}$ is an orthonormal system in $\mathcal{H} = L^2(\mathcal{T}_1) \times \ldots \times L^2(\mathcal{T}_p)$. The observations $x_i$ for the simulation are constructed as a truncated Karhunen-Lo\`{e}ve expansion, cf. the introduction of Section~\ref{sec:Simulation}. Exemplary data for the second simulation setting including sparse data and data with measurement error is given in Fig.~\ref{fig:simData}

\begin{figure}[ht]
\centering
\includegraphics[height = 3.5cm]{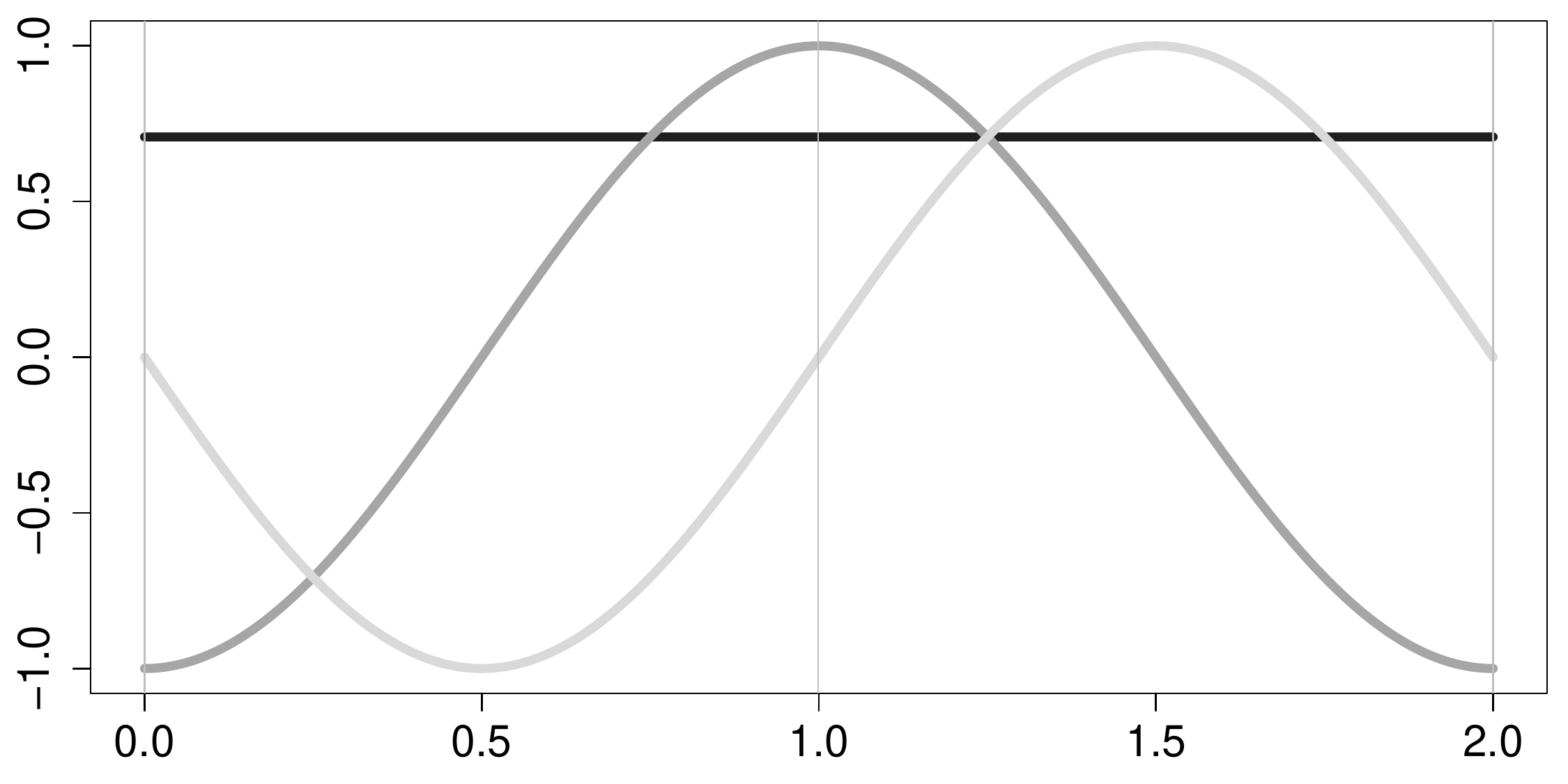}
\hspace{0.75cm}
\includegraphics[height = 3.5cm]{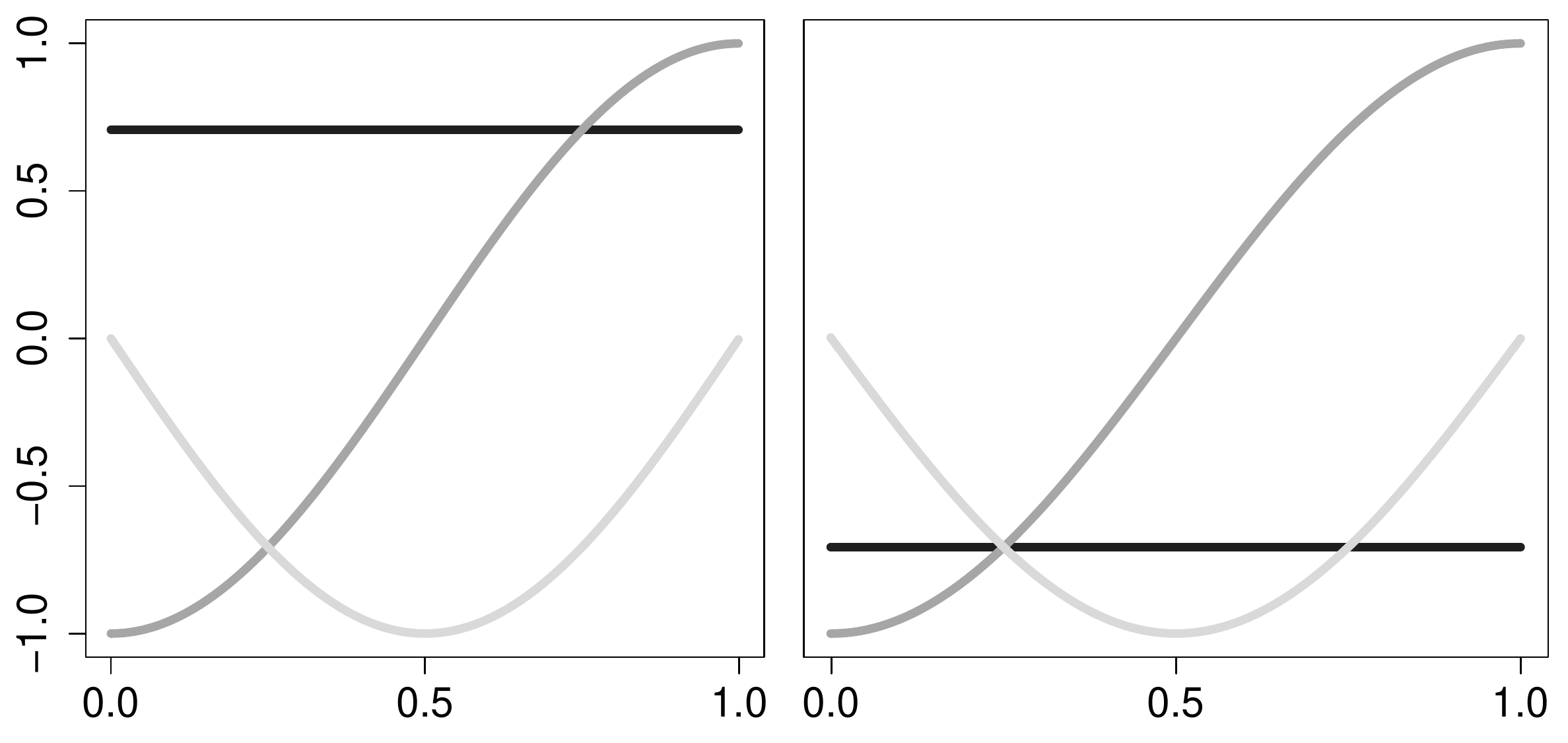}
\caption{Illustration of the construction of the multivariate eigenfunctions $\psi_m$ for the first setting. Left: The first $M = 3$ functions of the Fourier basis on $[0,2]$ with one split point. Right: The shifted pieces multiplied with random signs form the first three bivariate eigenfunctions.}
\label{fig:bigSplitONB}
\end{figure}

\begin{figure}[ht]
\centering
\includegraphics[height = 3.5cm]{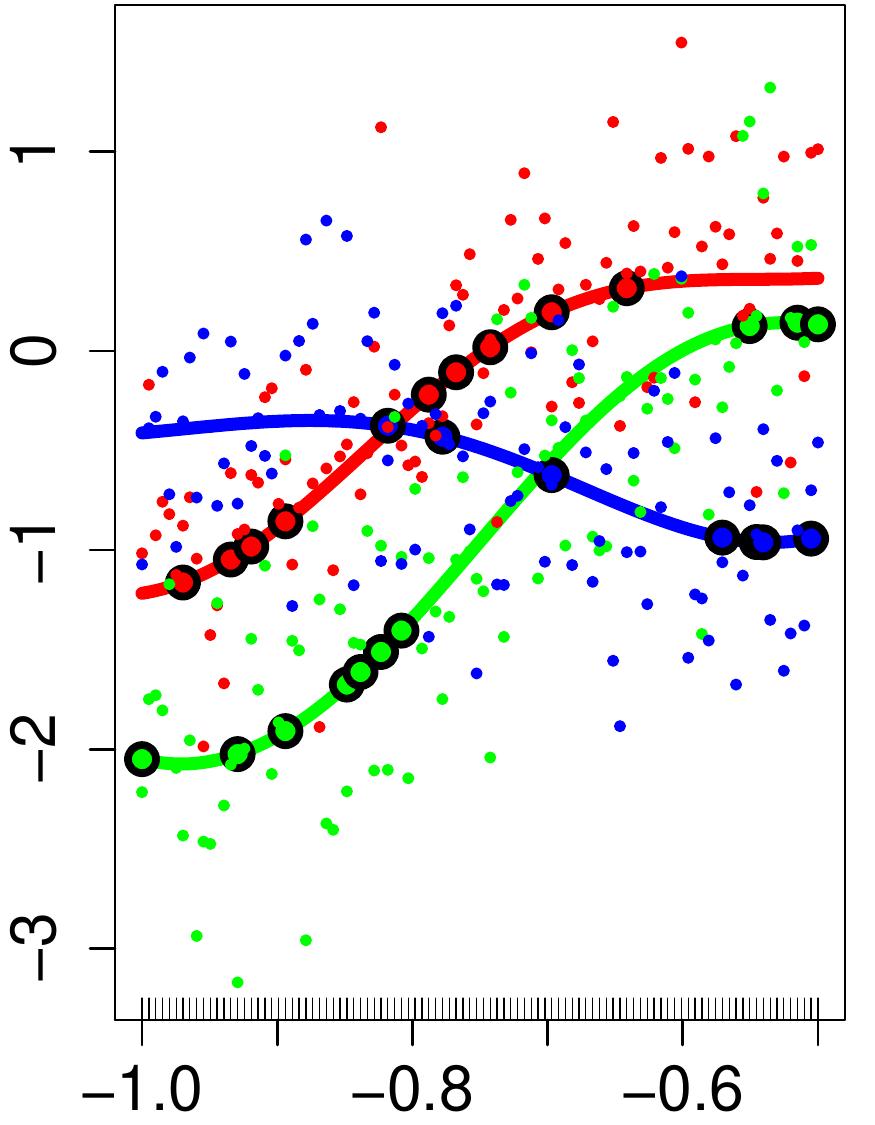}
%\hfill
\includegraphics[height = 3.5cm]{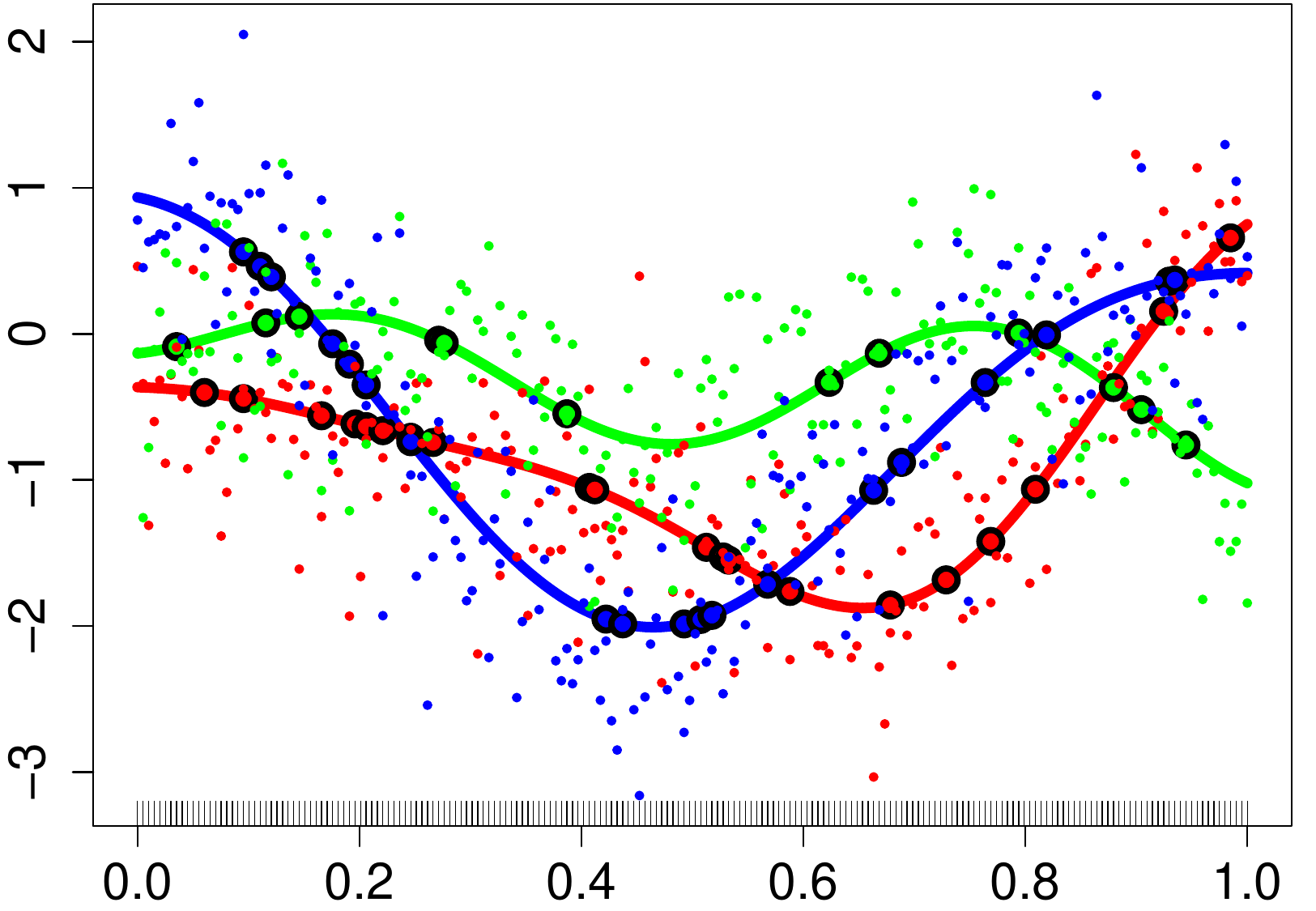}
%\hfill
\includegraphics[height = 3.5cm]{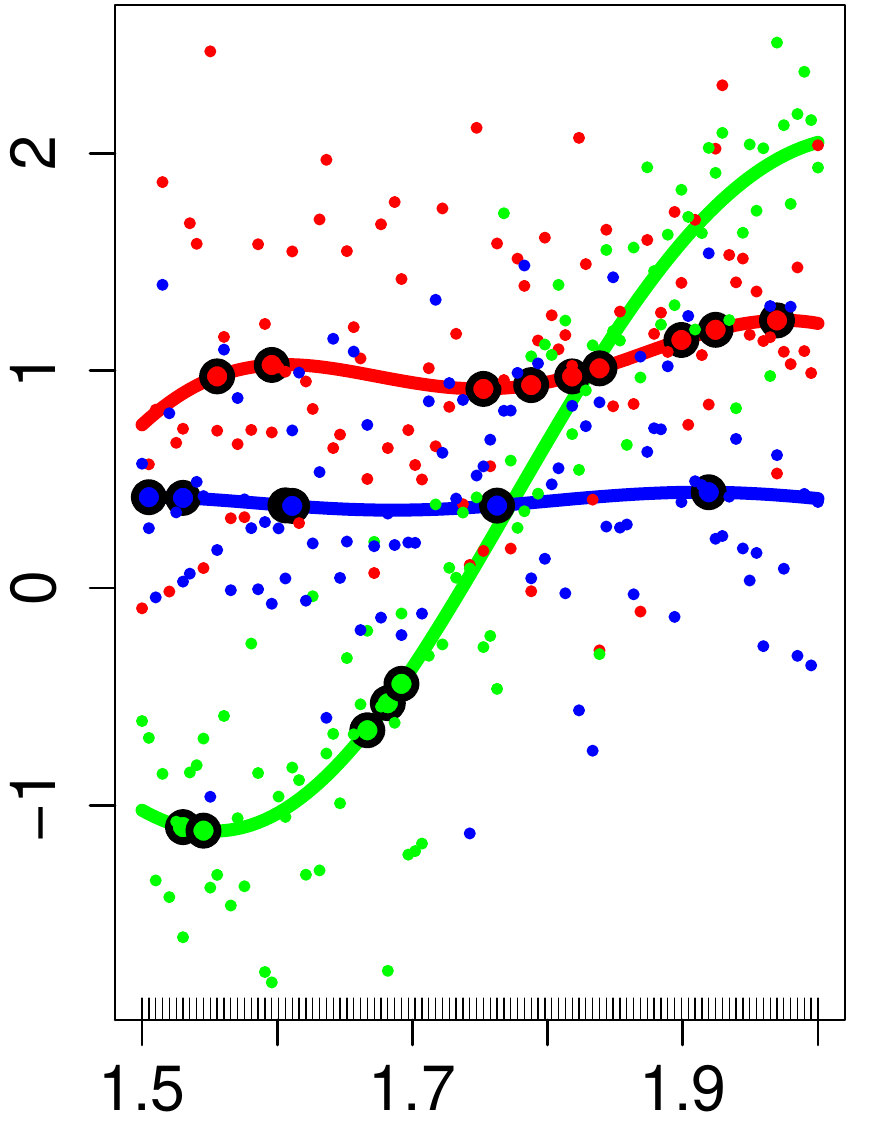}
\caption{Three examples for simulated data in simulation setting 2 based on the leading $M = 8$ Fourier basis functions and exponential eigenvalue decay. Left: $x_i^{(1)}$, Middle: $x_i^{(2)}$, Right: $x_i^{(3)}$. Solid lines show the realizations $x_i$, small points are the corresponding data with measurement error, big points mark measurements of the artificially sparsified data (high sparsity level).}
\label{fig:simData}
\end{figure}

\textbf{Setting 3:}
The data in the third setting consists of images and functions, hence multivariate functional data with elements having different dimensional domains (cf. Sectrion~\ref{sec:simImage}).
The basic idea here is to find orthonormal bases for each of the domains and to construct the eigenfunctions as weighted combinations of those bases. Specifically, we use five Fourier basis functions $f_{m_1}^{(1,1)},~ f_{m_2}^{(1,2)}$ on $[0,1]$ or $[0, 0.5]$, respectively,  to form $M = 25$ tensor product functions $f_m^{(1)}$ on $[0,1] \times [0,0.5]$ and $M = 25$ Legendre Polynomials $f_m^{(2)}$ on $[-1,1]$.
The eigenfunctions are defined via
\begin{align*}
\psi_m^{(1)}(s,t) &= \sqrt{\alpha} f_{m_1}^{(1,1)}(s) \cdot  f_{m_2}^{(1,2)}(t),\quad (s,t) \in \mathcal{T}_1:= [0, 1]\times[0, 0.5], \\
\psi_m^{(2)}(t) &= \sqrt{1 - \alpha} f_m^{(2)}(t),\quad  t \in \mathcal{T}_2:= [-1,1]
\end{align*}
with a random weight $\alpha \in (0,1)$.  This choice implies that $\psi_m$ forms an orthonormal system in $\mathcal{H} = L^2(\mathcal{T}_1) \times L^2(\mathcal{T}_2)$. In order to avoid extreme weights, $\alpha$ is set to $u_1 /(u_1 + u_2)$ with $u_1, u_2 \sim U(0.2, 0.8)$. This construction restricts $\alpha \in (0.2, 0.8)$ and can easily be generalized to the simulation of multivariate functional data with $p$ elements. Example data based on this type of eigenfunctions is shown in Fig.~\ref{fig:exImageData}.

\begin{figure}[ht]
\centering
\includegraphics[height = 3.75cm]{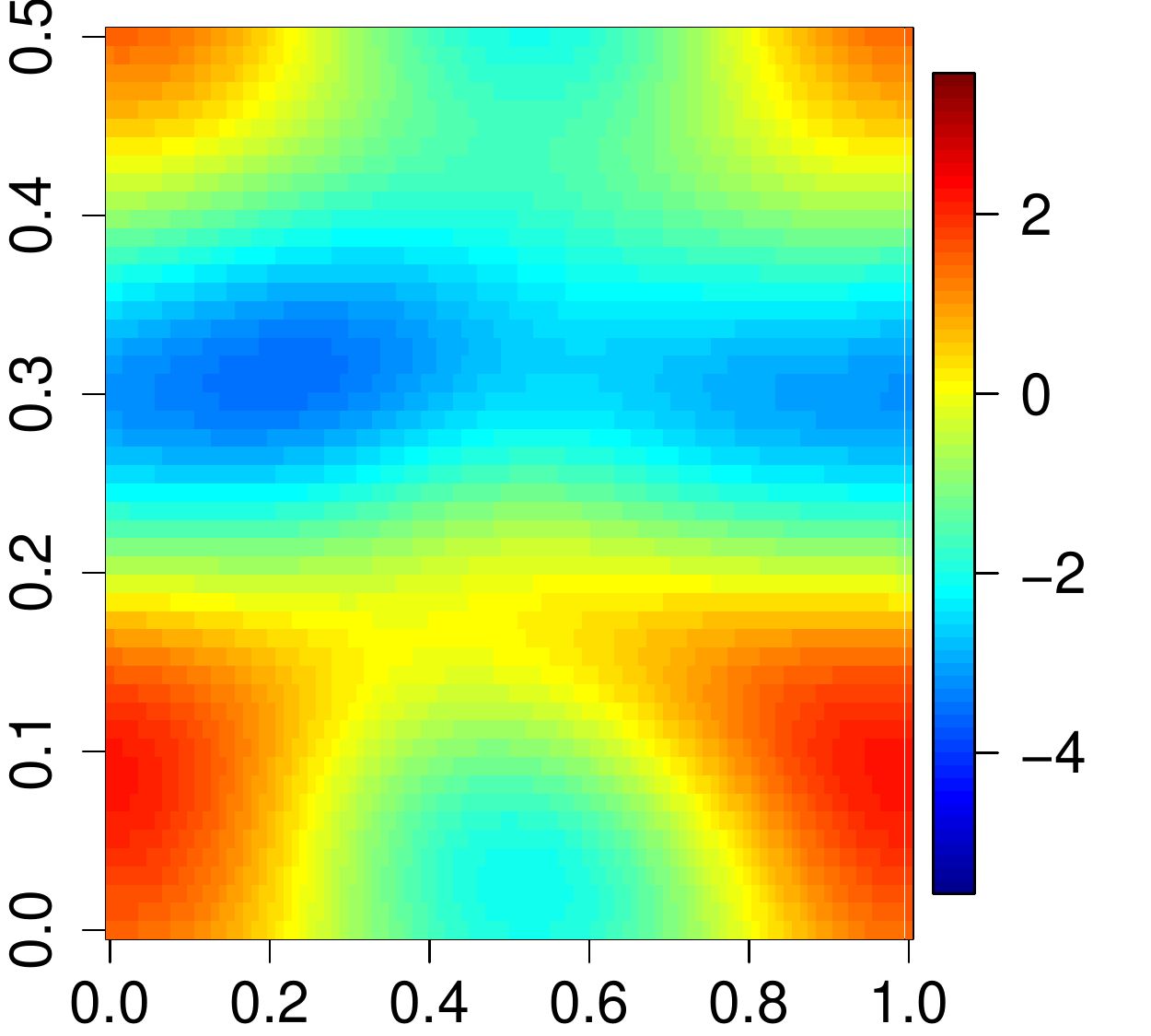}
\hspace{0.5cm}
\includegraphics[height = 3.75cm]{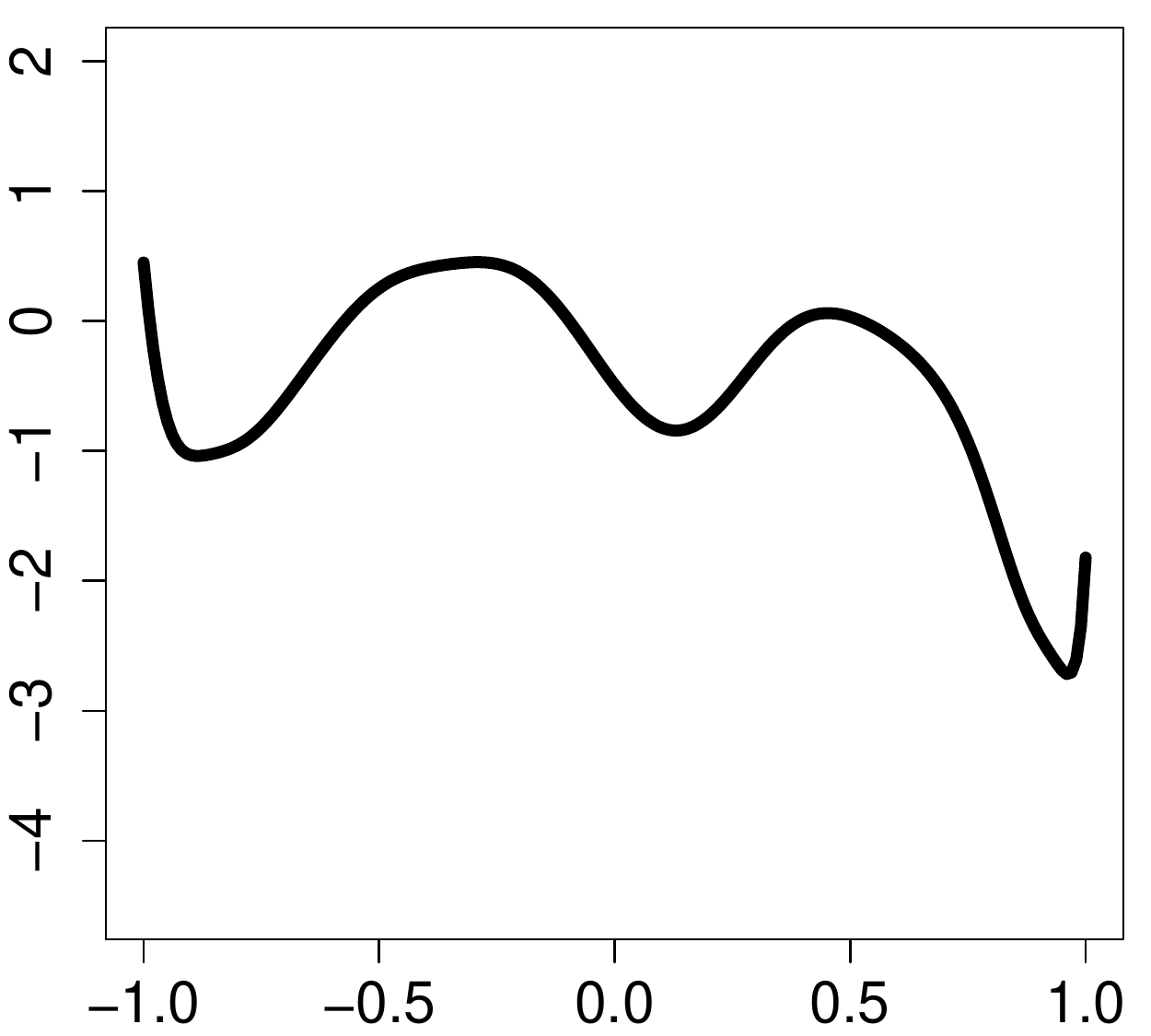}

\includegraphics[height = 3.75cm]{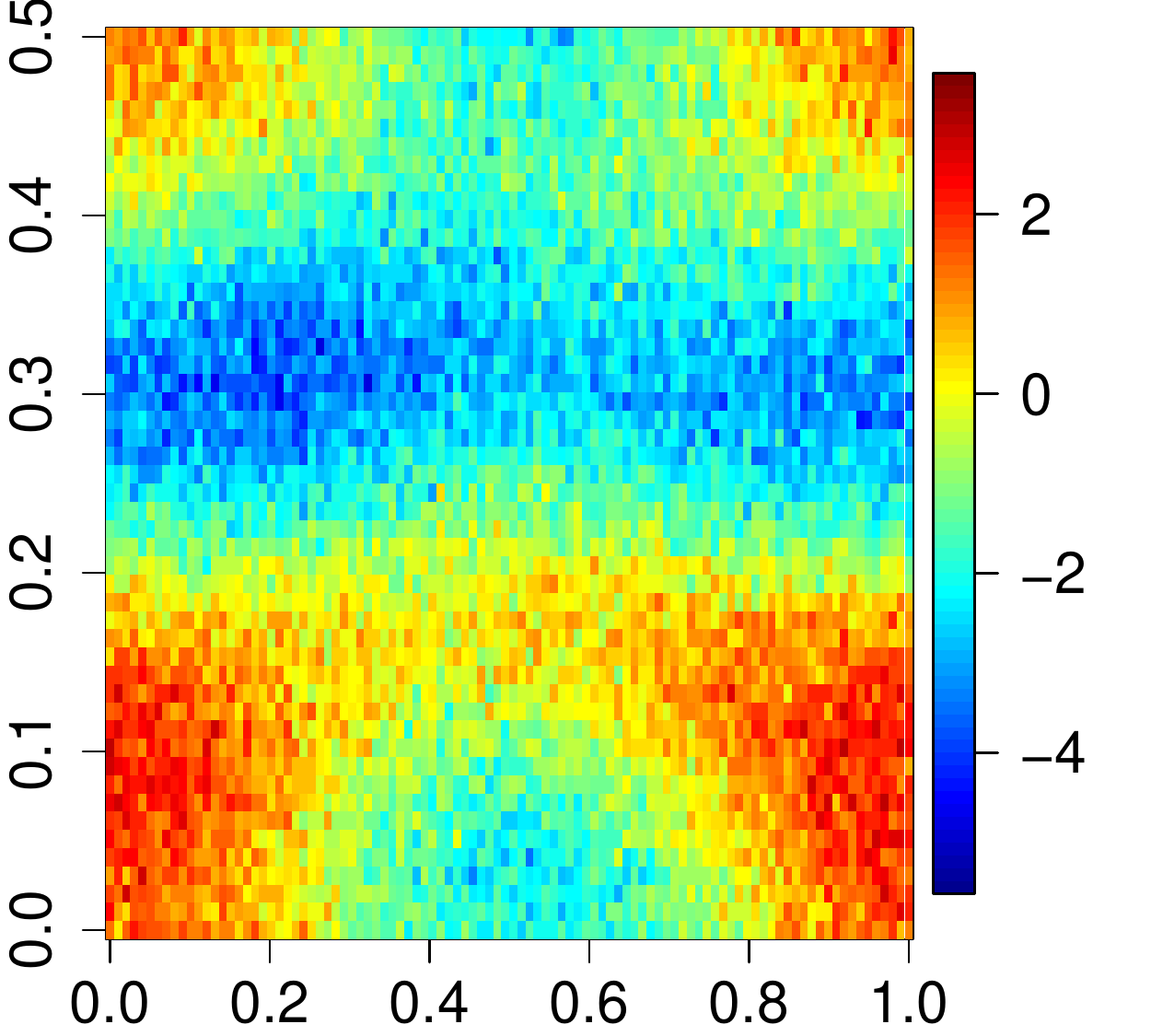}
\hspace{0.5cm}
\includegraphics[height = 3.75cm]{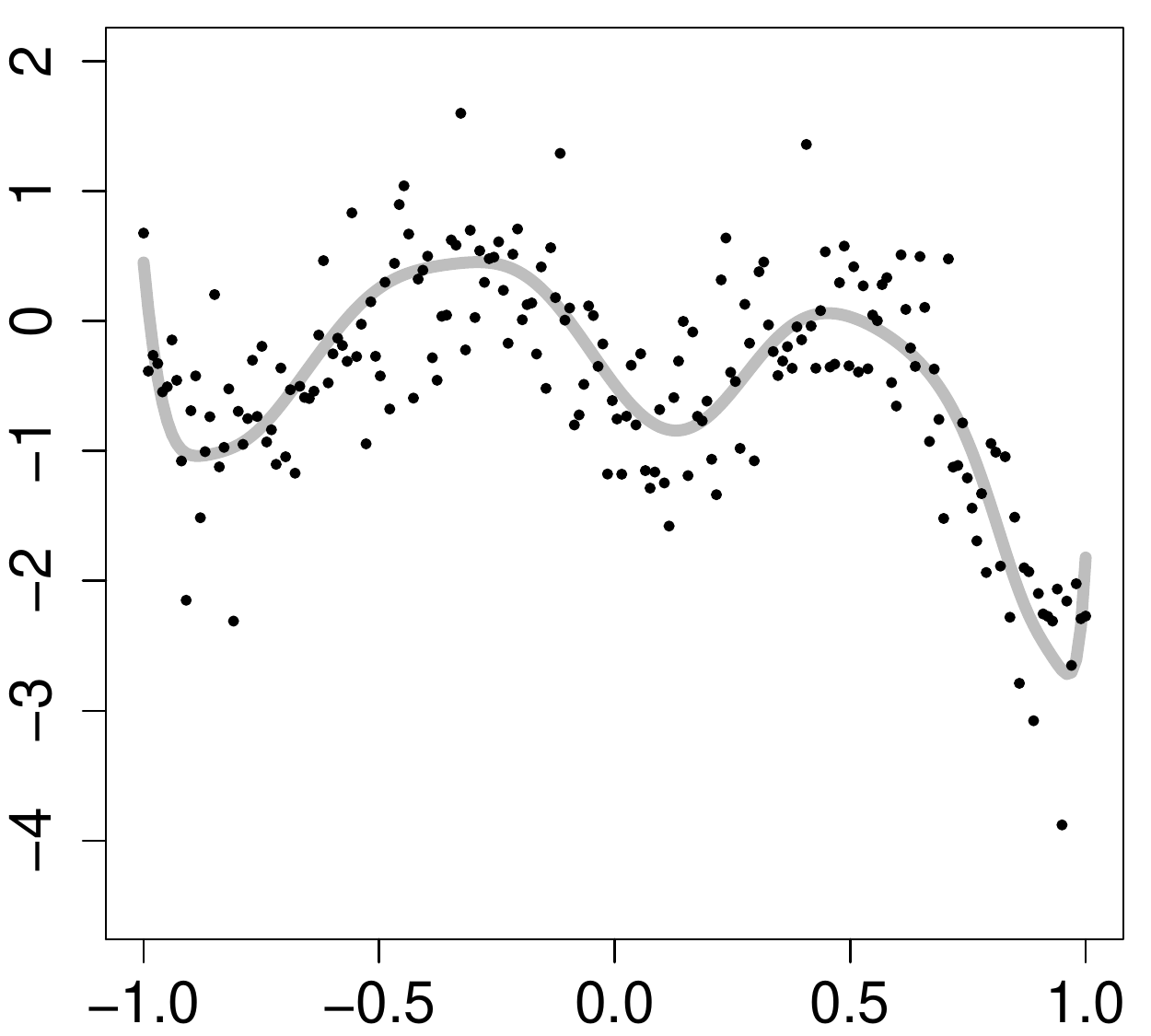}
\caption{Examples for simulated data in the third simulation setting (cf. Section~\ref{sec:simImage}) consisting of images ($x_i^{(1)}$, left) and functions ($x_i^{(2)}$, right) without (1st row) and with measurement error (2nd row).}
\label{fig:exImageData}
\end{figure}

\FloatBarrier

\newpage

\subsection*{Example Fits}

\begin{table}[!ht]
\centering
\caption{True and estimated eigenvalues for the first simulation setting (exponential eigenvalue decay,  eigenfunctions based on the first $M = 8$ Fourier basis functions) for one replication with $N =  250$ observations. 
The reconstruction errors are (in $\%$) $0.007 $ ($\text{MFPCA},~ \sigma^2 = 0$; simulation median: $0.008$), $0.734 $ ($\text{MFPCA},~ \sigma^2 = 0.25$; simulation median: $0.497$), $< 10^{-3} $ ($\text{MFPCA}_\text{RS},~ \sigma^2 = 0$; simulation median: $ < 10^{-3}$) and $0.710 $ ($\text{MFPCA}_\text{RS},~ \sigma^2 = 0.25$; simulation median: $0.480$). 
 The results for the corresponding eigenfunctions are given in Fig.~\ref{fig:fitEFun}.}
\begin{tabular}{lcccccccc}
\hline \hline
\multicolumn{1}{r}{$m = $}   & $1$ & $2$ & $3$ & $4$ & $5$ & $6$ & $7$ & $8$\\ 
\hline
 True Eigenvalues&1.000&0.607&0.368&0.223&0.135&0.082&0.050&0.030\\
$\text{MFPCA}~(\sigma^2 = 0)$&1.144&0.502&0.316&0.249&0.128&0.090&0.048&0.034\\
$\text{MFPCA}~(\sigma^2 = 0.25)$&1.140&0.501&0.316&0.249&0.128&0.087&0.046&0.031\\
$\text{MFPCA}_\text{RS}~(\sigma^2 = 0)$&1.140&0.500&0.315&0.248&0.127&0.090&0.048&0.034\\
$\text{MFPCA}_\text{RS}~(\sigma^2 = 0.25)$&1.139&0.504&0.317&0.252&0.130&0.091&0.048&0.035\\

\hline
\end{tabular}
\label{tab:fitEVal}
\end{table}

\begin{figure}[ht]
\centering
\includegraphics[height = 3.5cm]{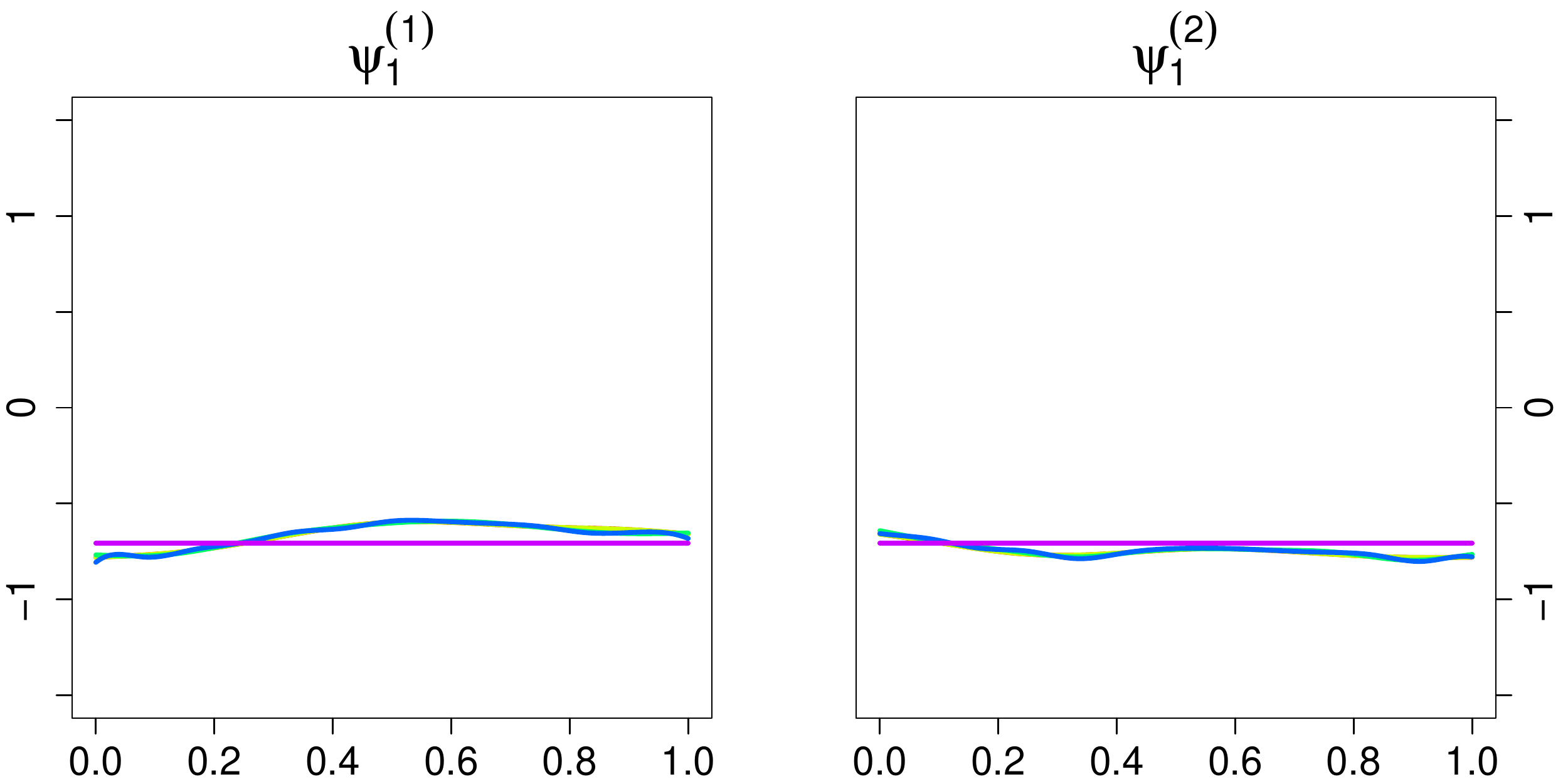}
\hfill
\includegraphics[height = 3.5cm]{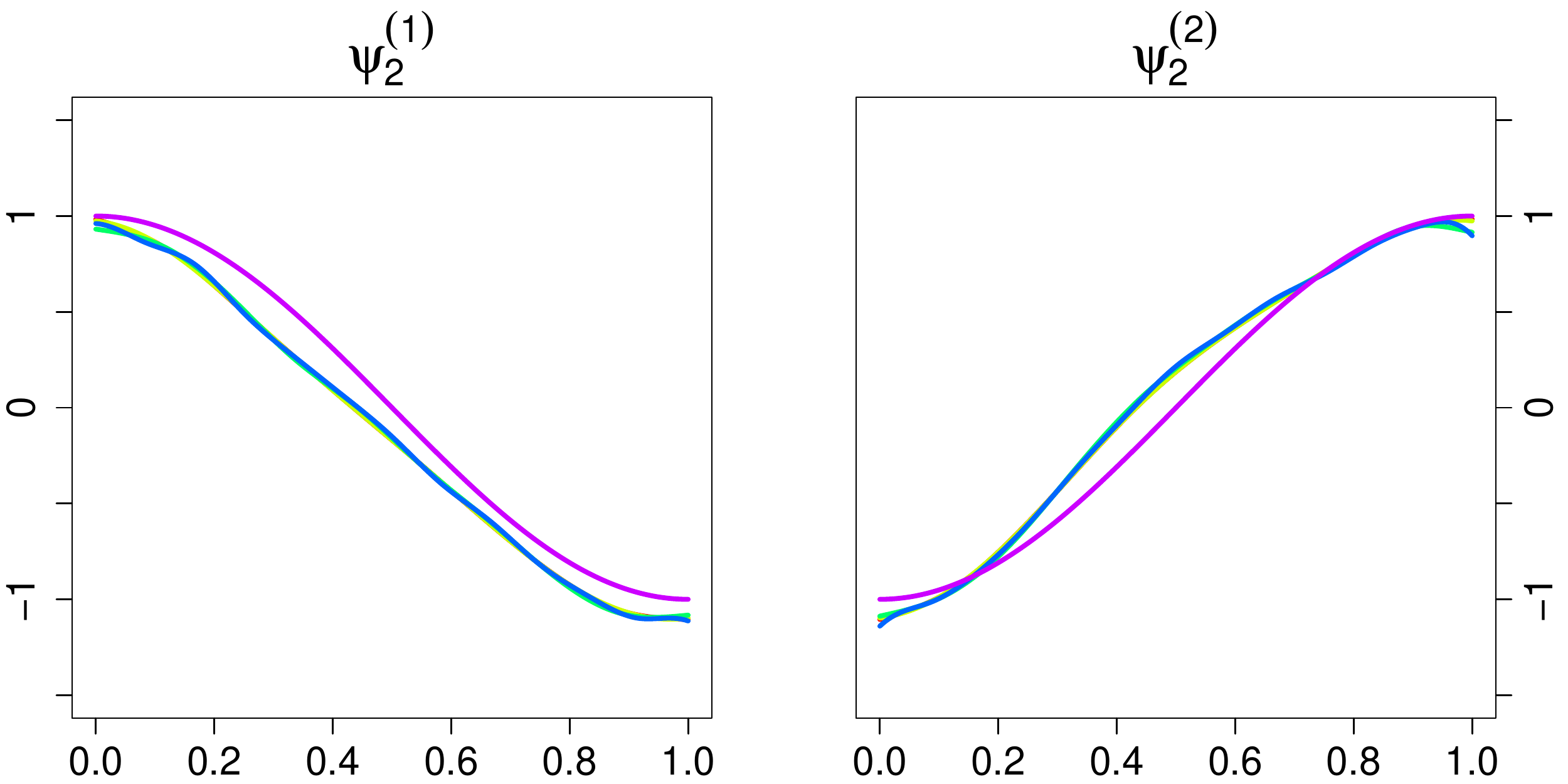}

\includegraphics[height = 3.5cm]{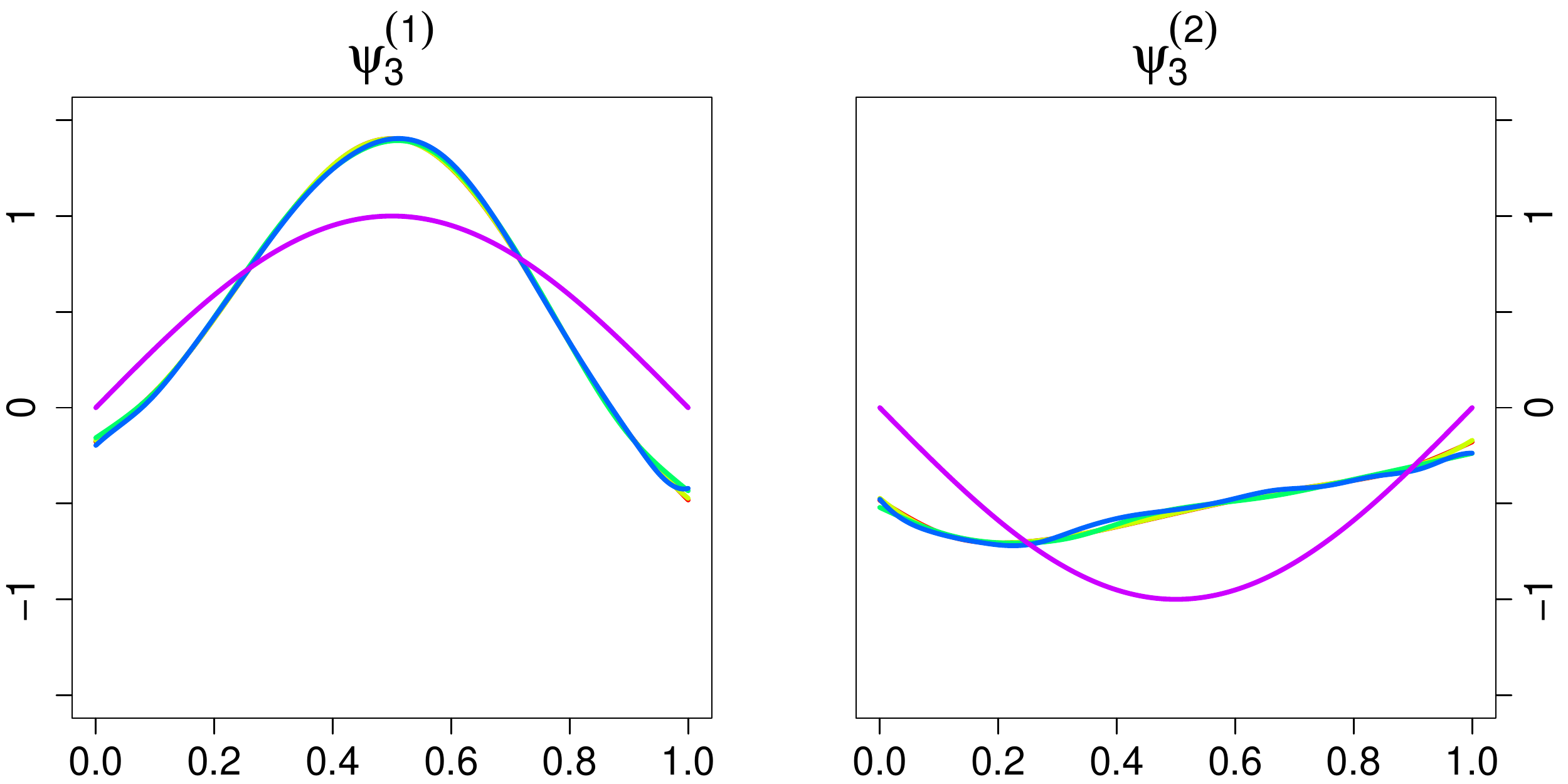}
\hfill
\includegraphics[height = 3.5cm]{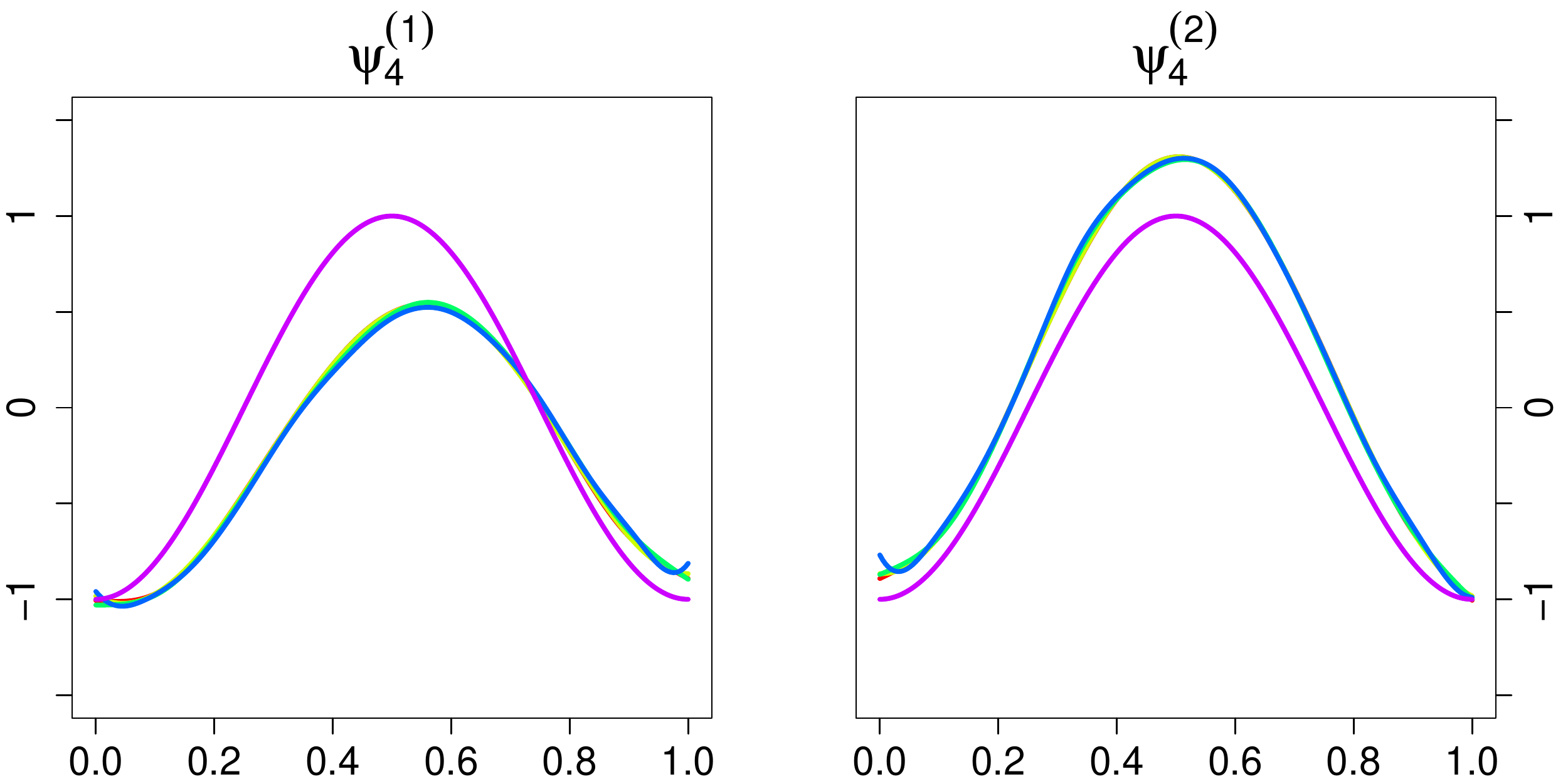}

\includegraphics[height = 3.5cm]{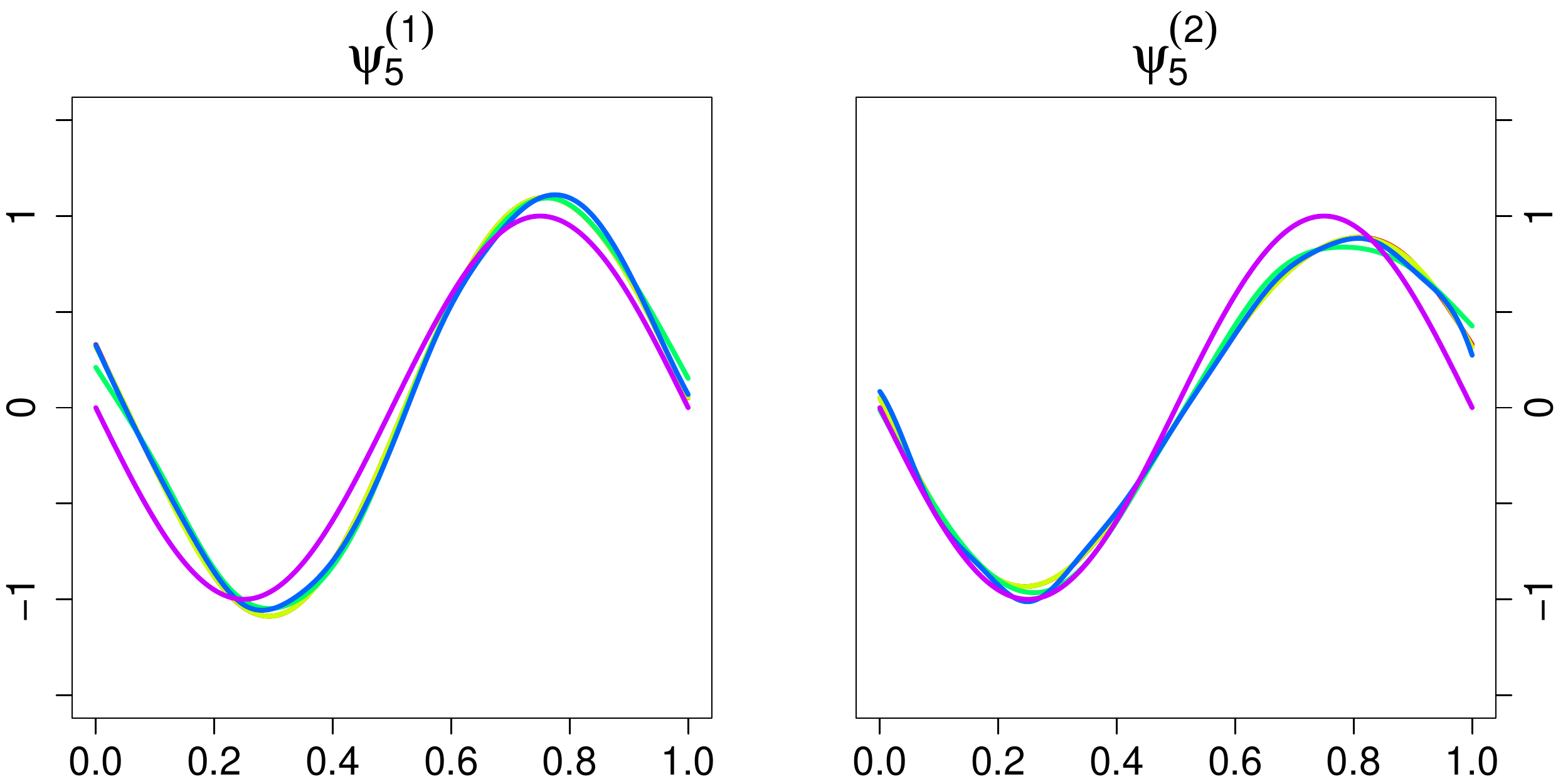}
\hfill
\includegraphics[height = 3.5cm]{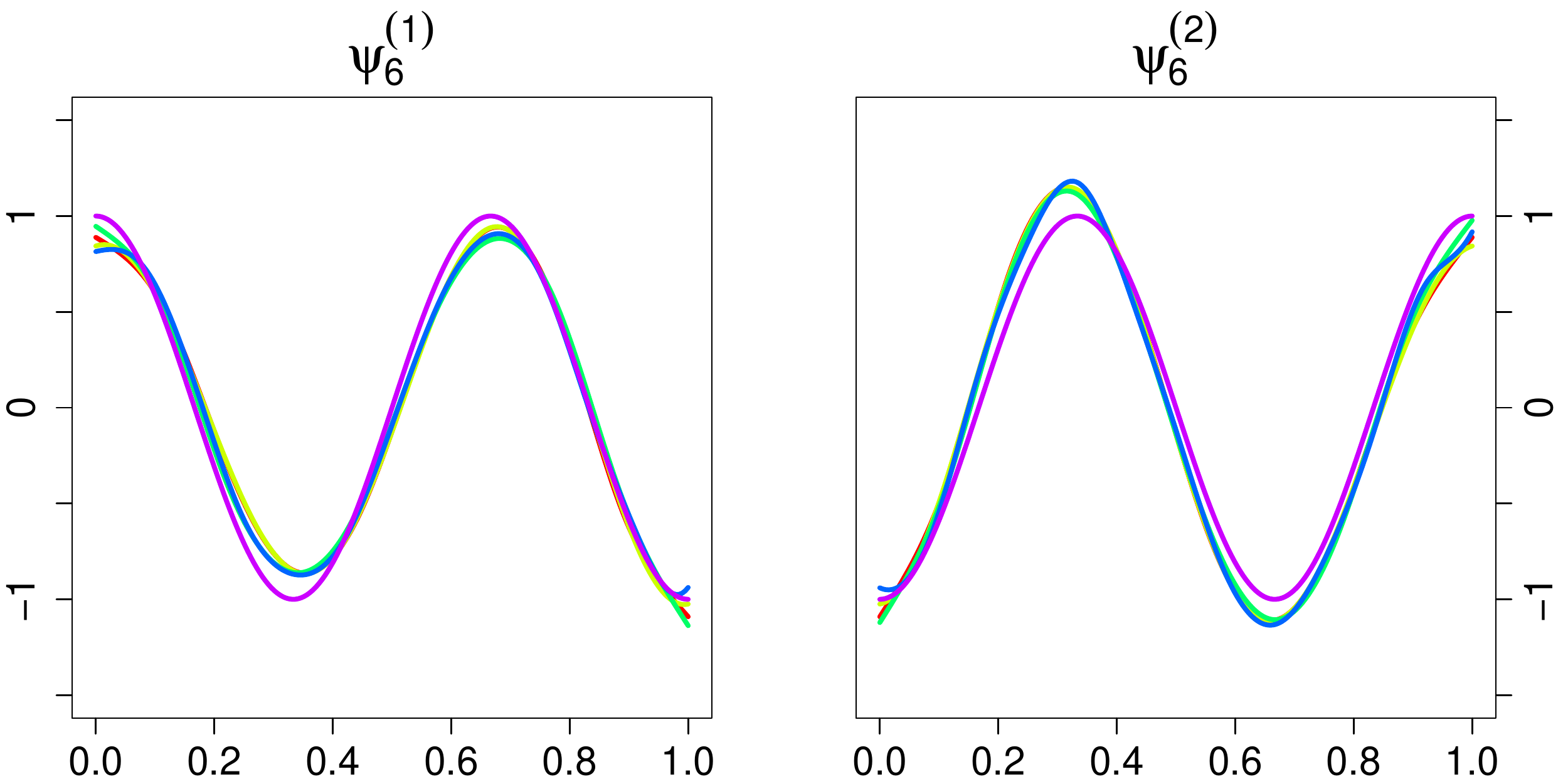}

\includegraphics[height = 3.5cm]{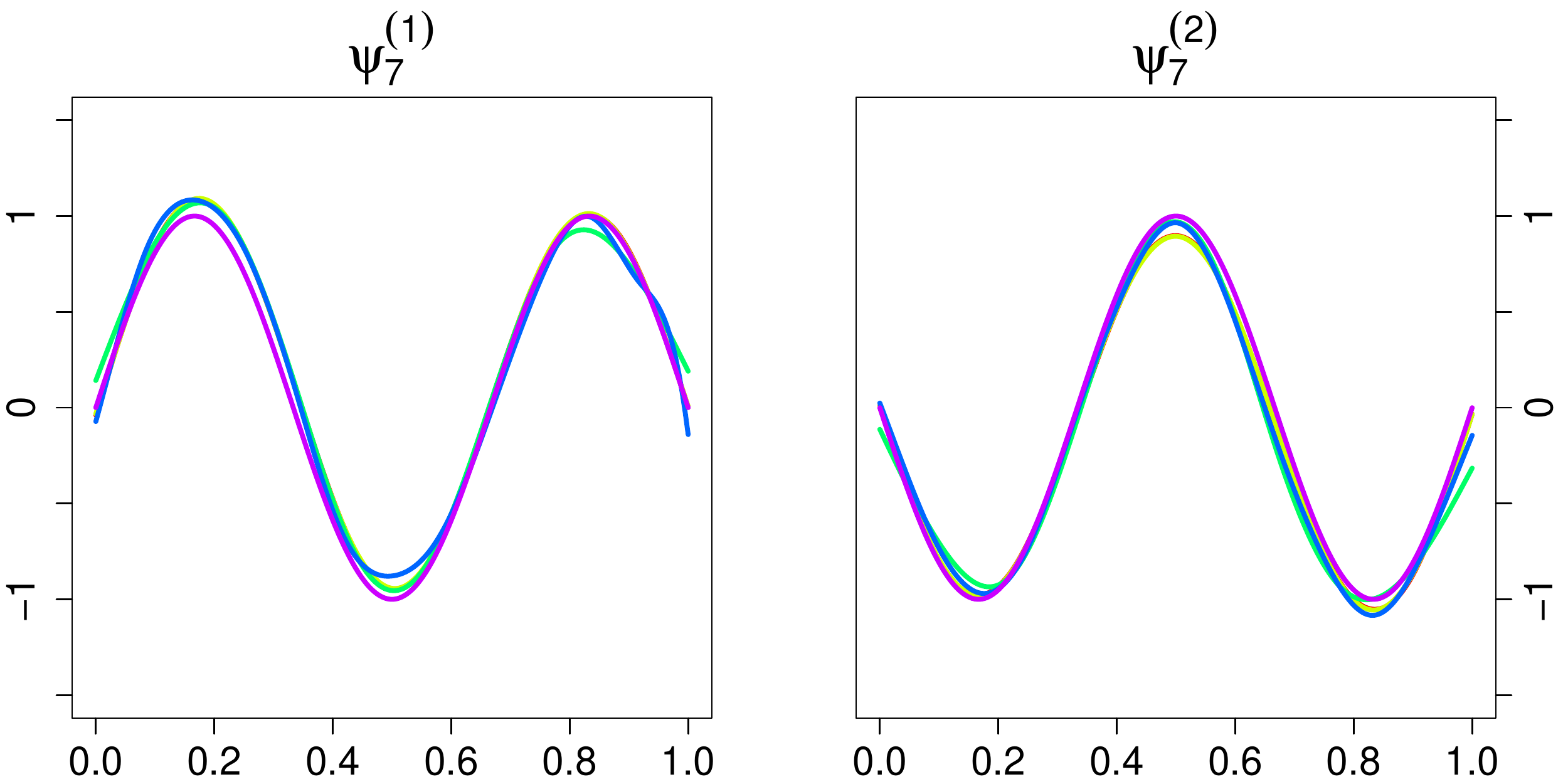}
\hfill
\includegraphics[height = 3.5cm]{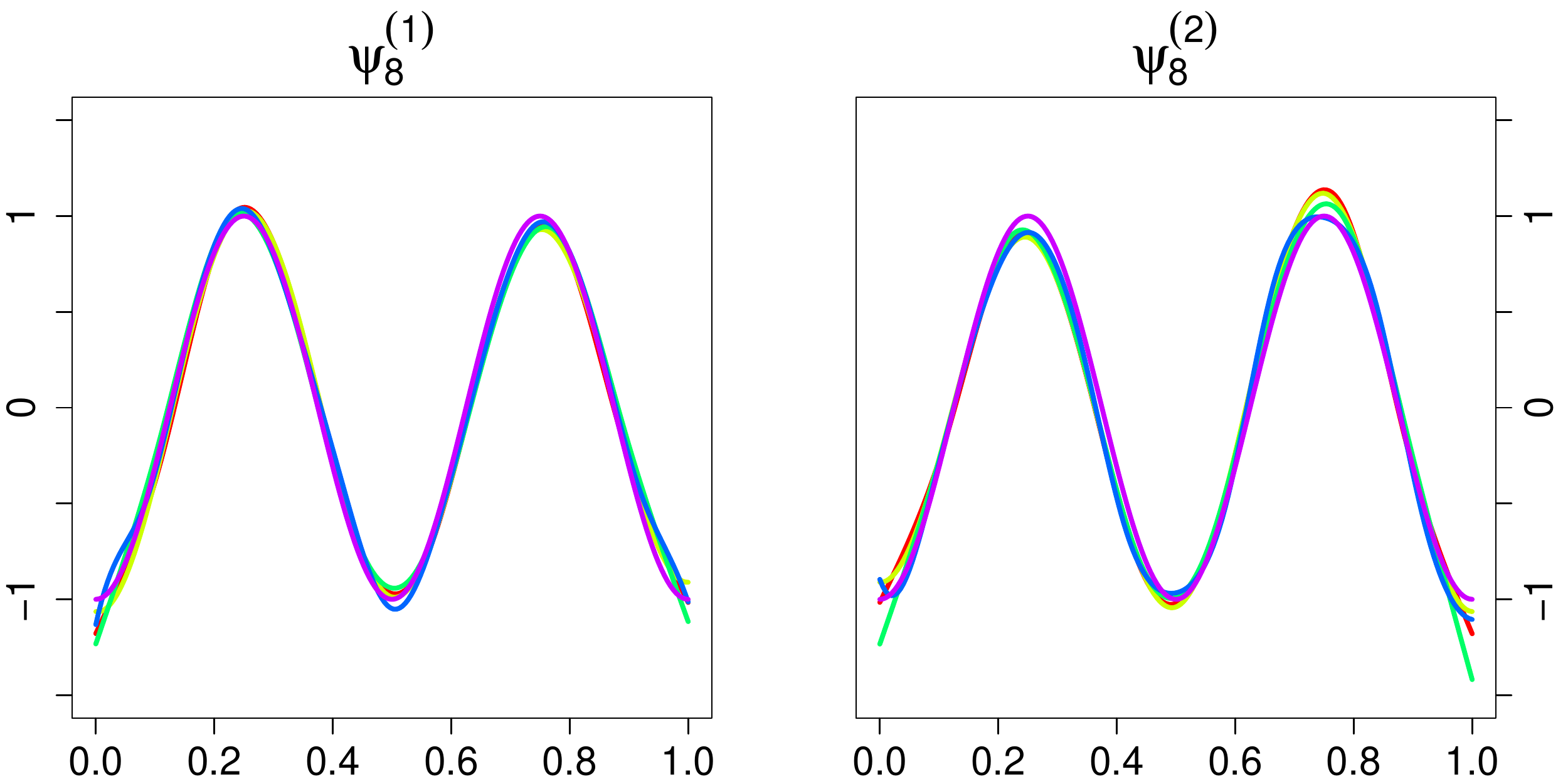}

\includegraphics[width = \textwidth]{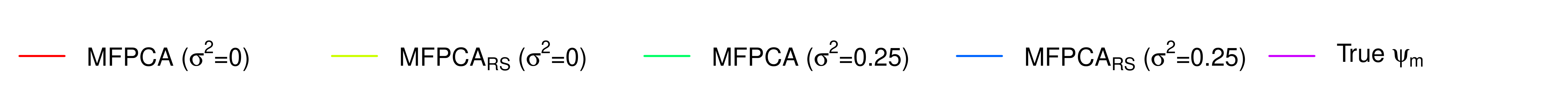}
\caption{True and estimated eigenfunctions for the first setting  based on one example replication with $N = 250$ observations. The results for the corresponding eigenfunctions are given in Table~\ref{tab:fitEVal}.}
\label{fig:fitEFun} 
\end{figure}

\newpage

\begin{sidewaysfigure}[ht]
\centering
\includegraphics[width = \textwidth]{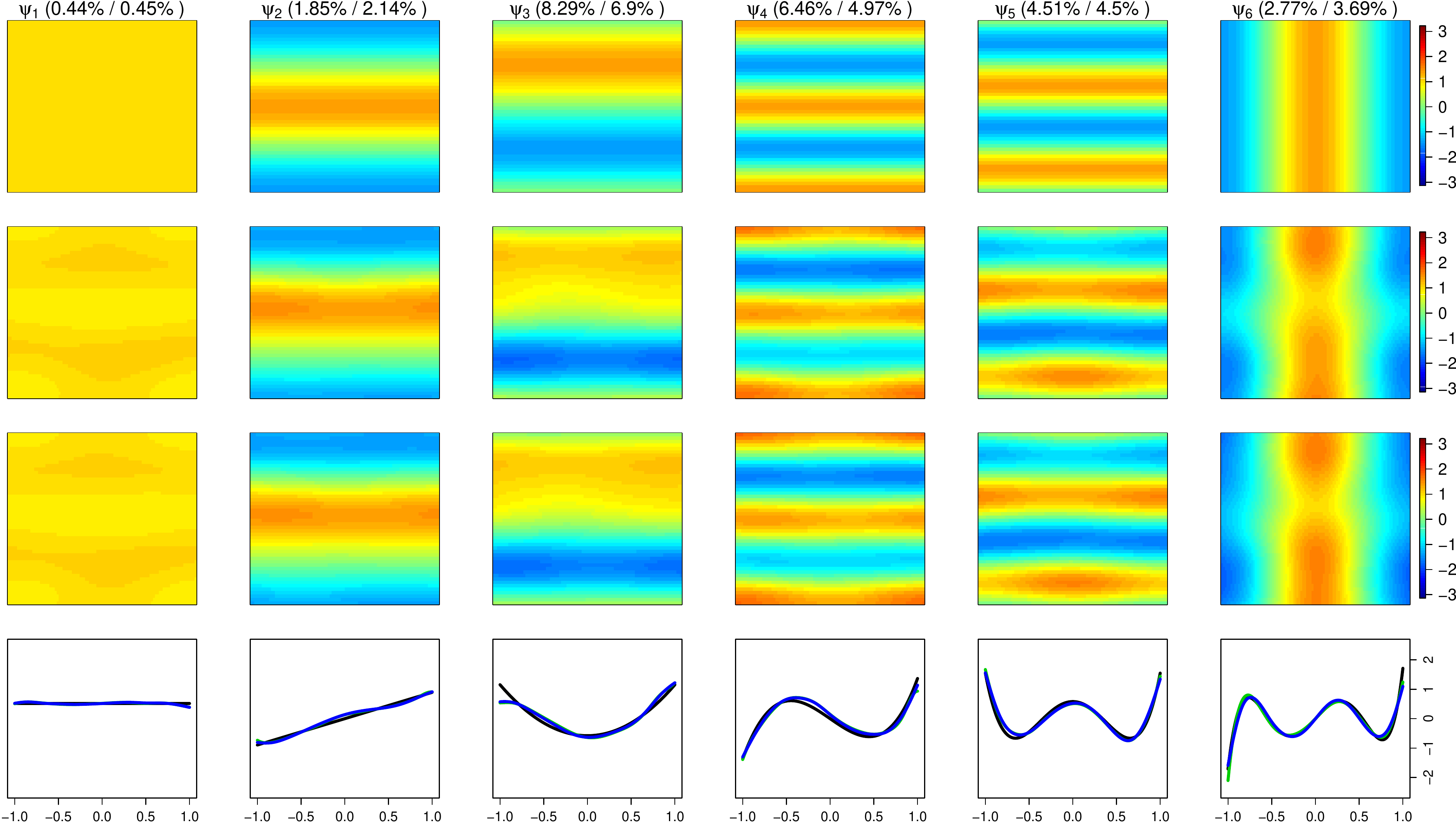}

\caption{Exemplary result for one replication in simulation setting 3 and MFPCA based on univariate spline expansions (results for the eigenfunctions $\psi_1, \ldots, \psi_6$). The first row shows the true first elements $\psi_m^{(1)}$,  the second/third row gives the results of $\hat \psi_m^{(1)}$ for data without/with measurement error. In the fourth row, the second elements $\psi_m^{(2)}$ of the true eigenfunctions are shown in black and the corresponding estimates for data without/with measurement error are shown in green/blue. Percentages in the titles give the relative errors $\operatorname{Err}(\hat \psi_m)$ for the estimates based on data without/with measurement error.  The reconstruction error $\operatorname{MRSE}$ is $0.40\% / 2.25\%$ for data without/with measurement error (simulation median: $0.40\% / 2.05\% $). Results for the eigenfunctions $\psi_7, \ldots, \psi_{12}$ are shown in Fig.~\ref{fig:multidimMFPCA_example2}.} 
\label{fig:multidimMFPCA_example1}
\end{sidewaysfigure}

\begin{sidewaysfigure}[ht]
\centering

\includegraphics[width = \textwidth]{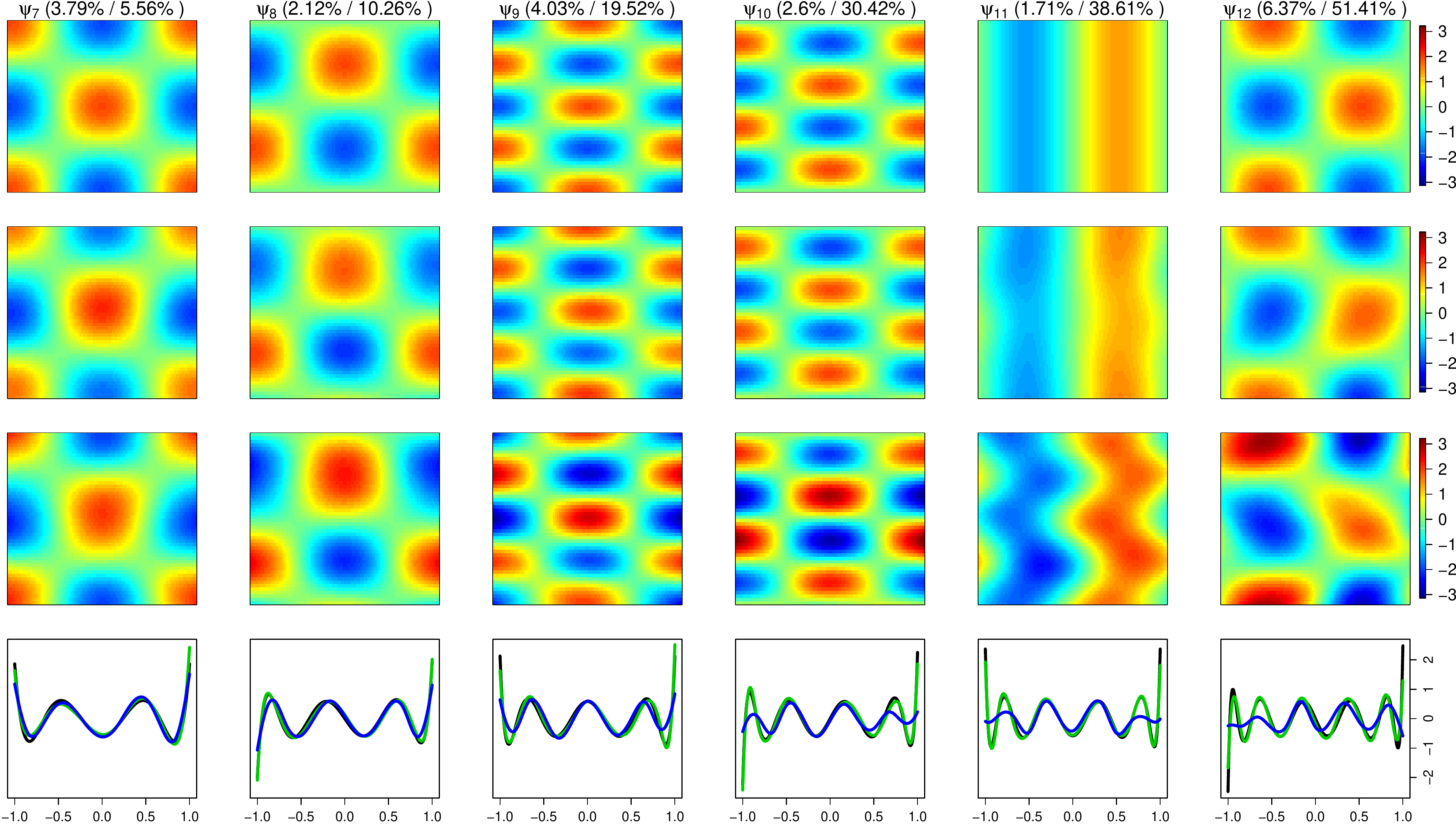}

\caption{Exemplary result for one replication in simulation setting 3 (results for the eigenfunctions $\psi_7, \ldots, \psi_{12}$). Please refer to Fig.~\ref{fig:multidimMFPCA_example1} for details.}
\label{fig:multidimMFPCA_example2}
\end{sidewaysfigure}

\FloatBarrier
\newpage

\subsection*{Sensitivity Analysis}
\label{sec:simSens}

As discussed in Section~\ref{sec:estMFPCA}, the number $M_j$ of univariate eigenfunctions used for MFPCA clearly has an impact on the results, as they control how much of the information in the univariate elements is used for calculating the multivariate FPCA. A standard approach in functional data analysis for quantifying the amount of information contributed by single eigenfunctions $\phi^{(j)}_m$ is the percentage of variance explained (pve), which is the ratio of the associated eigenvalue $\lambda^{(j)}_m$ and the sum of all eigenvalues. The following simulation systematically examines the sensitivity of the MFPCA result based on the pve of the univariate eigenfunctions.

\textbf{Simulation Setup: }
The simulation is based on $100$ replications with $N = 250$ observations of bivariate data on the unit interval (cf. setting 1 in Section~\ref{sec:simFun}), with $M = 8$ Fourier basis functions and exponentially decreasing eigenvalues for simulating the data. The number of univariate eigenfunctions $M_1, M_2$ for MFPCA is chosen based on $\text{pve} \in \{0.75, 0.90, 0.95, 0.99\}$ for both elements and $M_1 = M_2 = M = 8$ for comparison. The number of multivariate principal component functions is then set to $\min\{M_1 + M_2, M\}$.

\textbf{Results: }
The results of the sensitivity analysis are shown in Fig.~\ref{fig:simSenseValeFun} and  Table~\ref{tab:simSensrMSE}. The number of estimated multivariate eigenvalues/eigenfunctions is for all $100$ datasets  $\hat M = 4$ for $ \text{pve} = 0.75$, $\hat M = 6$ for $\text{pve} = 0.90$ and $\hat M = 8$ in all other cases. 
The results are as expected: Increasing the pve, and hence the information in the univariate FPCA, improves the estimation accuracy for both, multivariate eigenvalues and eigenfunctions. As a consequence, the reconstruction error reduces with increasing pve. Moreover, for a fixed $m$, the results show that there is a critical amount of information in univariate FPCA that is needed to describe the multivariate eigenvalues and eigenfunctions well. If this is reached (e.g. $\text{pve} = 0.95$ for $m = 5$, cf. Fig.~\ref{fig:simSenseValeFun}), the additional benefit of using more univariate eigenfunctions ($\text{pve} > 0.95$) becomes negligible. If, in contrast, the univariate FPCA does not contain enough information ($\text{pve} < 0.95$), the error rates for the MFPCA estimates are considerably increased.
For fixed pve, the error rates rise abruptly for the last pair of eigenfunctions ($m \in \{\hat M_+ -1, \hat M_+\}$). This is due to the fact that in this simulation, the multivariate functional principal components are derived from a Fourier basis. The last two eigenfunctions are hence sine and  cosine functions with highest frequency and cannot be represented well by the univariate functions used, as they contain only functions with lower frequency, in other words, they do not contain enough information.

\begin{figure}[hb!]
\centering
\includegraphics[width = \textwidth]{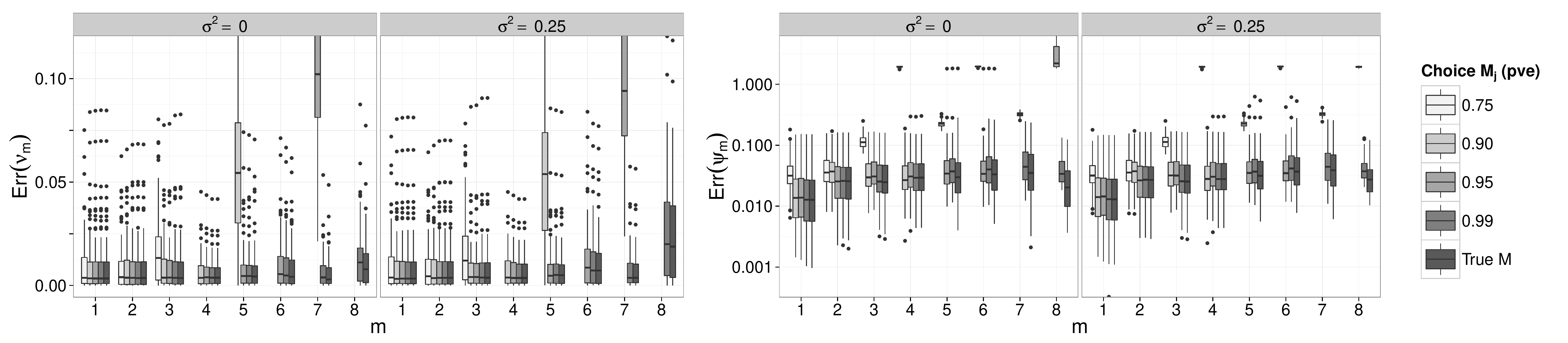}
\caption{Relative errors for estimated eigenvalues (left) and eigenfunctions (right, log-scale) for the sensitivity analysis. Extreme values cut off for better comparability.}
\label{fig:simSenseValeFun}
\end{figure}

\begin{table}[t]
\centering
\caption{
Average $\operatorname{MRSE}$ (in $\%$) in the sensitivity analysis.
}
\begin{tabular}{lccccc}
\hline \hline
& \multicolumn{4}{c}{Choice of $M_j$ (pve)} & True $M$ \\
& $ 0.75$ & $0.90$  & $0.95$ & $0.99$ &   \\ 
\hline 
$\sigma^2 = 0$  & 23.756 &   9.075 &   2.924 &   0.165 &  0.006 \\ 
$\sigma^2 = 0.25$ & 24.099 &   9.583 &   3.593 &   0.842 &  0.740 \\  
 \hline
\end{tabular}
\label{tab:simSensrMSE}
\end{table}

\FloatBarrier
\newpage

\subsection*{Coverage Analysis of Pointwise Bootstrap Confidence Bands}

In Section~\ref{sec:applicateADNI}, pointwise bootstrap confidence bands were calculated for the multivariate functional principal components estimated from the ADNI data to quantify the variability in the estimates. The following simulation study examines the coverage properties of such confidence bands.

\textbf{Simulation Setup:} The data generating process is the same as in the simulation in Section~\ref{sec:simImage}, mimicking the ADNI data that consists of functions on a one-dimensional domain and images.
In total, the simulation is based on 100 datasets, all having $N = 250$ observations. Each dataset is considered with and without measurement error. Both elements are represented in terms of B-spline basis functions with appropriate smoothness penalties in the presence of measurement error (cf. Section~\ref{sec:simImage}).
For each dataset and each estimated eigenfunction, a pointwise $95\%$ bootstrap confidence band is calculated based on $100$ bootstrap samples on the level of subjects (cf. Section~\ref{sec:applicateADNI}). The coefficients of the spline basis decompositions can efficiently be reused when bootstrapping, as the basis is fixed and does not depend on the bootstrap sample. In contrast, the univariate functional principal components for the ADAS-Cog trajectories in the ADNI application have to be re-estimated for each bootstrap sample. This computational aspect is taken into account in the bootstrap implementation in the \texttt{MFPCA} package{ \if1\blind \citep{MFPCA}\fi}. 
Finally, the confidence bands are calculated separately for each element as pointwise percentile bootstrap confidence intervals.
\\
For each eigenfunction and each observation point, the estimated coverage at one point $t_j \in \mathcal{T}_j$ is the percentage of datasets for which the true eigenfunction $\psi_m^{(j)}$  evaluated at $t_j$ is enclosed in the bootstrap confidence band (up to a sign change of the whole function). Fig.~\ref{fig:simBootstrap} shows the estimated coverages of the elements of the eigenfunctions for data with and without measurement error aggregated over the observation points. 

\textbf{Results:}
If the data is observed without measurement error, the pointwise confidence bands enclose the true functions fairly precisely in $95\%$ of all cases with very little variation between the observation points. 
For the leading eigenfunctions, the same holds true if the data is observed with measurement error. For higher order eigenfunctions, that explain hardly any variation in the data, the estimated coverage decreases, especially for the second element ($\psi_m^{(2)}$, one-dimensional domain) and shows a much higher variation between the observation points. On the one hand this may be caused by the fact that the true eigenfunctions $\psi_m^{(2)}$ have a stronger curvature for growing $m$ (cf. Fig.~\ref{fig:multidimMFPCA_example2}). Severe undercoverage for higher-order eigenfunctions occurs mainly in regions of high curvature and slope of the eigenfunctions, where the low signal-to-noise level leads to oversmoothing (cf. Fig.~\ref{fig:simBootstrapEx}).
 On the other hand, the results of Section~\ref{sec:simImage} show that the estimates for higher order eigenfunction elements become more inaccurate due to interchanging of eigenfunctions, hence the bootstrap confidence bands can be centered incorrectly.
 For the image elements $\psi_m^{(1)}$, the bootstrapped confidence bands give much better results, except for some outliers that form spatially smooth outlying regions (see e.g. Fig.~\ref{fig:simBootstrapEx}). This reflects that the pointwise coverages are not independent, as the true eigenfunctions as well as the confidence bands are smooth: If the function $\psi_m^{(j)}$ lies within the bootstrap confidence band at a point $t_j$, it is very likely that it will also be inside the confidence band at the neighbouring observation points (analogously for points outside the CI). 
This relation is highlighted in Fig.~\ref{fig:simBootstrapEx}, which  illustrates the coverage rates for $\psi_3$ (having a good coverage) and $\psi_9$ (having a rather poor coverage) in the case of measurement error. 
In summary, the results of the simulation show that the bootstrapped confidence bands give reliable results, in particular  for the leading eigenfunctions that explain most of the variation in the data. Moreover, smooth eigenfunctions will have a stabilizing effect for the coverage. However, when interpreting such pointwise confidence bands, one should keep in mind the dependence across neighbouring observation points due to the smoothness of the eigenfunctions.

\begin{figure}[ht]
\centering
\includegraphics[width = 0.85 \textwidth]{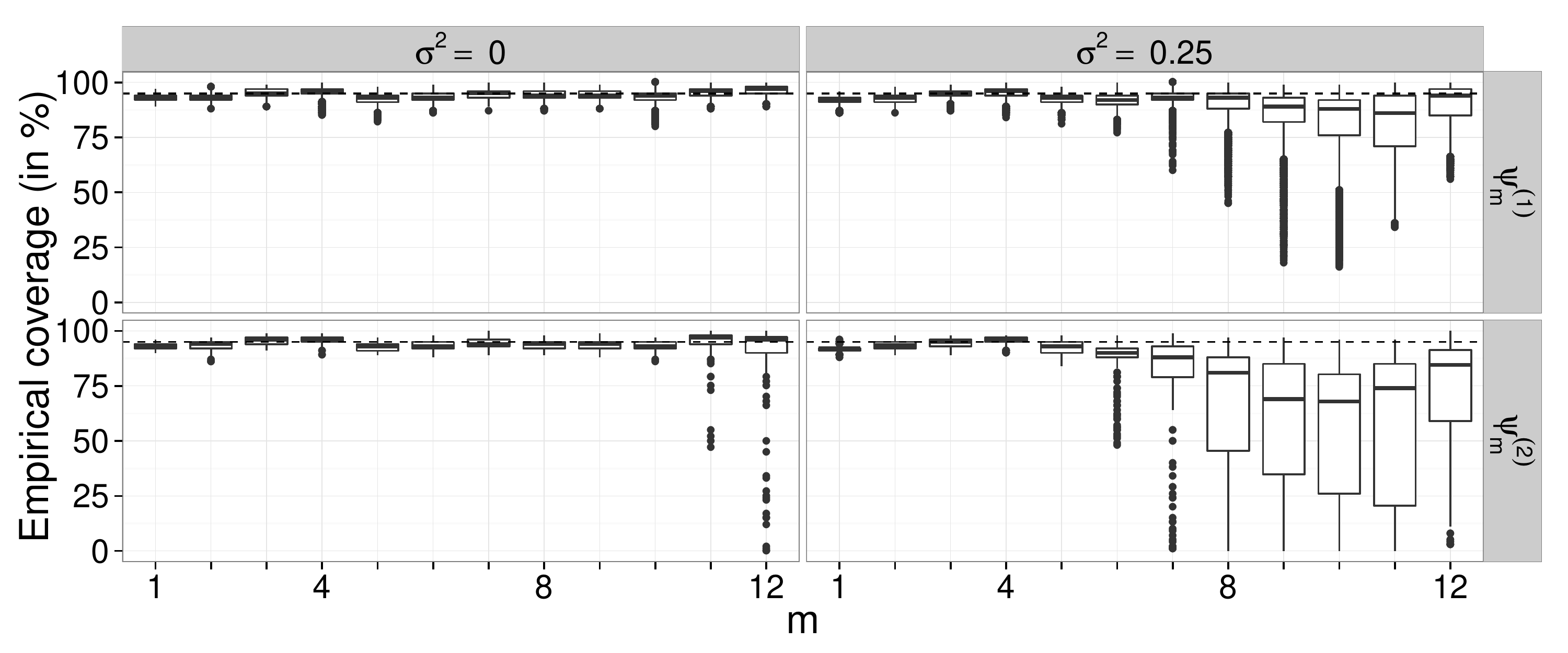}

\caption{Empirical coverages from the bootstrap simulation study for data without ($\sigma^2 = 0$) and with ($\sigma^2 = 0.25$) measurement error. The boxplots show the pointwise coverage of the bootstrap confidence bands aggregated over the corresponding domains for both elements of the true eigenfunctions $\psi_m,~ m = 1, \ldots, 12$ (1st row: Image element $\psi_m^{(1)}$, 2nd row: Element $\psi_m^{(2)}$ with one-dimensional domain). The dashed line marks a coverage of $95 \%$.}
\label{fig:simBootstrap} 
\end{figure}

\begin{figure}[ht]
\centering
\includegraphics[height = 3.5cm]{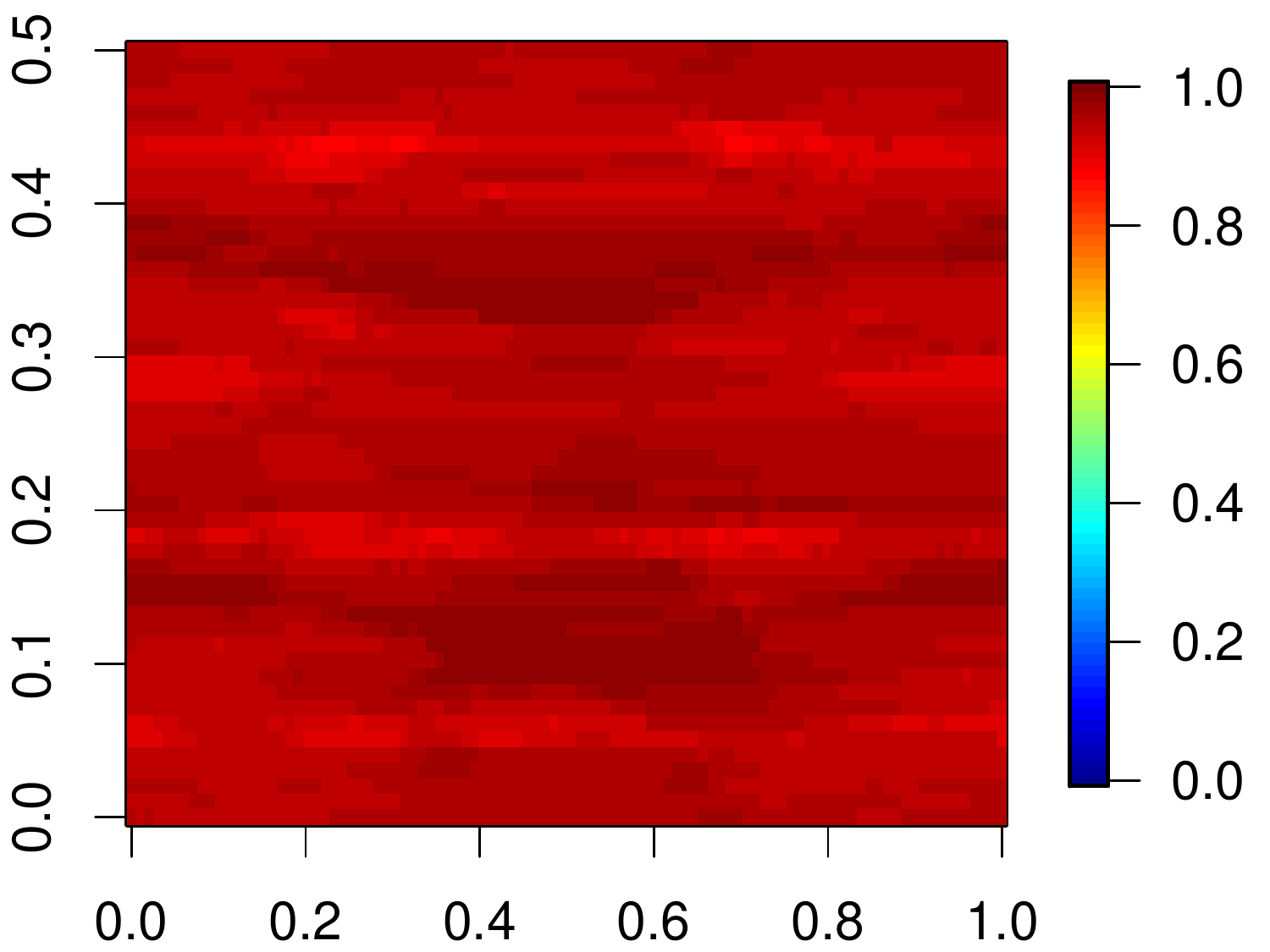}
%\hfill
\includegraphics[height = 3.5cm]{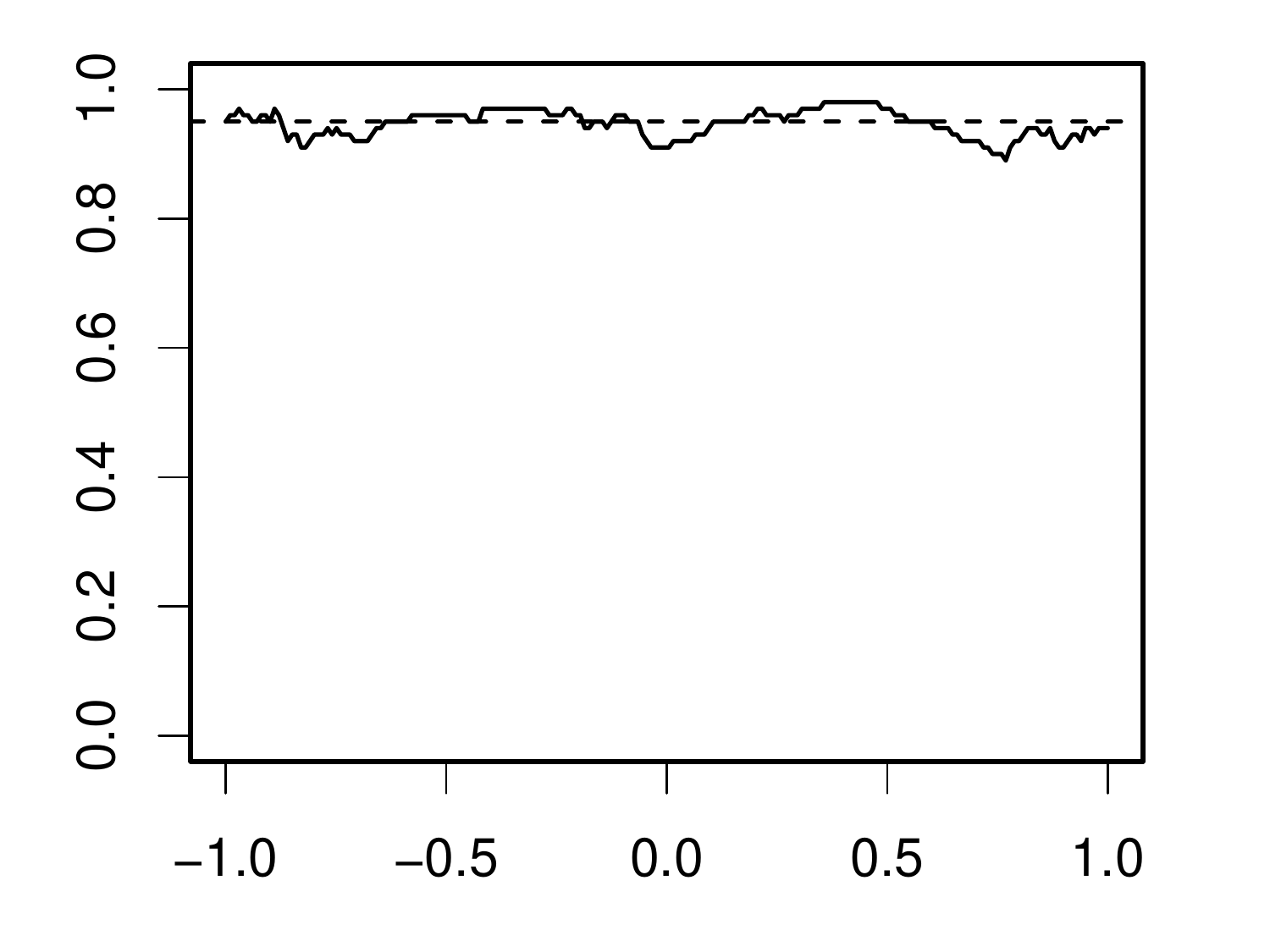}
%\hfill

\includegraphics[height = 3.5cm]{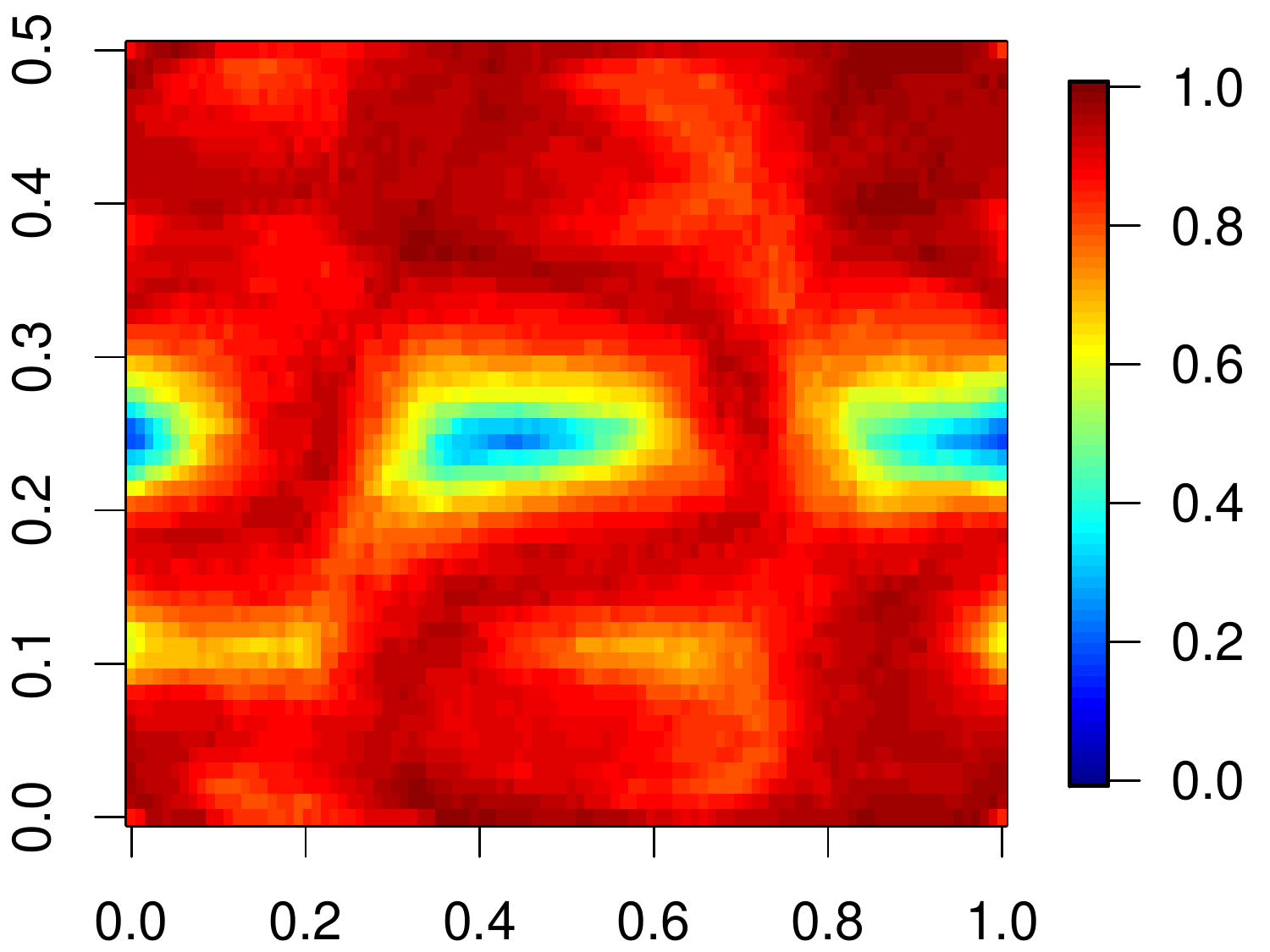}
% \hfill
\includegraphics[height = 3.5cm]{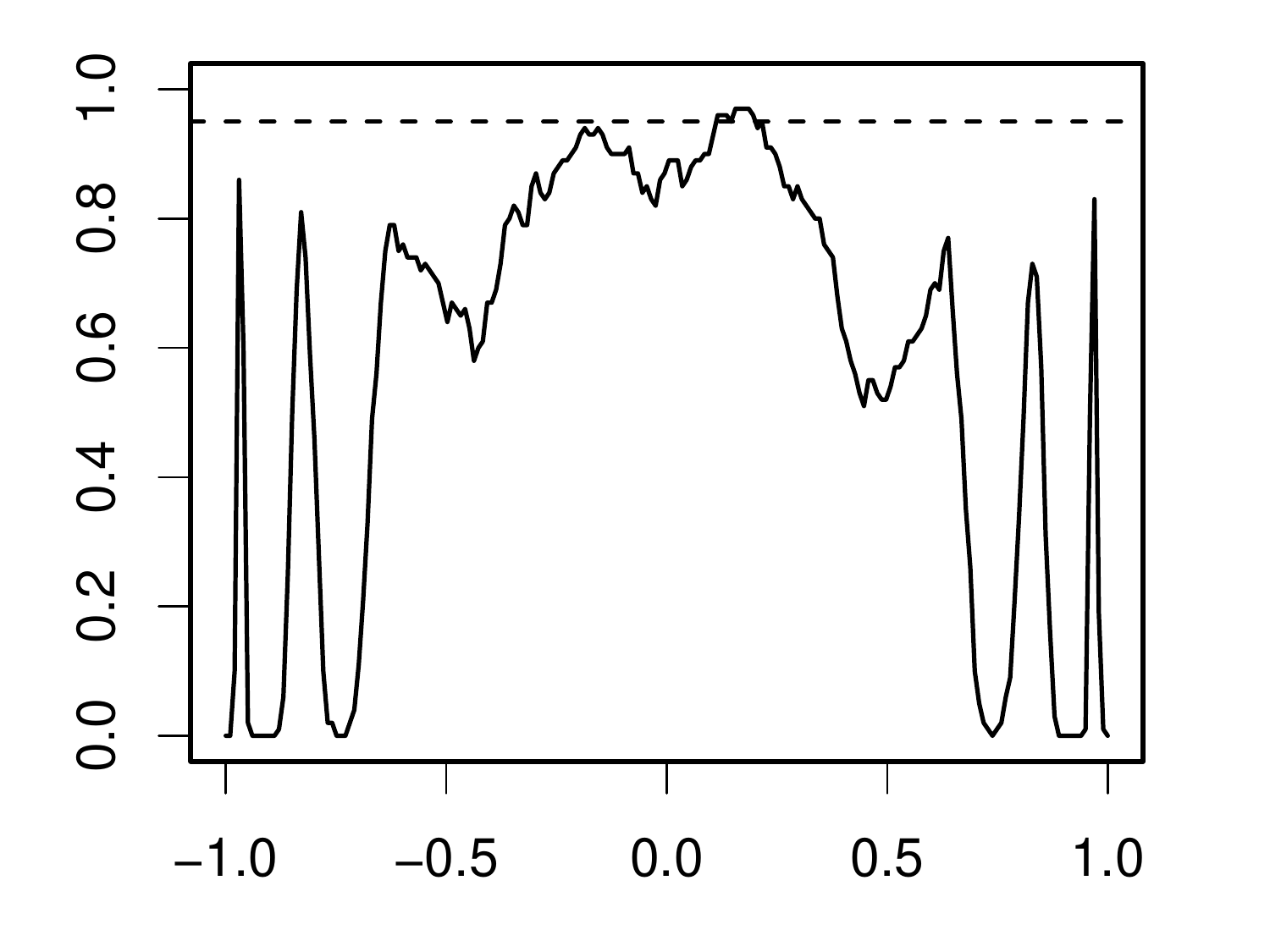}

\caption{Exemplary results from the bootstrap simulation study for data observed with measurement error. The first row shows the estimated coverages for the third eigenfunction,  the second row shows the estimated coverages for the eigenfunction of order $9$ (see also Fig.~\ref{fig:simBootstrap}). The first column corresponds to the estimated elements $\hat \psi_m^{(1)}$ and the second column corresponds to the estimated elements $\hat \psi_m^{(2)}$. For the latter, the dashed lines correspond to the nominal level of $95\%$.}
\label{fig:simBootstrapEx} 
\end{figure}

\FloatBarrier
\newpage

\section*{Applications -- Gait Cycle Data}
\label{sec:applicateGait}

For comparison to an existing method in the special case of densely sampled bivariate data on the same one-dimensional interval, the new MFPCA approach is applied to the gait cycle data (cf. Fig.~\ref{fig:motivationMFPCA} in the main document) and compared to the method of \cite{RamsaySilverman:2005} as implemented in the \texttt{R}-package \texttt{fda} \citep{fda}. The results are shown in Fig.~\ref{fig:gait}. For the new approach, the multivariate principal components are calculated based on univariate FPCA with $M_1 = M_2 = 5$ principal components. For $\text{MFPCA}_\text{RS}$, the observed functions are pre-smoothed using $K = 15$ cubic spline basis functions as in the simulation study (cf. Section~\ref{sec:simFun}). As for synthetic data, the two methods give nearly identical results.

\begin{figure}[ht]
\centering
\includegraphics[height = 3.5cm]{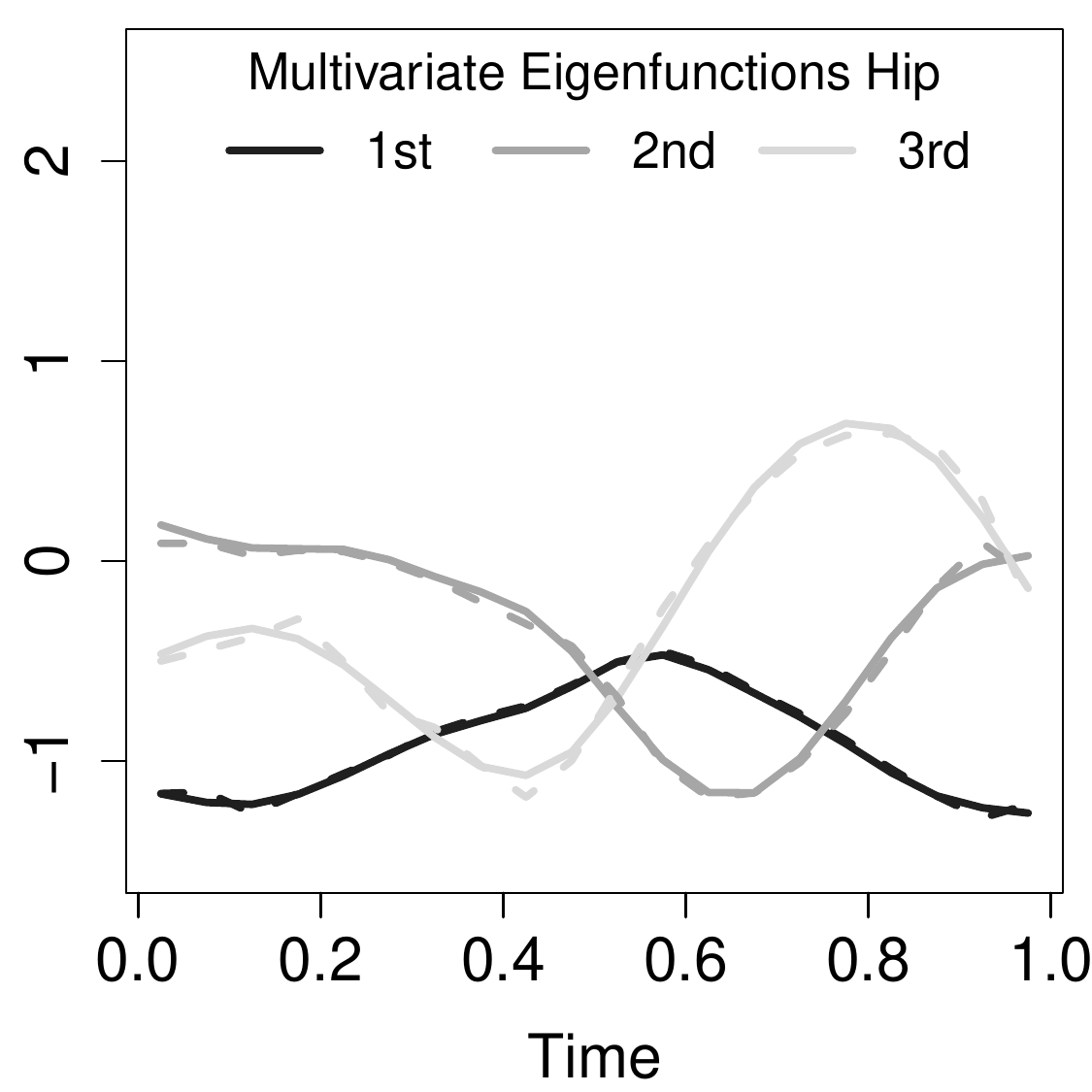}
\includegraphics[height = 3.5cm]{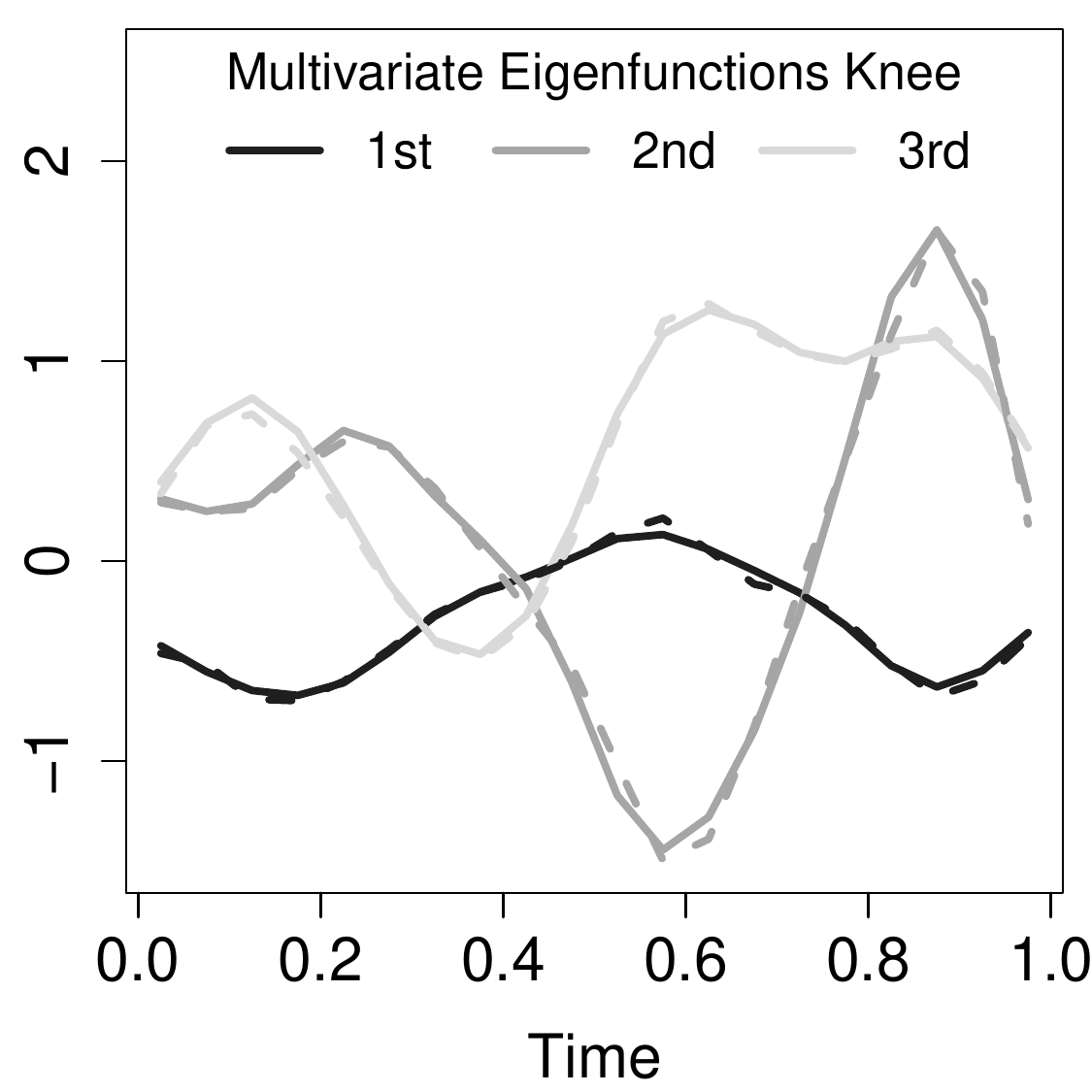}
\caption{The first three estimated bivariate eigenfunctions for the gait data set. Solid lines show the results of the new MFPCA approach, dashed lines correspond to the approach of \cite{RamsaySilverman:2005}. The functions have been reflected, if necessary, for comparison purposes.}
\label{fig:gait}
\end{figure}

\end{document}